\documentclass[
    aps,
    superscriptaddress,
    onecolumn,
    10pt,
    prx,
    nofootinbib
]{revtex4-2}

\usepackage{amsmath,amssymb,amsthm,bm,mathrsfs,physics}

\usepackage{graphicx}
\usepackage{subcaption}

\usepackage{tikz}
\usetikzlibrary{arrows.meta,positioning}

\usepackage{enumitem}
\usepackage[normalem]{ulem}

\usepackage[
    colorlinks=true,
    linkcolor=blue,
    citecolor=blue,
    urlcolor=blue
]{hyperref}
\usepackage[capitalise]{cleveref}
\usepackage{soul,xcolor}
\usepackage{thmtools}
\usepackage{thm-restate}

\newtheorem{theorem}{Theorem}
\newtheorem{lemma}{Lemma}
\newtheorem{proposition}{Proposition}
\newtheorem{definition}{Definition}

\newtheorem{corollary}{Corollary}

\newcommand{\Haar}{\mathrm H}
\newcommand{\PM}{\mathrm{PM}}
\newcommand{\TVD}{\operatorname{TVD}}

\newcommand{\E}{\mathbb E}

\newcommand{\cP}{\mathcal P}
\newcommand{\cE}{\mathcal E}

\newcommand{\cW}{\mathcal W}

\newcommand{\nocontentsline}[3]{}
\let\origcontentsline\addcontentsline
\newcommand\stoptoc{\let\addcontentsline\nocontentsline}
\newcommand\resumetoc{\let\addcontentsline\origcontentsline}

\newtheoremstyle{namedstatement}
    {\topsep}       
    {\topsep}       
    {\itshape}      
    {}              
    {\bfseries}     
    {.}             
    {.5em}          
    {\thmnote{#3}}  

\theoremstyle{namedstatement}
\newtheorem*{namedstatementInner}{}

\theoremstyle{plain}

\newenvironment{restatement}[1]{%
    \begin{namedstatementInner}[#1]%
}{%
    \end{namedstatementInner}%
}

\begin{document}

\title{Strong unitary designs in optimal depth and space}

\author{Teodor Parella-Dilmé}
\affiliation{
ICFO -- Institut de Ciències Fotòniques,
The Barcelona Institute of Science and Technology,
Avinguda Carl Friedrich Gauss 3,
08860 Castelldefels (Barcelona), Spain
}

\author{Júlia Barberà-Rodríguez}
\affiliation{
ICFO -- Institut de Ciències Fotòniques,
The Barcelona Institute of Science and Technology,
Avinguda Carl Friedrich Gauss 3,
08860 Castelldefels (Barcelona), Spain
}

\author{Salvatore F.~E.~Oliviero}
\affiliation{
Dahlem Center for Complex Quantum Systems,
Department of Physics,
Freie Universität Berlin,
Arnimallee 14,
14195 Berlin, Germany
}

\author{Antonio A.~Mele}
\affiliation{
Dahlem Center for Complex Quantum Systems,
Department of Physics,
Freie Universität Berlin,
Arnimallee 14,
14195 Berlin, Germany
}

\begin{abstract}
Unitary designs provide finite-moment approximations to Haar-random unitaries, with wide-ranging applications across physics and quantum information, from scrambling and black-hole dynamics to foundational primitives in quantum algorithms. Strong unitary designs capture a more demanding operational notion of approximation, requiring indistinguishability from Haar randomness even for quantum algorithms that may access a unitary not only in the forward direction, but also through its inverse, transpose, and complex conjugate. Motivated by the physical requirement that scrambling arise within the system itself, Schuster, Ma, Lombardi, Brandão, and Huang~\cite{SchusterEtAlStrong2025} left open whether strong unitary designs can be generated in logarithmic depth using only the system qubits.
For every fixed design order $k$ and measurable-error tolerance, we construct strong approximate unitary $k$-designs in optimal $\Theta(\log n)$ all-to-all circuit depth using only the $n$ original system qubits.
Our new ingredient is a logarithmic-depth Pauli-mixing bound for the perfect-matching ensemble, whose layers pair the qubits uniformly at random and apply independent random two-qubit gates. This bound controls the mixed forward-reverse two-query case, which we combine with existing design and gluing results to obtain strong unitary designs of arbitrary fixed order.
\end{abstract}

\maketitle

\stoptoc

\section{Introduction}
Quantum information scrambling describes how information initially stored in a
small part of a system spreads across many degrees of freedom and becomes
encoded in nonlocal correlations~\cite{SekinoSusskind2008,LashkariEtAl2013}. The
information remains present globally but becomes difficult to recover through
local measurements or other restricted observations. Scrambling is expected in
complex many-body systems and plays an important role in the approach to
thermalization in isolated quantum systems~\cite{Deutsch1991,Srednicki_1994,Rigol_2008}.
It is also central to black-hole dynamics, where it governs how rapidly
information falling into a black hole is dispersed and how it may later be
recovered~\cite{HaydenPreskill2007}. Related mechanisms underlie decoupling and
recovery protocols in quantum information~\cite{BrownFawzi2012,BrownFawzi2015}.
The central question is how quickly physical dynamics can scramble information.
Answering it requires a way of detecting scrambling in the first place, and of
saying when it is complete.

In many-body systems, scrambling is closely connected to the growth of initially
simple operators into increasingly nonlocal ones~\cite{ShenkerStanford2014,RobertsStanford2015,NahumVijayHaah2018}.
Out-of-time-order correlators, or OTOCs, probe this growth through the
increasing failure of an initially local operator to commute with distant
observables~\cite{LarkinOvchinnikov1969,Kitaev2015,RobertsStanford2015}. Their
measurement often relies on protocols combining forward and reverse
evolution~\cite{SwingleEtAl2016,GarttnerEtAl2017}. The same behaviour is
reproduced by Haar-random unitaries, and higher-order OTOCs make the comparison
quantitative: reproducing the first $k$ moments of the Haar measure already
brings the $2k$-point functions close to their Haar
values~\cite{HosurEtAl2016,RobertsYoshida2017}. These quantities have also been subject of study in recent near-term quantum advantage proposals~\cite{king2025simplifiedversionquantumotoc2,abanin2025constructiveinterferenceedgequantum}. 

A generic Haar unitary, however, is exponentially costly to describe or
implement. Approximate unitary designs avoid this difficulty by reproducing a
finite collection of Haar moments up to a controlled error, rather than the full
distribution~\cite{Emerson_2003,Gross_2007,dankert2005efficientsimulationrandomquantum}.
This makes it possible to compare realistic circuit ensembles with Haar
evolution at the level accessible to finite-moment experiments, and efficient
random-circuit constructions make the corresponding Haar averages accessible
without requiring the implementation of a generic Haar-random
unitary~\cite{HarrowLow2009,BrandaoHarrowHorodecki2016}. Finite-moment Haar
replacement also appears in fidelity-estimation protocols~\cite{Dankert_2009},
randomized benchmarking~\cite{MagesanGambettaEmerson2011}, and decoupling
arguments~\cite{HaydenPreskill2007,BrownFawzi2015}. These guarantees, however,
are stated for experiments that query the evolution in the forward direction
only.

In the standard query model, the experimenter accesses $U$, while $U^\dagger$
enters only through the adjoint structure of probabilities and expectation
values. By contrast, the measurement of an OTOC, for example, may apply the reverse evolution
as a separate physical step~\cite{SwingleEtAl2016}, and the efficient
Hayden--Preskill decoder uses a conjugated evolution~\cite{YoshidaKitaev2017}.
Such access is genuinely different: the appearance of $U^\dagger$ in an
expectation value does not provide a separate black box implementing the reverse
evolution. For approximate designs, a bound on the usual forward moments need
not be preserved when separate queries to $U^\dagger$ or $U^*$ reshuffle the
moment indices.

A notion of design adequate to these protocols must therefore extend the Haar
replacement to the broader query model. In the measurable-error formulation
considered here, the ensemble must remain indistinguishable from Haar measure to
bounded-query experiments that may separately access $U$, $U^\dagger$, $U^T$,
and $U^*$, while interleaving these calls with arbitrary quantum operations and
quantum memory~\cite{SchusterEtAlStrong2025}. In a black-box setting, access to
one of these transformations does not generally provide access to the others.

With this notion in place, one can ask how quickly a circuit reaches this
stronger form of Haar-like scrambling: how much depth is required before it
becomes a strong design? Schuster, Ma, Lombardi, Brand\~{a}o, and
Huang~\cite{SchusterEtAlStrong2025} conjectured that both strong unitary designs
and strong pseudorandom unitaries can be generated in logarithmic depth using
only the physical qubits. This restriction matters if the circuit is meant to
model closed-system dynamics: additional qubits enlarge the system's Hilbert
space and introduce resources absent from the dynamics being modeled.

Logarithmic depth is the natural target because it is the earliest possible
asymptotic scale in the all-to-all circuit model. A layer of bounded-size gates
can enlarge the support of a local operator by only a constant factor, so
system-wide operator spreading requires depth
$\Omega(\log n)$~\cite{SchusterEtAlStrong2025}. The same work established this
lower bound for strong designs and gave constructions that either require
additional qubits or incur an additional logarithmic factor when restricted to
the system alone. It remained open whether the optimal logarithmic depth could
be reached on the original system, for fixed design order and fixed
measurable-error tolerance.

We resolve this question for strong unitary designs with a particularly simple
random-circuit architecture. Each layer draws a uniformly random perfect
matching of the $n$ qubits, with $n$ even, and applies independent Haar-random
two-qubit gates to the matched pairs. Our result also implies that OTOC-growth can occur in logarithmic time, within such architecture using solely the system-qubits.

The analysis rests on a uniform Pauli-mixing result for this ensemble. Starting
from any nonidentity Pauli operator, the induced distribution over Pauli strings
approaches the corresponding Haar-induced distribution in total variation after
a number of layers logarithmic in the system size and in the inverse target
accuracy. The bound holds uniformly over the initial Pauli operator and
throughout the stated error range. Typical-input or average-case mixing would not
suffice, since the strong-design reduction requires control of the relevant
two-query experiment for every initial Pauli. We obtain this worst-case estimate
by reducing the evolution to a Markov chain on Pauli supports and constructing a
monotone grand coupling.

This uniformity is what controls two-query experiments containing one call from
$\{U,U^T\}$ and one from $\{U^\dagger,U^*\}$, which are not covered by ordinary
unitary designs. Combining the estimate with an independent weak design for the
ordinary forward-query sector, and then applying the strong-gluing framework to
reach higher orders, gives approximate strong unitary designs of every fixed
order and fixed measurable-error tolerance on the original $n$-qubit system,
with optimal all-to-all depth $\Theta(\log n)$. Strong-design behaviour is
therefore reached on the same asymptotic depth scale as that required for an
initially local operator to acquire system-wide support.

\Cref{sec:comb} defines the strong-design model, its measurable-error criterion,
and the construction framework. \Cref{sec:Main_results} states the formal
results, while \Cref{sec:methods} explains the proof strategy. The appendices
provide the technical background and imported inputs, establish the Pauli-mixing
estimate, and assemble the main constructions. \Cref{sec:discussion} returns to
the scope of the result and the remaining open directions.

\section{Background}\label{sec:comb}
\subsection{Approximate strong designs}
A unitary $k$-design is an ensemble of unitaries $\mathcal{E}$ that cannot be distinguished from Haar-random unitaries by any experiment making at most $k$ queries. The standard notion of a unitary $k$-design, which we refer to as a weak $k$-design, restricts the distinguisher to $k$ forward queries to the same sampled unitary $U\sim\mathcal E$. Strong $k$-designs require more. After drawing $U$, a $k$-query experiment may have access to $U$, $U^\dagger$, $U^T$, or $U^*$. Throughout, these are ordinary oracle calls. Controlled versions require an additional convention, since they make the global phase of the sampled unitary observable; we state this convention below. The ensemble is therefore a strong $k$-design if every such experiment remains indistinguishable from Haar-random~\cite{SchusterEtAlStrong2025}.

Exact unitary designs, which reproduce the first $k$ moments of the Haar measure exactly, are available only in special settings. In practice, one therefore works with approximate designs. The choice of approximation error is important because it determines which experiments are guaranteed to exhibit Haar-like behaviour. Three inequivalent notions appear in the literature, all defined in
detail in Appendix~\ref{app:preliminaries}. Two such notions, the \textit{additive} and \textit{relative} errors, are defined through the $(p,q)$ mixed-moment channel of $\cE$
\begin{equation}\label{eq:mixed-twirl-main}
    \Phi_{\cE}^{(p,q)}(X)
    :=
    \E_{U\sim\cE}\!\left[
        \bigl( U^{\otimes p} \otimes U^{*,\otimes q} \bigr)
        X
        \bigl( U^{\dagger,\otimes p} \otimes U^{T,\otimes q} \bigr)
    \right].
\end{equation}
where $p$ is the number of forward-time queries \{$U$, $U^T$\}, and $q$ is the number of time-reversal queries \{$U^*$,$U^{\dagger}$\}, with $p+q=k$. The \emph{additive error} represents the simplest of the three notions. It controls the experiments in which all $k$ uses of the sampled unitary occur in a single parallel call, while allowing the input state to be entangled with an arbitrary reference system. In its strong version, it is defined as $\varepsilon_{a}:=\max_{p+q=k}\|\Phi_{\cE}^{(p,q)}-\Phi_{\mu_H}^{(p,q)}\|_\diamond$. The weak version is restricted to the forward-time sector, corresponding to $p=k$, $q=0$.

The \emph{relative error} imposes the most demanding approximation guarantee. It controls any measurement outcome multiplicatively: if an event occurs with probability $p$ for the Haar measure, then its probability under an $\varepsilon_{r}$-relative-error design $\cE$ is between $(1-\varepsilon_{r})p$ and $(1+\varepsilon_{r})p$, even when $p$ is arbitrarily small. It also provides adaptive security:  it covers experiments that interleave the $k$ queries with arbitrary operations and memory, even for events no physical experiment can efficiently resolve. In its strong version, a relative error $\varepsilon_{r}$ requires
\begin{equation}\label{eq:strong-relative-main}
    (1-\varepsilon_{r})\Phi_{\mu_H}^{(p,q)}
    \preceq_{\mathrm{CP}}
    \Phi_{\cE}^{(p,q)}
    \preceq_{\mathrm{CP}}
    (1+\varepsilon_{r})\Phi_{\mu_H}^{(p,q)}
    \qquad
    \text{for every }p+q=k,
\end{equation}
where $\Psi\preceq_{\mathrm{CP}}\Phi$ means that $\Phi-\Psi$ is completely positive. Its weak version again restricts to $p=k$, $q=0$.

While relative error guarantees security against arbitrary adaptive experiments, its multiplicative control of the outcome probabilities is
unnecessarily restrictive: it demands accuracy even on outcomes so rare that a
bounded number of queries would never produce a single one of them. Instead of looking at the multiplicative ratio between the two outcome probabilities, one can look instead at their absolute difference, which is what an experiment can actually notice. This is the \emph{measurable error}: over the class of adaptive experiments, it bounds how
well a single run can distinguish $\mathcal{E}$ from the Haar measure.

Quantum combs provide a natural framework for describing these mixed-query experiments~\cite{ChiribellaEtAl2009}. Their $k$ oracle slots specify the sequence of queries, while the interleaving maps describe the adversary's operations and carry its memory from one query to the next.
More explicitly, let $\mathsf{Q}$ denote the $n$-qubit register on which the sampled unitary acts, and let $\mathsf{M}$ be an arbitrary finite-dimensional memory register held by the adversary. The mixed-query experiment is specified by a choice of queries $\bm s=(s_1,\ldots,s_k)\in\{1,\dagger,T,*\}^k$, which record which of the four available transformations is used at each of the $k$ oracle calls. For $s_i\in\{1,\dagger,T,*\}$, we denote the corresponding channel on $\mathsf{Q}$ by $\mathcal V_{s_i}^U(X):=U^{s_i}X(U^{s_i})^\dagger$. At the $i$-th oracle call, this channel $\mathcal V_{s_i}^U$ acts on $\mathsf{Q}$ while the memory register is left unchanged.

The most general way to connect several such calls is through a $k$-slot quantum comb $\mathcal W$. Starting from an initial state $\rho_0:=|0\rangle^{\otimes (n+m)}\langle 0|^{\otimes (n+m)}$ on the query and memory registers $\mathsf{Q}\otimes \mathsf{M}$ with $m=\log_2 \operatorname{dim}\mathsf{M}$, the comb applies a sequence of channels $\{\mathcal{W}_{i}\}_{i=1}^{k+1}$ on $\mathsf{Q} \otimes \mathsf{M}$.
For a fixed unitary $U$ and choice of queries $\bm s$, the state produced before the final measurement is
\begin{equation}\label{eq:comb-output-main}
    \rho_{\mathcal W,\bm s}^{U}
    :=
    \mathcal W_{k+1}
    \circ
    \bigl(
        \mathcal V_{s_k}^{U}
        \otimes
        \operatorname{id}_{\mathsf{M}}
    \bigr)
    \circ
    \mathcal W_{k}
    \circ\cdots\circ
    \mathcal W_2
    \circ
    \bigl(
        \mathcal V_{s_1}^{U}
        \otimes
        \operatorname{id}_{\mathsf{M}}
    \bigr)
    \circ
    \mathcal W_1
    (\rho_0).
\end{equation}
The general comb formalism is illustrated in~\cref{fig:comb}. We call this a fixed-word adaptive comb. The choice of queries $\bm s$ is fixed before the experiment. The comb may adapt its interleaving operations and retain arbitrary quantum memory, but later query orientations cannot be chosen adaptively.

We now consider an  ensemble $\cE$ of unitaries. For a fixed word $\bm s$ and comb $\mathcal W$, averaging over this ensemble gives the output state $\E_{U\sim\cE}\left[\rho_{\mathcal W,\bm s}^{U}\right]$. We compare this state with the output obtained by averaging the same comb over the Haar measure $\mu_H$, that is $\E_{U\sim \mu_H}\left[\rho_{\mathcal W,\bm s}^{U}\right]$. 
We say that an ensemble $\cE$ is a strong $\varepsilon_m$-approximate unitary $k$-design in measurable error if
\begin{equation}\label{eq:measurable-error-main}
 \operatorname{Err}_{k}^{\rm full}(\mathcal{E}):=\sup_{\bm s,\mathcal W}
\left\|
\E_{U\sim\cE}
\left[\rho_{\mathcal W,\bm s}^{U}\right]
-
\E_{U\sim \mu_H}
\left[\rho_{\mathcal W,\bm s}^{U}\right]
\right\|_1
\leq
\varepsilon_m,
\end{equation}
where the supremum ranges over all fixed query words $\bm s$ and all finite-memory adaptive $k$-slot combs $\mathcal W$. 

Phase randomization gives a controlled-query version. Let $\widetilde{\cE}$ be the distribution of $\widetilde U=e^{i\phi}U$, where $U\sim\cE$ and $\phi$ is uniform on $[0,2\pi)$ and independent of $U$. For every $\bm{s}\in\{1,\dagger,T,*\}$, the identity $\mathcal V_s^{e^{i\phi}U}=\mathcal V_s^U$ shows that $\widetilde{\cE}$ has the same ordinary-query guarantee as $\cE$. Refs.~\cite{Sheridan_Maslov_Mosca_2009,tang2026controlledunitarieshelpful} give the same guarantee for controlled versions of $\widetilde U$, $\widetilde U^\dagger$, $\widetilde U^T$, and $\widetilde U^*$, as long as every orientation uses the same sampled phase. This applies only to $\widetilde{\cE}$.

Note that this definition for measurable error also covers experiments using fewer than $k$ queries: one may append padding queries acting on a maximally mixed register and discard their outputs.

\begin{figure}[t]
  \centering
  \includegraphics[width=0.9\linewidth]{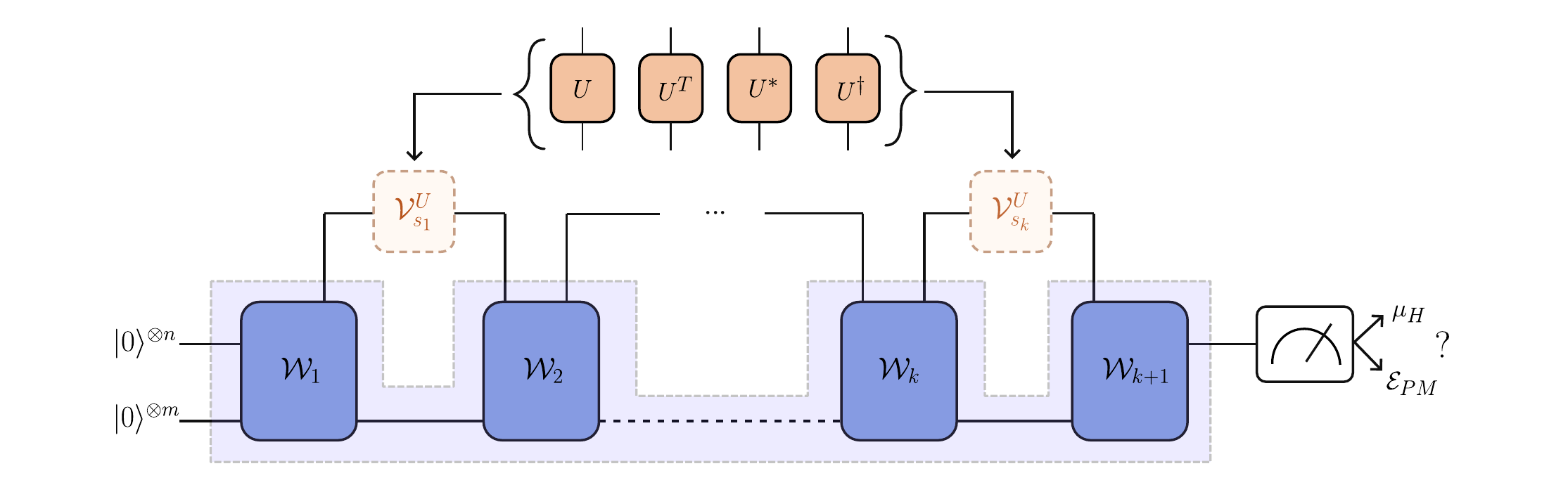}
  \caption{A mixed-query distinguisher represented as a quantum comb $\mathcal{W}$. A single unitary $U$ is sampled once from an ensemble and used throughout the experiment: in each of the $k$ slots, the adversary may place an oracle query of $U$, $U^\dagger$, $U^T$, or $U^*$. The comb models the arbitrary operations $\mathcal{W}_1 \cdots \mathcal{W}_{k+1}$ the distinguisher may apply between queries, and arbitrary retention of quantum memory.}
  \label{fig:comb}
\end{figure}

Two normalizations of \cref{eq:measurable-error-main} appear in the literature and in the imported statements we use throughout our work. We write $ \operatorname{Err}_{k}^{\rm full}$ for the full trace norm, as displayed
above, and  $\operatorname{Err}_{k} :=\frac{1}{2}  \operatorname{Err}_{k}^{\rm full}$ for the half trace distance, which is the operationally normalized
distinguishing advantage. We will make precise which one we are using in each case. All our final statements about measurable error are quoted in the full trace norm of Eq.~\eqref{eq:measurable-error-main}. 

\subsection{Assembling high order designs: strong gluing}
\label{sec:assembling}
Having introduced the operational notion of an approximate strong unitary design, we now explain how strong designs of higher order can be assembled from simpler components via strong gluing. We consider two brickwork layers, with bricks consisting of strong $k$-designs, sandwiched between two global strong-$2$-design shells (throughout this paper, a shell is an ensemble acting on all $n$
system qubits), as shown in~\cref{fig:strong-k}. We partition the system into $m=\lfloor n/\xi \rfloor$ patches of $\xi$ qubits, and the last patch absorbs the remainder $\zeta=n-m\xi$ qubits. Each strong $k$-design in the brickwork has support $2\xi$ qubits (except the last one, which has support $2\xi+\zeta$), and overlaps with each neighbour on a support of
$\xi=\Theta(\log(nk/\varepsilon))$ qubits. In this precise configuration, the strong gluing \cref{lem:gluing_strong_random_unitaries} lets us glue together two overlapping strong $k$-designs, into a unique strong $k$-design, at a cost controlled by the overlap size, the local error, and the shell error:

\begin{lemma}[Gluing strong random unitaries, adapted from Lemma~8 of Ref.~\cite{FolkertsmaEtAl2026}, and Lemma~1 of Ref.~\cite{SchusterEtAlStrong2025}]
\label{lem:gluing_strong_random_unitaries}
Let ($a$, $b$, $c$, $d$) be four quantum registers, where $a$, $b$, $c$ are at least size $\xi$. Consider
the unitary ensemble $U_1 = D_{abcd}U_{ab}U_{bc}C_{abcd}$, where $C_{abcd}$ and $D_{abcd}$ are strong
$\varepsilon_2$-approximate unitary $2$-designs and $U_{ab}$, $U_{bc}$ are strong
$\varepsilon_{ab}$- and $\varepsilon_{bc}$-approximate unitary $k$-designs on the indicated registers. Consider another ensemble $U_2 = D_{abcd}U_{abc}C_{abcd}$, where $U_{abc}$ is Haar
random. Then any $k$-query experiment whose queries can be chosen from $\{U, U^\dagger, U^{T}, U^*\}$
cannot distinguish $U_1$ from $U_2$ up to
$\varepsilon = \varepsilon_{ab} + \varepsilon_{bc} + O\!\left(k^{2}/2^{(3/16)\xi}\right)
+ O\!\left(k^{5/8}\varepsilon_{2}^{1/8}\right)$ in trace distance.
\end{lemma}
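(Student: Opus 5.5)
The plan is a hybrid (telescoping) argument in the measurable-error metric, carried out in two stages. Throughout, the two shells $C_{abcd}$ and $D_{abcd}$ are held fixed.

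\textbf{Stage 1: replace the local designs by Haar unitaries.} Consider the intermediate ensemble $U_1'=D_{abcd}V_{ab}V_{bc}C_{abcd}$, where $V_{ab}$ and $V_{bc}$ are independent Haar unitaries on their registers.
\begin{itemize}
\item Fix any fixed-word adaptive comb $\mathcal W$ and any word $\bm s\in\{1,\dagger,T,*\}^k$. Each query to $U_1^{s}$ factorizes into queries to $D^{s}$, $U_{ab}^{s}$, $U_{bc}^{s}$ and $C^{s}$, in the order dictated by $s$ (reversed for $\dagger$ and $*$).
\item Conditioning on $C$, $D$ and $U_{bc}$, the remaining dependence on $U_{ab}$ is a $k$-query mixed-orientation experiment on the $ab$ register. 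The other unitaries are absorbed into the comb's interleaving channels, and the other registers into memory.
\item The strong $\varepsilon_{ab}$-design guarantee therefore bounds the trace-distance change from replacing $U_{ab}$ by $V_{ab}$ by $\varepsilon_{ab}$, uniformly in the conditioning variables. Averaging preserves this bound by convexity of the trace norm.
\item The same argument for $U_{bc}$ costs $\varepsilon_{bc}$.
\end{itemize}

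\textbf{Stage 2: compare $U_1'$ with $U_2$.} It remains to show that $D\,V_{ab}V_{bc}\,C$ and $D\,U_{abc}\,C$ are indistinguishable up to $O(k^2 2^{-3\xi/16})+O(k^{5/8}\varepsilon_2^{1/8})$. Here I would invoke the gluing lemma of Ref.~\cite{SchusterEtAlStrong2025} (their Lemma~1) in the form refined by Ref.~\cite{FolkertsmaEtAl2026}. The mechanism behind it is as follows.
\begin{itemize}
\item With exact Haar shells, the ensemble $D V_{ab}V_{bc} C$ is invariant under the path-recording or purification picture of Ma--Huang. A $k$-query strong experiment on it is close to the same experiment on a Haar unitary, unless the query inputs after $C$ hit ``collision'' subspaces.
\item Those collision subspaces are where the overlap register $b$ fails to decouple the two local unitaries. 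Their weight is at most $k^2/2^{\Omega(\xi)}$, because $b$ has size at least $\xi$ and the shell randomizes inputs.
\item The exponent $3/16$ comes from chaining square-root losses. These arise in converting an average-case purification bound into a worst-case comb bound, and in handling the $\dagger/*$ orientations, which require the relation-state or partial-transpose trick.
\end{itemize}

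\textbf{Relaxing the shells to approximate 2-designs.} This is where the $k^{5/8}\varepsilon_2^{1/8}$ term arises. The proof only uses second moments of $C$ and $D$ to show that the queried states are approximately spread out. These are quantities like collision probabilities on $b$, i.e.\ expectations of $\mathrm{SWAP}_b$-type observables under $\Phi^{(p,q)}$ with $p+q=2$. Replacing exact Haar second moments by $\varepsilon_2$-close ones perturbs these bounds additively. Propagating the perturbation through the same chain of square roots and Markov-type inequalities yields the eighth-root dependence.

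\textbf{Assembly and main obstacle.} The triangle inequality over the three hybrids gives the stated $\varepsilon$. I expect Stage 2 to be the main obstacle: the uniform worst-case control of mixed forward/reverse queries through the overlap register. There I would rely directly on the imported strong-gluing lemmas rather than reprove them. Only the bookkeeping of register sizes is specific to this paper: $a,b,c$ each have at least $\xi$ qubits, and the last patch absorbs the $\zeta$ remainder, which only enlarges the registers.
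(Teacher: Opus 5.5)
Your proposal agrees with the paper, which gives no proof of this lemma and imports it from Lemma~1 of Ref.~\cite{SchusterEtAlStrong2025} and Lemma~8 of Ref.~\cite{FolkertsmaEtAl2026}. Your Stage~1 hybrid is a correct, routine reduction to the case $\varepsilon_{ab}=\varepsilon_{bc}=0$: you condition on $C$, $D$ and the other local unitary, absorb their oriented versions into the comb with $cd$ as memory, and average by convexity. Stage~2 is then exactly the imported statement. Your "mechanism" bullets carry no logical weight, and the first one misstates the setting: with exactly Haar shells both ensembles would be exactly Haar and the claim would be trivial, so the real content concerns 2-design shells, which you leave entirely to the citation.
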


Iterating the strong gluing \cref{lem:gluing_strong_random_unitaries} across the patches allows us to transform the whole two-layer brickwork into an extensive strong $k$-design. We also import this
guarantee, with its explicit tolerances, as \cref{lem:patch_protocol} in
\cref{app:iterated-gluing}. The bricks themselves (strong $k$-designs on a support of size at most
$3\xi=\Theta(\log(nk/\varepsilon))$ qubits) can be realized by one-dimensional random circuits, as stated in the imported lemma:

\begin{lemma}[1D random circuits are strong unitary $k$-designs, Lemma~3 of Ref.~\cite{SchusterEtAlStrong2025}]
There exists a universal constant $C_{\rm 1D}>0$ such that one-dimensional random circuits on $n$ qubits form strong
$\varepsilon$-approximate unitary $k$-designs in relative error whenever
\begin{equation*}
    d
    \geq
    C_{\rm 1D}
    \log^{7}(k)
    \left(
        nk+\log\frac{1}{\varepsilon}
    \right),
\end{equation*}
for any $\varepsilon \geq 2k^{2}/2^{n}$.

\label{lem:1D_strong_unitary_designs}
\end{lemma}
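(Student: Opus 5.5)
The statement is imported (Lemma~3 of Ref.~\cite{SchusterEtAlStrong2025}), so rather than re-derive it, we outline how its proof goes. We would reduce strong relative-error designs to ordinary relative-error designs for a suitably larger moment problem, and then use the spectral-gap bounds known for 1D brickwork circuits.

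\textbf{Step 1: unify the query types.} Transpose and complex conjugate are antiunitary relabelings. Using the vectorization identity $(A\otimes B)|\Omega\rangle=(AB^T\otimes I)|\Omega\rangle$, every query word $\bm s\in\{1,\dagger,T,*\}^k$ can be rewritten as a fixed-word comb acting only with $U$ and $U^*$ (equivalently $U$ and $U^\dagger$), up to partial transposes on the comb side. The CP-order condition \eqref{eq:strong-relative-main} is stated for every split $p+q=k$, so it suffices to prove $(1\pm\varepsilon)$ CP sandwiching of $\Phi_\cE^{(p,q)}$ against $\Phi_{\mu_H}^{(p,q)}$ for each $(p,q)$.

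\textbf{Step 2: CP order for each $(p,q)$.} The channel $\Phi^{(p,q)}$ is the orthogonal projector onto the commutant of $U^{\otimes p}\otimes U^{*\otimes q}$, spanned by walled-Brauer elements. We would use the standard route of Brand\~ao--Harrow--Horodecki, as refined for relative error by Schuster--Haferkamp--Huang. A 1D circuit of depth $d$ has moment operator $M^{d}$, where the spectral gap of one brickwork layer on the $(p,q)$ moment space is a constant times $\log^{-c}(k)$. This gap comes from the known $k$-dependence of local gaps, together with the observation that mixed $(p,q)$ moments are unitarily equivalent, via partial transpose on the replicas, to $k$-fold moments of $U\otimes U^*$-type. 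The additive error then decays as $d^{O(1)}\, e^{-d\,\Delta}\, 2^{O(nk)}$, and converting additive to relative error costs a factor $2^{2nk}$ from the dimension. This is why depth $d\gtrsim \log^{7}(k)\,(nk+\log(1/\varepsilon))$ suffices.

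\textbf{Step 3: the $\varepsilon$ restriction.} The condition $\varepsilon\ge 2k^2/2^n$ enters because Haar mixed moments are themselves only approximately proportional to the symmetrized walled-Brauer projector. Weingarten corrections of order $k^2/2^n$ obstruct CP sandwiching below that scale.

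\textbf{Main obstacle.} The delicate point is obtaining the spectral gap for \emph{mixed} moments with only polylogarithmic $k$-dependence. For this we would try first to adapt the path-coupling and patching argument of Schuster--Haferkamp--Huang. That argument shows that small-patch circuits glued by overlaps of size $\Theta(\log(nk/\varepsilon))$ achieve relative error. Its key input is an approximate projector identity for Haar moments on overlapping patches, and we would verify that this identity holds uniformly over $(p,q)$, since the walled Brauer algebra also has orthogonality defects of order $k^2/d$.
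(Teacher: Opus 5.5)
Since the paper imports this lemma without proof, the relevant comparison is with the standard spectral-gap argument. Your Step~2 contains its skeleton, but you misplace the difficulty, and the route you propose for your ``main obstacle'' would fail.

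The mixed moment operator is related to the forward one by a permutation of tensor factors, not by a partial transpose (a partial transpose changes spectra and does not preserve the CP order). For $V=U^{\otimes p}\otimes U^{*\otimes q}$,
\[
V\otimes V^{*}=P_\sigma\bigl(U^{\otimes k}\otimes U^{*\otimes k}\bigr)P_\sigma^{\dagger},
\]
where $P_\sigma$ swaps the last $q$ factors of the first block with the last $q$ factors of the second. This identity holds for every $U$, hence for each brickwork layer and for the Haar measure alike. Writing $M^{(p,q)}_{\cE}:=\E_{U\sim\cE}[V\otimes V^{*}]$, one gets $\|M^{(p,q)}_{\cE}-M^{(p,q)}_{\mu_H}\|_\infty=\|M^{(k,0)}_{\cE}-M^{(k,0)}_{\mu_H}\|_\infty$ exactly. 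So the forward-sector gap $\Delta$ underlying the imported weak result transfers verbatim, and there is no mixed-sector gap to prove.

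The fallback you suggest cannot work. The Schuster--Haferkamp--Huang patching argument yields no spectral gap, and its CP-order projector identity is false in the mixed sector. Indeed, $\Phi^{(1,1)}_{\cE}\preceq_{\rm CP}(1+\varepsilon)\Phi^{(1,1)}_{\mu_H}$ implies $p_{\cE}(P\mid P)\le(1+\varepsilon)/(4^n-1)$. By contrast, two layers of Haar blocks on $2\xi=O(\log(nk/\varepsilon))$ qubits return a single-qubit Pauli to itself with probability at least $(4^{2\xi}-1)^{-2}$, which is $1/\mathrm{poly}(nk/\varepsilon)$. This matches the $\Omega(n)$ lower bound for strong relative-error designs from local random gates (Proposition~3 of Ref.~\cite{SchusterEtAlStrong2025}). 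It is also why the strong gluing lemma (\cref{lem:gluing_strong_random_unitaries}) only gives measurable error and needs global strong $2$-design shells.

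What does need an argument in the mixed sector, and what you only assert, is the additive-to-relative conversion. Decompose $U^{\otimes p}\otimes U^{*\otimes q}\cong\bigoplus_\lambda\pi_\lambda(U)\otimes I_{m_\lambda}$ into irreducibles. Schur orthogonality then shows two things about the Choi operator $J^{(p,q)}_{\mu_H}$ of $\Phi^{(p,q)}_{\mu_H}$: its nonzero eigenvalues are $m_\lambda/d_\lambda\ge m_\lambda^2/2^{nk}\ge 2^{-nk}$, and its support contains that of $J^{(p,q)}_{\cE}$. Since the Choi operator is a realignment of the moment operator, at depth $d$
\[
\|J^{(p,q)}_{\cE}-J^{(p,q)}_{\mu_H}\|_\infty\le\|M^{(p,q)}_{\cE}-M^{(p,q)}_{\mu_H}\|_2\le 2^{nk}(1-\Delta)^{d}.
\]
Hence the CP sandwich holds for every $(p,q)$ once $(1-\Delta)^{d}\le\varepsilon\,2^{-2nk}$. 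With the imported bound on $\Delta$ this is the stated depth, and no polynomial prefactor appears.

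Your Step~3 is wrong as stated. The sandwich is against the exact Haar channel, and for fixed $n,k$ the relative error of brickwork circuits tends to zero with depth, because the gap is positive by universality. So Weingarten corrections impose no floor at $k^2/2^n$. The hypothesis $\varepsilon\ge 2k^2/2^n$ (which together with $\varepsilon\le 1$ forces $k\le 2^{(n-1)/2}$) should be read as a restriction inherited from the imported results, not an intrinsic obstruction.

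Likewise, in Step~1 the $U^{T}$ and $U^{\dagger}$ slots amount to exchanging input and output wires of the Choi operator. That is a unitary conjugation, which preserves the CP order; a partial transpose would not.
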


That is, 1D brickwork circuits generate $k$-designs at a depth that scales linearly with the number of qubits on which they act. We therefore apply them only to small blocks of size at most
$3\xi=\Theta(\log(nk/\varepsilon))$. By \cref{lem:1D_strong_unitary_designs}, each block can be implemented in depth $O(k\log^7(2k)\log(nk/\varepsilon))$, and since the blocks within one macro-layer have
disjoint supports, they can be implemented in parallel. The two brickwork macro-layers therefore have depth $O(\log n)$ for fixed $k$ and $\varepsilon$, and the resulting strong $k$-design meets the desired depth scaling we target for our final strong designs.

The remaining bottleneck is the pair of outer strong $2$-design shells required by the gluing lemma. In the construction of Ref.~\cite{SchusterEtAlStrong2025}, these shells are implemented using a blocked fast scrambling circuit architecture. The register is partitioned into patches, and the light cone of each patch doubles at every spreading step. Reaching the full system therefore requires $O(\log(n/\xi))$ such steps, each realized by a brickwork $2$-design of nonconstant depth. As a result, the complete strong $2$-design shell has depth $O(\log^2 n)$.

These shells are the only part of the construction whose depth exceeds $O(\log{n})$, and they are therefore responsible for the overall $O(\log^{2} n)$ scaling. The remainder of this work removes this bottleneck by constructing a strong $2$-design shell in depth $O(\log n)$.

\begin{figure}[t]
    \centering

    \begin{subfigure}[t]{0.32\textwidth}
        \centering
        \includegraphics[width=\linewidth]{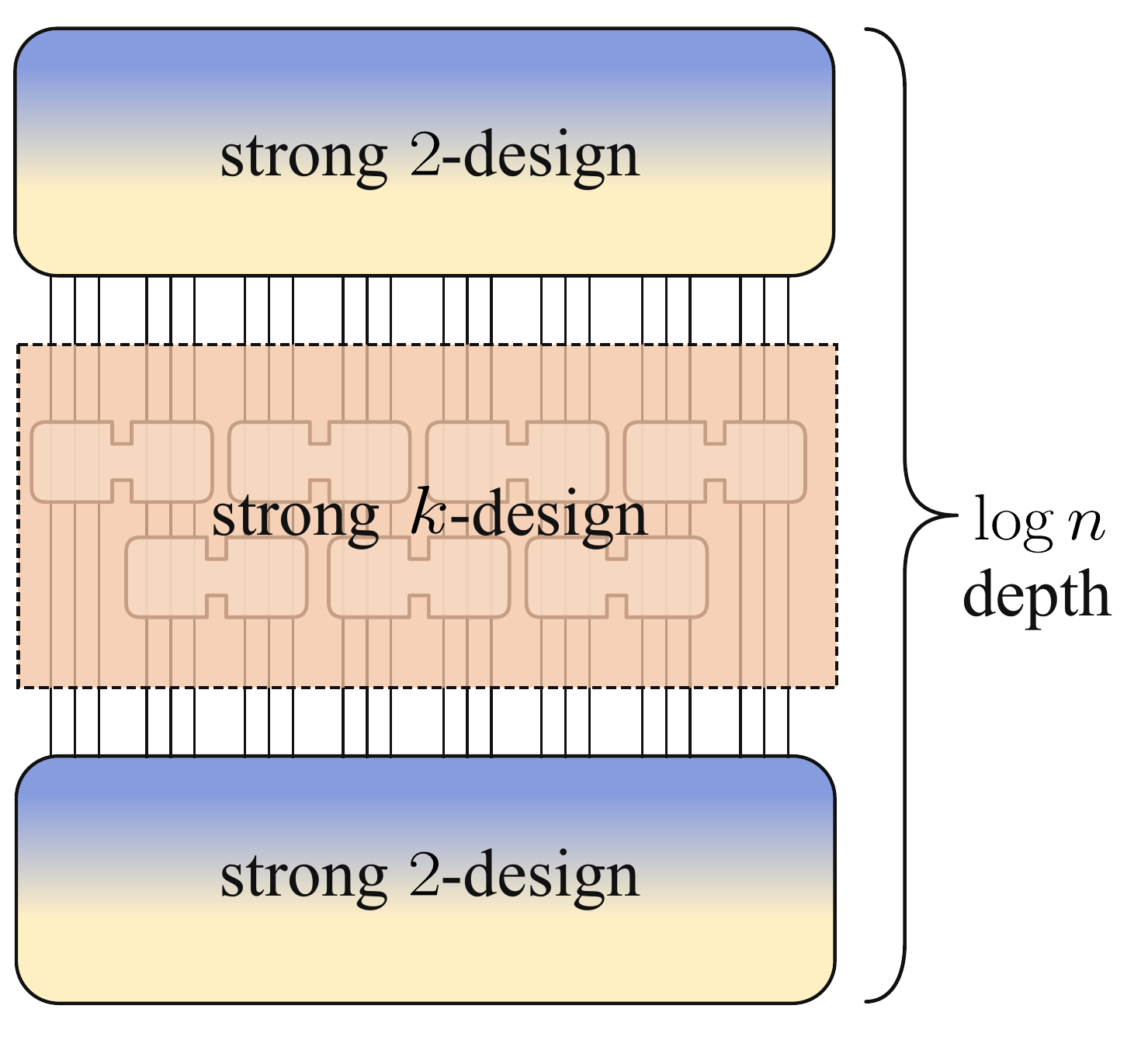}
        \caption{Strong $k$-design}
        \label{fig:strong-k}
    \end{subfigure}
    \hfill
    \begin{subfigure}[t]{0.32\textwidth}
        \centering
        \includegraphics[width=\linewidth]{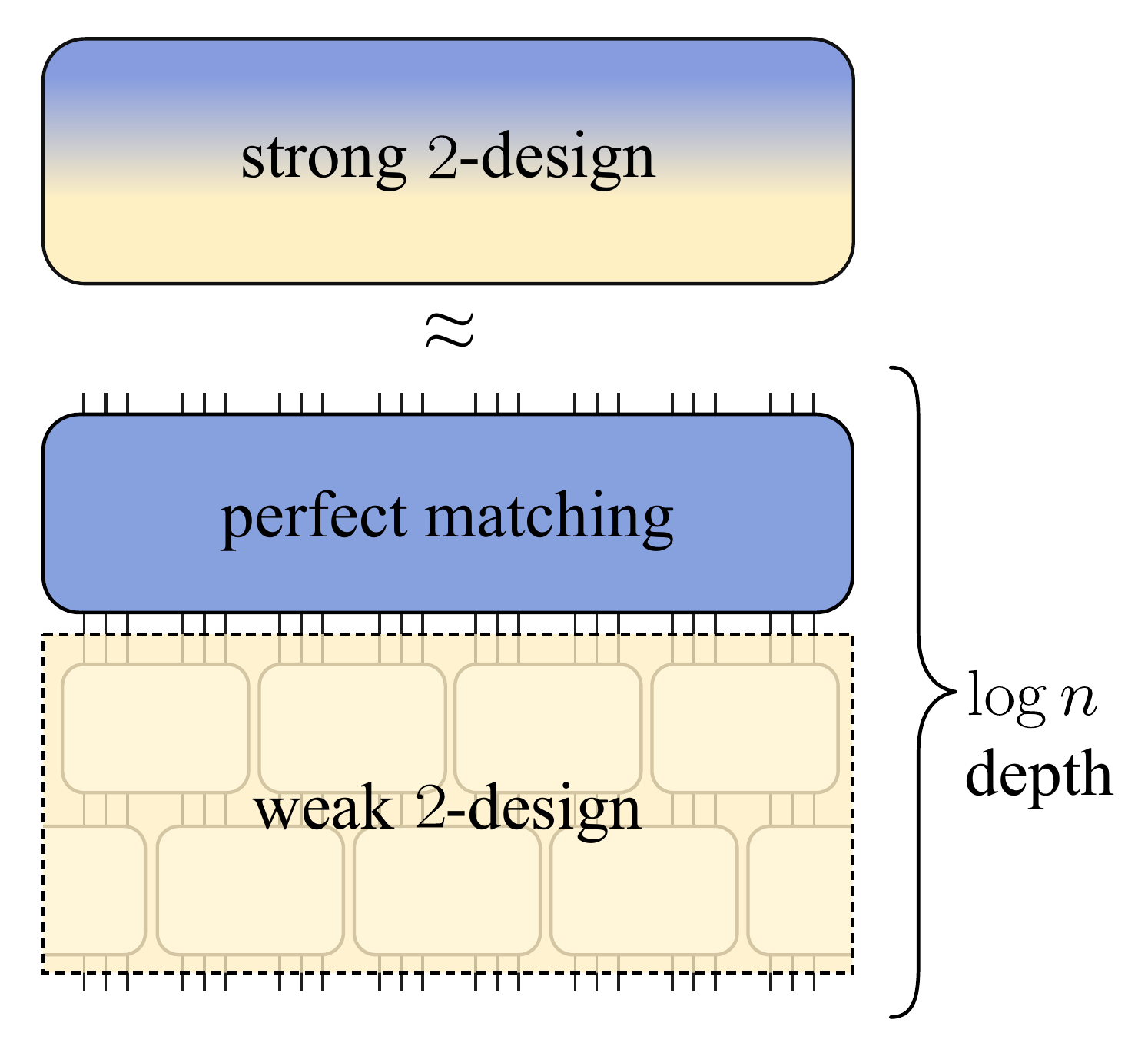}
        \caption{Approximate strong $2$-design}
        \label{fig:strong-2-layer}
    \end{subfigure}
    \hfill
    \begin{subfigure}[t]{0.32\textwidth}
        \centering
        \includegraphics[width=\linewidth]{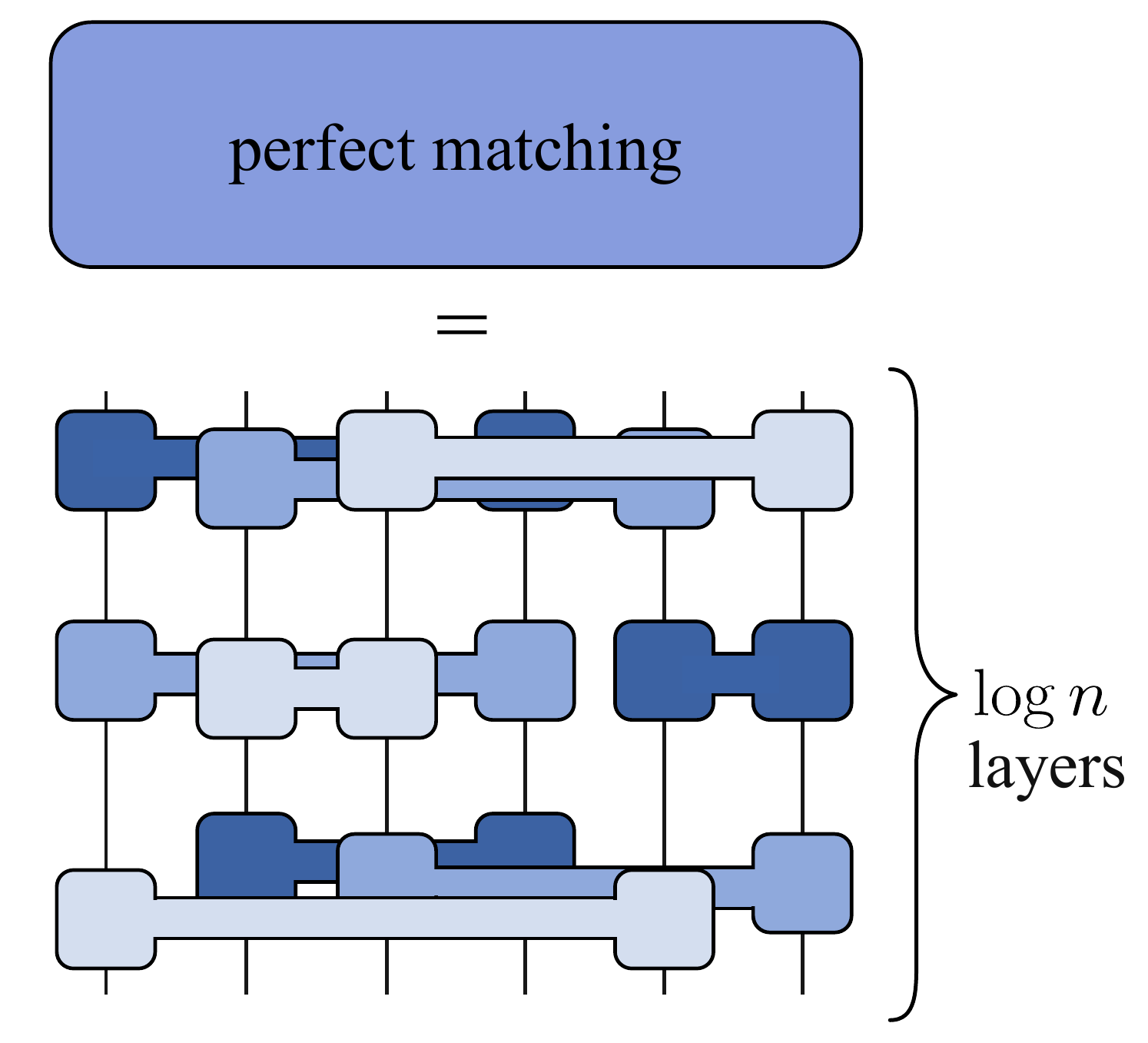}
        \caption{Perfect matching circuit}
        \label{fig:perfect-matching-layer}
    \end{subfigure}
    \caption{Architecture of our realization for approximate strong $k$-designs in measurable error. Each panel resolves a
component of the previous one. (a) The strong gluing \cref{lem:gluing_strong_random_unitaries} upgrades two brickwork macro-layers of local strong $k$-designs into an extensive strong $k$-design, after being sandwiched between two global strong 2-design shells
(\cref{thm:strong-k}). (b) The strong 2-design shell is realized as a perfect-matching circuit composed with an independent weak relative-error 2-design (\cref{thm:shell}). The former controls the mixed-sign query sector, while the latter controls the same-sign sector, both in depth $O(\log n)$ for constant error. (c)~Example of a perfect-matching
circuit: each layer pairs all $n$ qubits uniformly at
random and applies an independent Haar-random gate to every pair.}
    \label{fig:three-subfigures}
\end{figure}

\subsection{Composing the strong $2$-designs}
\label{sec:composing_2}
A strong $2$-design must be robust under every experiment that makes two queries, each drawn from $U$, $U^{T}$, $U^{*}$ or $U^{\dagger}$. Instead of checking each of the 16 possible options one by one, we exploit the fact that they
divide naturally into two classes, so controlling both classes controls all two-query experiments. We call $U$ and $U^{T}$ the positive (or forward-time) orientations and $U^{*}$ and $U^{\dagger}$ the negative (or time-reversal) ones. A 2-query experiment is then called same-sign if both of its queries come from the same orientation group, and mixed-sign (or mixed-queries) if it takes one from each.

Same-sign queries reduce to the ordinary two-copy moment channel. For a fixed word $\bm{s}\in\{1,T\}^{2}$, replacing a query by its transpose only exchanges the corresponding input and output wires in the Choi representation. Relabelling these wires does not affect the relative-error inequalities, and applying the comb preserves their completely positive order. Lemma~6 of Ref.~\cite{SchusterEtAlStrong2025} then bounds the full trace-norm measurable error by $2\varepsilon_{2}$. The same argument applies to words in $\{*,\dagger\}^{2}$, since the weak ensemble and the Haar measure are invariant under complex conjugation. The additional difficulty in the strong setting comes from mixed-sign sequences, which allow an adversary to combine forward and reverse evolution within the same experiment. The following imported result shows that composing an ensemble that controls the same-sign sector with one that controls the mixed-sign sector yields a measurable-error strong $2$-design, with the error measured in trace distance:
\begin{lemma}[Composition of the two
query sectors, Lemma~19 of Ref.~\cite{FolkertsmaEtAl2026}]
\label{lem:composition}
Let $\mu_{1}$ be an ensemble that is secure, with measurable error
$\varepsilon_{1}$, against every adversary querying one of $U,U^{T}$ together
with one of $U^{*},U^{\dagger}$, and let $\mu_{2}$ be a weak
$\varepsilon_{2}$-approximate relative-error $2$-design. Then the convolution
$\mu_{1}*\mu_{2}$ is a strong $2$-design with measurable error $\max\left(\varepsilon_{1},\,2\varepsilon_{2}\right)$.
\end{lemma}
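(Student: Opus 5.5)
The plan is to sample $U=U_1U_2$ with $U_1\sim\mu_1$ and $U_2\sim\mu_2$ independent; the order of composition is immaterial by symmetry of the argument. Fix an arbitrary word $\bm s\in\{1,\dagger,T,*\}^2$ and an arbitrary adaptive $2$-slot comb $\mathcal W$. By the discussion preceding the lemma, every such word is either same-sign, with $\bm s\in\{1,T\}^2\cup\{*,\dagger\}^2$, or mixed-sign, with one query from $\{1,T\}$ and one from $\{*,\dagger\}$. I will bound the measurable error in each sector separately and take the maximum.

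The common tool in both cases is left/right invariance of the Haar measure. If $V\sim\mu_H$ is independent of a fixed unitary $W$, then $VW$ and $WV$ are again Haar distributed. In particular, for every fixed $W$, replacing one factor of the product by a Haar unitary reproduces the Haar output of the comb.

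\emph{Mixed-sign sector.} Condition on $U_2=W$. Each query to $U_1W$ of orientation $s$ factorizes: $(U_1W)^{s}$ equals $U_1^{s}W^{s}$ or $W^{s}U_1^{s}$, according to whether $s\in\{1,*\}$ or $s\in\{T,\dagger\}$. The fixed factors $W^{s}$ are known unitaries, so they can be absorbed into the interleaving channels $\mathcal W_i$. This yields a new adaptive comb $\mathcal W^{W}$ querying $U_1$ with the same word $\bm s$. By hypothesis, its output under $U_1\sim\mu_1$ is $\varepsilon_1$-close in trace norm to its output under $U_1\sim\mu_H$. By Haar invariance, the latter equals the Haar output of the original comb. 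Averaging over $W\sim\mu_2$ and using convexity of the trace norm gives error at most $\varepsilon_1$.

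\emph{Same-sign sector.} Condition instead on $U_1$ and absorb it into the comb in the same way. This produces a same-sign two-query experiment against $U_2$. As explained above the lemma, transposition only relabels Choi wires, and the weak ensemble and Haar measure are conjugation invariant. Hence the relative-error CP sandwich $(1\pm\varepsilon_2)\Phi^{(2,0)}_{\mu_H}$ applies to these words, and Lemma~6 of Ref.~\cite{SchusterEtAlStrong2025} bounds the full measurable error by $2\varepsilon_2$. Haar invariance again identifies the reference output with that of $U_1U_2$ under Haar $U_2$, and convexity finishes the bound.

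The main obstacle I expect is the absorption step when the orientation places the fixed factor on the ``wrong'' side of the query. For instance, $(U_1W)^{\dagger}=W^\dagger U_1^\dagger$ puts $W^\dagger$ after $U_1^\dagger$, whereas $(U_1W)^T=W^TU_1^T$ puts $W^T$ after $U_1^T$. I will need to check that in each of the 16 words every fixed factor lands either immediately before or immediately after a query, never inside one. Since both cases are merged into adjacent $\mathcal W_i$, no slot is broken and this works uniformly. Taking the supremum over $\bm s$ and $\mathcal W$ then gives the stated bound $\max(\varepsilon_1,2\varepsilon_2)$.
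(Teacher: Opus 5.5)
Your proposal is correct and follows essentially the same route as the paper's sketch in the proof of \cref{thm:shell}: for mixed-sign words you condition on the $\mu_2$ factor and absorb it into the comb, for same-sign words you condition on the $\mu_1$ factor and use the relative-error sandwich together with Lemma~6 of Ref.~\cite{SchusterEtAlStrong2025}, and one-sided Haar invariance means only one factor is ever replaced, which gives the maximum. The paper additionally notes explicitly that one-query words are handled by padding with a dummy second query; this is already implicit in the measurable-error definition you work with.
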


We illustrate such composition in~\cref{fig:strong-2-layer}. Importantly, weak $2$-designs in relative error are available in logarithmic depth, following the weak gluing lemma in Ref.~\cite{SchusterHaferkampHuang2025}.~\cref{lem:composition} therefore reduces the construction of the strong $2$-design to a single missing ingredient: a logarithmic-depth ensemble robust under mixed-sign queries.

\subsection{From mixed queries to Markov Chains}
\label{sec:mixed-query-reduction}

By Lemma~\ref{lem:composition}, it remains to control the mixed-sign query sector. We now show that, under suitable symmetries, this quantum distinguishability problem reduces to the mixing of a classical Markov chain on Pauli strings.

Let $\mathcal{P}_{n}=\{I,X,Y,Z\}^{\otimes n}$, and
$\mathcal{P}_{n}^{\star}=\mathcal{P}_{n}\setminus\{I^{\otimes n}\}$, and set
$D=2^{n}$. For any $P\in\mathcal{P}_n^{\star}$, conjugation by a unitary spreads a Pauli operator over this basis,
$UPU^{\dagger}=\sum_{Q\in\mathcal{P}_{n}^{\star}}c_{Q}(U)\,Q$, with the squared
coefficients $|c_{Q}(U)|^{2}$ adding to one. Therefore, the squared coefficients define a
probability distribution over output Pauli strings:

\begin{definition}[Pauli transition distribution]
\label{def:pauli-dist}
For an ensemble $\mathcal{E}$ of unitaries on $n$ qubits and $P,Q\in\mathcal{P}_{n}^{\star}$, define
\begin{equation}
\label{eq:pauli-dist}
    p_{\mathcal{E}}(Q\,|\,P)
    \;:=\;\frac{1}{D^{2}}\,
    \E_{U\sim\mathcal{E}}
    \bigl|\operatorname{Tr}\bigl(QUPU^{\dagger}\bigr)\bigr|^{2}.
\end{equation}
Since $UPU^{\dagger}$ is unitary and traceless, $p_{\mathcal{E}}(\,\cdot\,|\,P)$
is a probability distribution on $\mathcal{P}_{n}^{\star}$ for every $P$
\end{definition}

In words, $p_{\mathcal{E}}(\,\cdot\,|\,P)$ is the probability distribution on
Pauli strings induced by the ensemble, with $p_{\mathcal{E}}(Q\,|\,P)$ the
probability of finding the evolved operator on the string $Q$: conjugation by a
random circuit turns operator spreading into a classical probability
distribution over Pauli strings. Under Haar-random conjugation, this weight is distributed uniformly over the nonidentity Pauli strings, so that $p_{\mu_{H}}(Q\,|\,P)=1/(4^{n}-1)=:\pi(Q)$ for all
$P,Q\in\mathcal{P}_{n}^{\star}$. Thus, $\pi(Q)$ is the Haar target distribution for the induced classical dynamics. The following result connects convergence to this distribution with robustness against mixed-sign queries, provided the ensemble carries a short list of symmetries:

\begin{proposition}[Mixed-query reduction, adapted from Proposition~2 of
Ref.~\cite{SchusterEtAlStrong2025}]
\label{pro:mixed_query_reduction}
Consider a unitary ensemble $\mathcal{E}$ that is invariant under
conjugation, transposition, and random single-qubit Pauli rotations at the
input and output circuit layers.
 The ensemble forms a strong approximate unitary $2$-design with measurable error in half trace distance
\begin{equation}
\operatorname{Err}_{2}(\mathcal{E}) \;=\; \max_{P\in\mathcal{P}_{n}^{\star}}
\mathrm{TVD}\bigl(p_{\mathcal{E}}(\,\cdot\,|P),\,p_{\mu_H}(\,\cdot\,|P)\bigr),
 \end{equation}
 in any quantum experiment that queries one of $U$ or $U^{T}$ and one of $U^{*}$ or $U^{\dagger}$.
\end{proposition}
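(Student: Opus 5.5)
We have to prove Proposition~\ref{pro:mixed_query_reduction}: under the stated symmetries, the optimal distinguishing advantage in the mixed-sign two-query sector equals the worst-case total variation distance between the Pauli transition distributions. The plan is to reduce every mixed-sign experiment to a single canonical one, the experiment querying $U$ and then $U^\dagger$. We then compute both the ensemble-averaged and the Haar-averaged comb outputs explicitly in the Pauli basis.

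\textbf{Step 1: canonical word.} Transposition invariance means $U\sim\cE$ and $U^T\sim\cE$ have the same law, and the same holds for the Haar measure. Conjugation invariance gives the analogous statement for $U^*$. Hence any word in $\{1,T\}\times\{*,\dagger\}$, or its reverse order, can be relabelled to $(1,\dagger)$ or $(\dagger,1)$ without changing either averaged output state. The order $(\dagger,1)$ maps to $(1,\dagger)$ under $U\mapsto U^\dagger$, which is again a symmetry of both ensembles, since $U^\dagger=(U^*)^T$. So it suffices to treat combs of the form $\mathcal W_3\circ\mathcal V_\dagger^U\circ\mathcal W_2\circ\mathcal V_1^U\circ\mathcal W_1$.

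\textbf{Step 2: Pauli twirl and linearization.} We expand the middle channel $\mathcal W_2$ on $\mathsf Q$ in the Pauli operator basis, writing $\mathcal W_2(X)=\sum_{A,B}\chi_{AB}\,A X B$ with coefficients acting on memory. The averaged output is then a linear combination of terms $U^\dagger A U\,(\cdot)\,U^\dagger B U$. Invariance under random Pauli layers at input and output twirls these terms. Only diagonal terms $A=B=P$ survive, and the conjugated operator $U^\dagger P U$ is likewise diagonalised in the Pauli basis at the input side. The averaged map $X\mapsto \E\,U^\dagger\mathcal W_2(UXU^\dagger)U$ therefore becomes the Pauli channel
\[
X\mapsto\sum_{P,Q}\chi_{PP}\,p_\cE(Q|P)\,QXQ ,
\]
where we have used $p_\cE(Q|P)=D^{-2}\E|\operatorname{Tr}(QU^\dagger PU)|^2$ together with the fact that the law of $U^\dagger$ matches that of $U$ under these symmetries. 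The identity term $P=I$ is fixed under both ensembles and cancels in the difference.

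\textbf{Step 3: two-sided bound.} For the upper bound, the difference of outputs is $\mathcal W_3\bigl(\sum_{P\neq I}\sum_Q \chi_{PP}[p_\cE(Q|P)-\pi(Q)]\,(Q\cdot Q)\bigr)\mathcal W_1(\rho_0)$. The matrix $(\chi_{PP})$ is a probability-like weight in the completely positive sense, with memory-valued entries that are positive and sum to at most the identity. Convexity then bounds the trace norm by $2\max_P\TVD$, once we absorb $\mathcal W_1,\mathcal W_3$ by contractivity. This step is the main obstacle: the memory-valued $\chi$ forces us to argue via a Stinespring dilation of $\mathcal W_2$, so that the expression becomes a convex mixture over $P$ of the classical channel $Q\mapsto QXQ$.

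For the matching lower bound, we choose the adversary explicitly. Take one half of a maximally entangled pair on $\mathsf Q\otimes\mathsf M$, query $U$, apply $P^\ast$ (the maximiser) on $\mathsf Q$, query $U^\dagger$, and then measure in the Bell basis. The outcome distribution is exactly $p(\cdot|P^\ast)$ under each ensemble. Distinguishing the two outcome distributions optimally gives advantage $\TVD$, which yields equality in half trace distance.
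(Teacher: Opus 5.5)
\textbf{Verdict: genuine gap.} Step~1 is false: it leaves half of the mixed-sign words unproved, and your Step~2 computation does not transfer to them. The paper imports this proposition from Schuster et al.\ without proof, so your argument has to stand on its own.

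\textbf{What works.} For the word $(U,U^\dagger)$ the argument is sound. The output-side Pauli invariance twirls $\mathcal W_2$, so the $\chi_{PP}$ form an instrument on memory. The input-side invariance twirls the composite into $\sum_Q p_\cE(Q|P)\,Q(\cdot)Q$, with $U^\dagger\sim U$ making $p_\cE$ symmetric. Convexity over the instrument gives the upper bound, and your Bell-measurement adversary gives equality.

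\textbf{Where Step~1 fails.} The relabellings $U\mapsto U^T,U^*,U^\dagger$ act on both query slots at once, since both slots use the same sample; you cannot transpose or conjugate a single query. Concretely, $U\mapsto U^T$ swaps $1\leftrightarrow T$ and $*\leftrightarrow\dagger$, and $U\mapsto U^*$ swaps $1\leftrightarrow *$ and $T\leftrightarrow\dagger$. The eight mixed-sign words therefore split into two orbits: $\{(1,\dagger),(T,*),(*,T),(\dagger,1)\}$ and $\{(1,*),(T,\dagger),(*,1),(\dagger,T)\}$. For example, $(1,*)$ is mapped only to $(T,\dagger)$, $(*,1)$ and $(\dagger,T)$, never to $(1,\dagger)$. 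So the whole $(U,U^*)$ orbit is left unproved.

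\textbf{Why Step~2 does not transfer.} For the word $(1,*)$, replacing $U$ by $\sigma U$ puts $\sigma$ after the first query and $\overline{\sigma}$ after $U^*$. Replacing $U$ by $U\sigma'$ puts $\sigma'$ at the very start and $\overline{\sigma'}$ between $\mathcal W_2$ and $U^*$. Neither $\mathcal W_2$ nor the composite is twirled.

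Work instead in the Pauli transfer basis, using $\E[R_{ab}R_{cd}]=\delta_{ac}\delta_{bd}\,p_\cE(a|b)$ and the fact that $U^*$ has transfer matrix $R$ conjugated by $Y$-parity signs. The averaged map $X\mapsto\E\,U^*\mathcal W_2(UXU^\dagger)U^T$ then sends $P$ to $\sum_{P'}\pm\,p_\cE(P'|P)\,D^{-1}\operatorname{Tr}\bigl(P\,\mathcal W_2(P')\bigr)\,P'$. This is a Schur product of the transition matrix with the transposed transfer matrix of $\mathcal W_2$, not a mixture of Pauli channels built from rows $p_\cE(\cdot|P)$.

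A simple instance already shows the difference. Take $\mathcal W_2=\mathrm{id}$, query $U$ then $U^*$ on half of $|\Phi\rangle$, and measure in the Bell basis. This samples $Q$ with probability $D^{-2}\E|\operatorname{Tr}(Q\,U^*U)|^2$. Its symplectic Fourier transform is the diagonal $P\mapsto p_\cE(P|P)$, not a row of $p_\cE$; under Haar it puts weight $2/D^2$ on $Q=I$, where every row puts zero.

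\textbf{What is needed.} This orbit probes different functionals of $p_\cE$. You need a separate estimate showing its advantage is also at most $\max_P\TVD\bigl(p_\cE(\cdot|P),\pi\bigr)$. Without it, neither the upper bound nor the claimed equality is established for all mixed-sign experiments.
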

Proposition~\ref{pro:mixed_query_reduction} reduces robustness against mixed-sign queries to a mixing problem of a classical Markov chain. It therefore remains to construct an all-to-all ensemble of depth $O(\log n)$ that satisfies the required symmetries and whose Pauli transition distribution approaches $\pi$ in total variation distance.

\subsection{A depth lower bound for measurable error strong designs}
Finally, we introduce a depth lower bound for strong designs, against which our construction should be compared. A simple light-cone argument gives a logarithmic depth lower bound for strong unitary $k$-designs under all-to-all connectivity, $\Omega(\log n)$:

\begin{proposition}[Depth lower
bounds for strong unitary designs. Adapted from Proposition~1 of Ref.~\cite{SchusterEtAlStrong2025}]
Let $k\geq 2$ and $\varepsilon<1/4$. Any all-to-all circuit ensemble on $n$ physical qubits
that forms a strong $\varepsilon$-approximate unitary $k$-design 
must have depth
\begin{equation}
\label{eq:lb_alltoall}
    d \;=\; \Omega\!\left(\log n+\log k\right),
\end{equation}
even if the implementation allows an arbitrary number of additional qubits.
\label{pro:lower_bound}
\end{proposition}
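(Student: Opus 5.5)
The bound follows from a light-cone argument applied to a single mixed-sign two-query experiment, an OTOC-type test, with the $\log k$ term obtained separately from the number of distinct unitaries the ensemble must support. First, if $\mathcal E$ is a strong $\varepsilon$-approximate $k$-design with $k\ge 2$, then it is also one for $k=2$. Experiments with fewer queries are covered by padding, as noted after \cref{eq:measurable-error-main}. So it suffices to exhibit a two-query experiment using $U$ and $U^\dagger$ that separates shallow circuits from Haar with advantage at least $1/4$.

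\textbf{Distinguisher.} Prepare qubit $1$ of $\mathsf Q$ maximally entangled with a memory qubit $R$, and put the other qubits in a fixed product state. Apply $U$, then apply a Pauli $X$ on a qubit $j$ chosen far from $1$, then apply $U^\dagger$. Finally measure the Bell projector on $(1,R)$.

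\textbf{Analysis of the distinguisher.}
\begin{itemize}
\item Any ancillas of the implementation are traced out. For a depth-$d$ circuit of two-qubit gates, the backward light cone of qubit $1$ contains at most $2^d$ qubits, and so does the forward light cone of every qubit.
\item If $d<\log_2 n-1$, there is a qubit $j$ outside the light cone of qubit $1$ for every circuit in the ensemble. Here I use that the architecture is fixed; if the layout is random, a union or averaging argument over $j$ picks a $j$ outside the light cone with probability at least $1/2$.
\item For such $j$, the operator $U^\dagger X_j U$ acts trivially on qubit $1$, since it is supported in the forward light cone of $j$ under $U^\dagger$, which by reversibility equals the backward light cone of $j$ under $U$. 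Hence the Bell fidelity is exactly $1$.
\item For Haar $U$, the operator $U^\dagger X_jU$ is nearly uniform over nonidentity Paulis. The Bell fidelity equals $\E|\Tr(\ldots)|^2/4$ restricted to qubit $1$, which is about $1/4+O(4^{-n})$.
\item The resulting trace-distance gap is at least about $3/4\cdot 1/2>\varepsilon$ whenever $\varepsilon<1/4$. This contradicts the design property, so $d\ge \log_2 n-O(1)$.
\end{itemize}

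\textbf{The $\log k$ term.} Forward $k$-query access alone forces the support of $\mathcal E$ to contain roughly $k$-dependent many unitaries, in the sense of a $k$-design lower bound. I would use the following counting. A depth-$d$ circuit on the light cone of a single qubit couples at most $2^d$ qubits. Distinguishing a $k$-th moment of a local observable requires the light cone to contain at least about $\log k$ qubits; otherwise the $k$-th moments of $\langle U^\dagger Z_1 U\rangle$ on a restricted subsystem deviate from Haar.

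This second step is the main obstacle. I would import it directly from Proposition~1 of Ref.~\cite{SchusterEtAlStrong2025}, which is the adapted source, and only verify that its normalization matches the trace-norm convention of \cref{eq:measurable-error-main}. The first step carries the essential $\log n$ content and does not depend on the number of ancillas.
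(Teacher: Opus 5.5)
Your $\log n$ argument is correct. It is the light-cone (OTOC) argument the paper has in mind; the paper gives no proof of its own and imports \cref{pro:lower_bound} from Ref.~\cite{SchusterEtAlStrong2025}. For the fixed word $(1,\dagger)$, if $j$ lies outside the forward light cone of qubit $1$, then $U^\dagger X_jU$ acts as the identity on qubit $1$ and the Bell fidelity is $1$. Under Haar it is exactly $\frac{4^{n-1}-1}{4^n-1}<\frac14$.

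Two small repairs are needed. First, ancillas should be modelled as clean, $\tilde U(\lvert\psi\rangle\otimes\lvert 0\rangle)=U\lvert\psi\rangle\otimes\lvert 0\rangle$, rather than ``traced out''. Then $U^\dagger X_jU=(I\otimes\langle 0\rvert)\,\tilde U^\dagger X_j\tilde U\,(I\otimes\lvert 0\rangle)$, and the light-cone bound for $\tilde U$ transfers directly. Second, for a random layout your constant $\tfrac34\cdot\tfrac12$ is unjustified, because nothing is known about the fidelity when $j$ is inside the light cone. Use instead that some $j$ lies in the light cone with probability at most $(2^d-1)/(n-1)$. For $d\le\log_2 n-3$ the shallow fidelity is then at least $7/8$, and the gap exceeds $1/2$.

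The genuine gap is the $\log k$ term. Your heuristic asks that the light cone contain about $\log k$ qubits. Since a depth-$d$ light cone has at most $2^d$ qubits, this would give only $d\gtrsim\log\log k$, not $\log k$.

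Moreover, no argument can give a $\log k$ term uniformly in $n$ and $k$. For $n=2$, a single Haar $SU(4)$ gate is an exact strong design of every order at depth $1$. More generally, exact Haar is reachable at a depth depending only on $n$. So the imported statement can only hold in a regime such as $k\le 2^{O(n)}$, and your write-up should state this.

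In that regime the term comes from counting rather than operator growth, which is also what makes it robust to ancillas:
\begin{itemize}
\item The system channel depends only on the at most $n(2^d-1)$ gates in the backward light cones of the $n$ system outputs.
\item Consider $k$ parallel forward queries on $\lvert\Phi\rangle^{\otimes k}$, realised sequentially using memory. The Haar output state has largest eigenvalue $\lambda_{\max}=k!(D-k)!/(D^kD!)\approx k!/D^{2k}$, with $D=2^n$.
\item Discretise the relevant gates at precision $\sim 1/(kn2^d)$. Any measurable-error $k$-design must then have support of size at least about $(1-O(\varepsilon))/\lambda_{\max}$.
\item Comparing $O(n2^d\log(nk2^d))$ bits against $\Omega(nk)$ bits gives $2^d=\tilde\Omega(k)$.
\end{itemize}

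As written, your proposal, like the paper, rests entirely on the citation for the $\log k$ term, and the sketch you offer would not replace it.
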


Among the three error notions considered, measurable error is the strongest one not excluded by the known lower bound. Proposition~3 of Ref.~\cite{SchusterEtAlStrong2025} shows that circuits built from independent local Haar-random gates require depth $\Omega(n)$ to form a strong $2$-design in relative error, irrespective of the circuit geometry, including all-to-all connectivity. Relative error therefore rules out logarithmic-depth constructions for this class of random circuits.

We consequently focus on measurable error. Our goal is to construct strong unitary $k$-designs in depth $O(\log n)$ for fixed $k$ and fixed accuracy. Together with Proposition~\ref{pro:lower_bound}, such a construction establishes the optimal asymptotic depth $\Theta(\log{n})$.

\section{Main results} \label{sec:Main_results}

The previous section reduced the construction of a logarithmic-depth strong $k$-design to a single task. By \cref{pro:mixed_query_reduction}, it is sufficient to find an ensemble $\mathcal{E}$ of depth $O(\log n)$ satisfying two properties. First, $\mathcal{E}$ must be invariant under conjugation, transposition, and random single-qubit Pauli rotations at the input and output circuit layers required by the mixed-query reduction. Second, for every nonidentity Pauli string, its Pauli transition distribution $p_{\mathcal{E}}(\cdot|P)$ must be within total variation distance $\eta$ of the uniform Haar distribution $\pi$.

We show that both requirements are met for the perfect-matching ensemble, illustrated in~\cref{fig:perfect-matching-layer}. In every layer, the qubits are paired uniformly at random via $n/2$ edges, and for every edge, a random two-qubit gate is applied. We now define the ensemble formally:

\begin{definition}[Perfect-matching ensemble]\label{def:pm-ensemble}
Let $n$ be even, and let $\mathcal{M}_{n}$ be the set of perfect matchings on
$[n]$.  A single layer of the perfect-matching ensemble $U_M\sim\mathcal{E}_{PM}^{(1)}$ is generated by sampling $M\sim\mathrm{Unif}(\mathcal{M}_{n})$ and applying
an independent Haar-random gate $U_{e}\in SU(4)$ to each edge 
$e\in M$. The resulting layer is
\begin{equation}\label{eq:pm-layer}
 U_{M}=\bigotimes_{e\in M}U_{e}.
\end{equation}
The depth-$T$ perfect-matching ensemble $\mathcal{E}_{\rm PM}^{(T)}$ is the distribution of circuits $R=U_{M_T}\cdots U_{M_1}$ where the $T$ layers are sampled independently.
\end{definition}

The symmetry check in~\cref{pro:mixed_query_reduction} follows directly from the Haar invariance of the two-qubit gates. Both complex conjugation and transposition preserve the local gate distribution; transposition additionally reverses the order of the layers, but this does not change the distribution of the whole circuit because the layers are independent and identically distributed. The same Haar invariance also handles Pauli rotations at the boundaries, since a tensor-product Pauli factorizes over the edges of the first or last matching and can be absorbed into the adjacent two-qubit gates. Thus, $\mathcal E_{\mathrm{PM}}^{(T)}$ satisfies the assumptions of~\cref{pro:mixed_query_reduction}.

Because a perfect matching exists only when $n$ is even, we formulate the results below for even system size. Our main technical result shows that, starting from any nonidentity Pauli string, the induced Pauli Markov chain approaches the uniform distribution $\pi$ after only logarithmically many layers:

\begin{restatable}[Uniform Pauli mixing for the perfect-matching ensemble]
    {theorem}{paulimixing}
\label{thm:mixing}
For the perfect-matching ensemble $\mathcal{E}_{PM}^{(T)}$, there exist
universal constants $C_{\rm mix}\geq 1$, $0<c_{\rm mix}\leq 1$, and
$n_{\rm mix}\in\mathbb{N}$ such that the following holds. For every even
integer $n\geq n_{\rm mix}$ and every accuracy parameter $e^{-c_{\rm mix}n}\leq\eta\leq 1/2$, if
\begin{equation}
\label{eq:t1}
    T \;\geq\;
    \left\lceil C_{\rm mix}\log\frac{n}{\eta}\right\rceil,
\end{equation}
then
\begin{equation}
\label{eq:t2}
    \max_{P\in\cP_n^\star}
    \TVD\!\left(
        p_{\cE_{\PM}^{(T)}}(\,\cdot\mid P),\,
        \pi
    \right)
    \;\leq\;
    \eta.
\end{equation}
\end{restatable}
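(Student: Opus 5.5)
The plan is to reduce the Pauli transition distribution to a classical Markov chain on \emph{supports}, and then to bound its mixing time by a monotone coupling argument in two phases: a growth phase and a coalescence phase.

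\textbf{Step 1 (reduction to supports).} Fix one layer $U_M$. By Haar invariance of each $U_e\in SU(4)$, conjugation sends a two-qubit Pauli on the pair $e$ as follows:
\begin{itemize}
\item $II$ is mapped to $II$;
\item any nonidentity two-qubit Pauli is mapped to a \emph{uniformly random} element of the $15$ nonidentity two-qubit Paulis.
\end{itemize}
Different edges act independently. Hence $p_{\cE_{\PM}^{(T)}}(\cdot\mid P)$ is the law at time $T$ of a Markov chain on $\cP_n^\star$. The input $P$ matters only through its support $S_0=\mathrm{supp}(P)$, and only after the first layer.

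Moreover, conditional on the support after the last layer, the letters on the support are i.i.d. uniform on $\{X,Y,Z\}$. This holds because on each active pair the output is uniform over the $15$ nonidentity strings, and conditioning on which sites are nonidentity leaves i.i.d. uniform letters on those sites. The Haar target $\pi$ has the same structure: its support is distributed as $\mathrm{Bin}(n,3/4)$ conditioned on being nonempty, and letters are uniform on that support. Therefore
\[
\TVD\bigl(p_{\cE_{\PM}^{(T)}}(\cdot\mid P),\pi\bigr)
\]
equals the total variation distance between the support-chain law $\mu_T^{S_0}$ and the support marginal $\pi_{\rm supp}$ of $\pi$.

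Finally, $\pi$ is stationary for one layer, since it is the image of Haar measure and each layer composed with Haar is Haar. Hence $\pi_{\rm supp}$ is stationary for the support chain.

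\textbf{Step 2 (monotone grand coupling).} One step of the support chain is realized as follows.
\begin{itemize}
\item Draw a single matching $M$ shared by all copies of the chain.
\item For every edge $e=\{i,j\}$, draw one uniform variable $u_e\in[0,1]$.
\item If $e$ is active (it meets the current support), map $u_e$ monotonically to an output pattern on $e$: $\{i\}$, $\{j\}$ and $\{i,j\}$ with probabilities $3/15$, $3/15$, $9/15$. Arrange this so that the region of $u_e$ assigned to $\{i,j\}$ is an interval placed in a fixed position.
\end{itemize}
If $S\subseteq S'$, then every edge active for $S$ is active for $S'$, and active edges receive identical outputs in both copies. Only edges that are active for $S'$ but not for $S$ can differ, and on them $S$ outputs $\emptyset$. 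Hence $S_t\subseteq S'_t$ for all $t$, so the coupling is monotone.

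Take the top chain started at $[n]$, which dominates every start $S_0$. Then for any starting support $S_0$,
\[
\TVD(\mu_T^{S_0},\pi_{\rm supp})\le \Pr\bigl[S_T^{S_0}\neq S_T^{[n]}\bigr]+\TVD(\mu_T^{[n]},\pi_{\rm supp}).
\]
For the second term, start a stationary copy and couple it with the top chain in the same way. Because the stationary copy is contained in the top chain, the same coalescence estimate from Step 4 applies to it. So everything reduces to bounding the probability that two nested chains $S\subseteq S'$ have not coalesced by time $T$.

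\textbf{Step 3 (growth phase, $O(\log n)$ layers).} Let $w_t=|S_t|$ for the bottom chain.
\begin{itemize}
\item \emph{Drift.} Each support site is matched to an outside site with probability $(n-w)/(n-1)$. That edge then becomes a weight-$1$ or weight-$2$ pattern with expected weight $24/15$. This gives
\[
\E[w_{t+1}\mid w_t]\ge \bigl(1+c\bigr)\,w_t
\]
for a constant $c>0$ whenever $w_t\le n/2$. Moreover, $w_t\ge 1$ always.
\item \emph{Concentration.} Because matchings are nearly independent edge by edge, I would use a bounded-differences inequality, or a switching/negative-association argument for uniform matchings. This shows that $w_t$ reaches $\alpha n$, for a constant $\alpha$ slightly below $3/4$, within $C\log n$ layers except with probability $\eta/4$. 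For small $w$ the fluctuations are large relative to $w$, but the probability that $w$ fails to grow in several consecutive layers decays geometrically, since each support site independently has a constant chance of creating a new site. This costs only an additive $O(\log(1/\eta))$.
\item \emph{Staying in the bulk.} After $w_t\ge\alpha n$, the weight stays in a window around $3n/4$ except with probability $e^{-\Omega(n)}\le\eta/4$ per polynomially many steps. This is where the assumption $\eta\ge e^{-c_{\rm mix}n}$ enters.
\end{itemize}

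\textbf{Step 4 (coalescence phase, main obstacle).} With both nested chains in the bulk, let $D_t=S'_t\setminus S_t$. A site $i\in D_t$ is matched with probability about $w/n\approx 3/4$ to a partner $j\in S_t$. The edge is then active in both copies, and the shared $u_e$ gives identical outputs, so $i$ leaves $D$.

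New discrepancies arise only on edges active for $S'$ but not for $S$, which are edges containing a $D$-site and no $S$-site. Each $D$-site generates at most $2$ new $D$-sites, and it lies on such an edge with probability about $1/4$. This gives
\[
\E\bigl[\,|D_{t+1}|\ \big|\ D_t\bigr]\le \tfrac{24}{15}\cdot\bigl(\tfrac14+o(1)\bigr)\,|D_t|\le \rho\,|D_t|
\]
with $\rho<1$. Iterating, $\Pr[D_T\neq\emptyset]\le \E|D_T|\le n\rho^{T'}$, which is at most $\eta/4$ after $T'=O(\log(n/\eta))$ further layers.

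The delicate points are:
\begin{itemize}
\item making the contraction factor $\rho$ rigorously below $1$ uniformly while $w$ fluctuates, possibly by tuning the placement of the intervals of $u_e$ so that more output patterns coincide;
\item controlling dependence between sites coming from the shared matching.
\end{itemize}
If the crude constant fails, I would instead analyze the support chain with a two-step contraction, using a potential $\sum_{i\in D}$ weighted by a neighbourhood factor.

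Summing the failure probabilities from Steps 3 and 4 gives the bound~\eqref{eq:t2} for $T\ge C_{\rm mix}\log(n/\eta)$, which is~\eqref{eq:t1}.
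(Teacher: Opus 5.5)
Your plan is correct and is essentially the paper's proof: reduction to the support chain with $\pi_{\rm supp}(A)=3^{|A|}/(4^n-1)$, a monotone grand coupling driven by shared matchings and shared edge outcomes, a growth phase (multiplicative drift, switching/bounded-differences concentration for the matching, escape from small supports), dense persistence paid for by $\eta\ge e^{-c_{\rm mix}n}$, and first-moment contraction of the discrepancy followed by Markov's inequality. The differences are cosmetic. You pass through the top chain with a triangle inequality, whereas the paper sandwiches every chain between the $n$ singleton chains and the full chain. The ``delicate points'' you flag also do not arise: linearity of expectation alone gives $\mathbb E[d_{t+1}\mid A_t,\widetilde A_t]\le\frac{8(n-a)}{5(n-1)}d_t\le\frac{24}{25}d_t$ once $a\ge n/2$, so you need neither a bulk near $3n/4$ nor a tuned coupling, and the dense condition is imposed by weighting with a past-measurable ``dense-so-far'' indicator.
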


The uniformity over $P$ is the essential point, as \cref{thm:mixing} controls the worst-case nonidentity input Pauli string, precisely as required by~\cref{pro:mixed_query_reduction}. The proof is given in Appendix~\ref{app:PauliMixing}, and its main ingredients are summarized in Section~\ref{sec:methods}.

We next combine this perfect-matching layer, which controls the mixed-sign queries in logarithmic depth, with an independent weak relative-error $2$-design that controls the same-sign sector. Such weak designs are also available in logarithmic depth through weak gluing on the brickwork construction result of Ref.~\cite{SchusterHaferkampHuang2025}. We denote the corresponding ensemble as $\nu_{n,\delta}$ (see
\cref{thm:shh-lowdepth} in Appendix~\ref{app:apendix_relativeerr_twodesign}). The composition of the two query sectors is \cref{lem:composition}, and yields a $2$-design in measurable error:

\begin{theorem}[Strong $2$-design shell]
\label{thm:shell}
Let $n\geq n_{\rm mix}$ be even, and let
$e^{-c_{\rm mix}n}\leq\eta\leq1/2$, with $0 <\delta\leq1$. Suppose that
\begin{equation}
    T
    \geq
    \left\lceil
        C_{\rm mix}\log\frac{n}{\eta}
    \right\rceil.
    \label{eq:shell-depth-condition}
\end{equation}
Let $V\sim\cE_{\PM}^{(T)}$, and let $W\sim\nu_{n,\delta}$ be sampled independently. Denote by $\cE_{\PM}^{(T)}*\nu_{n,\delta}$ the distribution of $U=VW$. Then $\cE_{\PM}^{(T)}*\nu_{n,\delta}$ is a strong approximate unitary $2$-design with measurable error
\begin{equation}
    \operatorname{Err}_{2}^{\rm full}
    \bigl(
        \cE_{\PM}^{(T)}*\nu_{n,\delta}
    \bigr)
    \leq
    \max\{2\eta,2\delta\}.
    \label{eq:shell-error}
\end{equation}
In particular, there exist universal constants $C_{\rm sh}\geq1$,
$0<c_{\rm sh}\leq1$, and $n_{\rm sh}\in\mathbb N$ such that, for
every even $n\geq n_{\rm sh}$ and every
$2e^{-c_{\rm sh}n}\leq\varepsilon_{2}\leq1$, the choices
$\eta=\delta=\varepsilon_{2}/2$ and
$T=\lceil C_{\rm mix}\log(2n/\varepsilon_{2})\rceil$ yield
measurable error at most $\varepsilon_{2}$ at total circuit depth
at most
\begin{equation}
    C_{\rm sh}\log\frac{n}{\varepsilon_{2}}.
    \label{eq:shell-total-depth}
\end{equation}
\end{theorem}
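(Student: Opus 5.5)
The plan is to obtain the shell guarantee by composing the two query sectors via \cref{lem:composition}, with $\mu_1=\cE_{\PM}^{(T)}$ and $\mu_2=\nu_{n,\delta}$.

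\textbf{Mixed-sign sector.} First verify that $\cE_{\PM}^{(T)}$ meets the symmetry hypotheses of \cref{pro:mixed_query_reduction}. This is the Haar-invariance argument already given after \cref{def:pm-ensemble}:
\begin{itemize}[label={}, leftmargin=0pt]
\item \emph{Conjugation:} it maps each $U_e$ to a Haar-distributed $U_e^*$.
\item \emph{Transposition:} it transposes each gate and reverses the i.i.d.\ layer order, which leaves the distribution unchanged.
\item \emph{Boundary Paulis:} these factorize over the edges of $M_1$ or $M_T$ and are absorbed into the adjacent gates.
\end{itemize}
\cref{pro:mixed_query_reduction} then bounds the half-trace-distance error against mixed-sign experiments by $\max_P \TVD(p_{\cE_{\PM}^{(T)}}(\cdot|P),\pi)$. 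By \cref{thm:mixing}, this is at most $\eta$ under \eqref{eq:shell-depth-condition}. Converting to the full trace norm gives $\varepsilon_1=2\eta$.

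\textbf{Same-sign sector and composition.} The weak relative-error $2$-design $\nu_{n,\delta}$ is taken with relative error $\delta$, as defined in \cref{thm:shh-lowdepth}. \cref{lem:composition} then yields measurable error $\max(\varepsilon_1,2\delta)=\max\{2\eta,2\delta\}$. Here one must check a normalization point: the same-sign bound $2\varepsilon_2$ is already quoted in full trace norm (Lemma~6 of Ref.~\cite{SchusterEtAlStrong2025}), so both entries of the max must be expressed in the same norm. The convolution order also needs care, since $U=VW$ applies $W$ first. Because strong-design security is symmetric under reversing the order of independent factors (transposition maps $VW$ to $W^TV^T$, and both ensembles are transpose-invariant), \cref{lem:composition} applies in either order. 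I expect this bookkeeping, rather than any new estimate, to be the main point requiring care.

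\textbf{Parameter choice.} Set $\eta=\delta=\varepsilon_2/2$. Then $\eta\le 1/2$, and $\eta\ge e^{-c_{\rm mix}n}$ holds whenever $\varepsilon_2\ge 2e^{-c_{\rm sh}n}$ with $c_{\rm sh}\le c_{\rm mix}$. Taking $T=\lceil C_{\rm mix}\log(2n/\varepsilon_2)\rceil$ satisfies \eqref{eq:shell-depth-condition}. The depth of $\nu_{n,\delta}$ from \cref{thm:shh-lowdepth} is $O(\log(n/\delta))$, which must be confirmed from the imported statement, possibly with a lower bound on $n$ and an admissible range of $\delta$. Adding the two depths and using $\log(2n/\varepsilon_2)\le 2\log(n/\varepsilon_2)$ for $n\ge 2$ gives total depth $C_{\rm sh}\log(n/\varepsilon_2)$. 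Finally, choose $n_{\rm sh}\ge n_{\rm mix}$ large enough to also meet the threshold of \cref{thm:shh-lowdepth}.
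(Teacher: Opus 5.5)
Your proposal is correct and follows the paper's proof essentially step for step: the symmetry check, \cref{pro:mixed_query_reduction} combined with \cref{thm:mixing} to get $\varepsilon_1=2\eta$ in full trace norm, \cref{lem:composition} with the weak relative-error design $\nu_{n,\delta}$ to get $\max\{2\eta,2\delta\}$, and the same parameter and depth bookkeeping with $c_{\rm sh}\le c_{\rm mix}$ and $n_{\rm sh}\ge n_{\rm mix}$. The convolution-order worry is unnecessary, since the paper defines $\mu_1*\mu_2$ as the law of $U=VW$ with $V\sim\mu_1$, $W\sim\mu_2$, which is exactly your setup; note also that the paper only asserts that $\nu_{n,\delta}$ is invariant under conjugation, not transposition, although a reversal would be harmless anyway because relabelling Choi wires preserves weak relative error.
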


The proof is given in Appendix~\ref{app:proof-strong-two-shell}. \cref{thm:shell} supplies the logarithmic-depth global strong $2$-design shells required by the gluing construction. Sandwiching two brickwork layers of local strong $k$-designs between two independent strong 2-design shells allows to apply \cref{lem:gluing_strong_random_unitaries} iteratively, forming an extensive strong $k$-design (see the imported \cref{lem:patch_protocol}). This yields our main result:

\begin{theorem}[Strong $k$-design in logarithmic depth]\label{thm:strong-k}
There exist universal constants $0<c_0\leq 1$, $A,C\geq 1$, and $\xi_0\in\mathbb N$ such that the following holds. Let
\begin{equation}\label{eq:strong-k-parameter-range}
    n,k\in\mathbb N,
    \qquad
    n\ \text{is even},
    \qquad
    k\geq 2,
    \qquad
    0<\varepsilon\leq\frac12,
    \qquad
    \log\frac{nk}{\varepsilon}\leq c_0n.
\end{equation}
Define
\begin{equation}
    \xi := \left\lceil \frac{16}{3}\log_2\left(\frac{Ank^2}{\varepsilon}\right)\right\rceil+\xi_0,
    \qquad
    m:=\lfloor\frac{n}{\xi}\rfloor\in\mathbb{N}.
    \label{eq:strong-k-equal-patches}
\end{equation}
Then there exists an all-to-all circuit ensemble acting only on the $n$ physical qubits that forms a strong $\varepsilon$-approximate unitary $k$-design in measurable error and has depth
\begin{equation}\label{eq:strong-k-depth}
    d
    \leq
    C\bigl(1+k\log^7(2k)\bigr)
    \log\frac{nk}{\varepsilon}.
\end{equation}
For fixed $k$ and fixed measurable-error tolerance $\varepsilon < 1/4$, the upper bound is $O(\log n)$. The all-to-all two-qubit-gate model has a matching $\Omega(\log n)$ lower bound.
\end{theorem}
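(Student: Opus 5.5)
The construction is the architecture of \cref{fig:strong-k}: $U=D\,B_2\,B_1\,C$, where $C$ and $D$ are independent copies of the strong $2$-design shell of \cref{thm:shell} with accuracy $\varepsilon_2$. The middle layers $B_1,B_2$ are two brickwork macro-layers of local strong $k$-designs on blocks of $2\xi$ qubits (the last block has $2\xi+\zeta$ qubits), taken from \cref{lem:1D_strong_unitary_designs}. I will fix the accuracy targets first and check depth and error afterwards.

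\textbf{Step 1: parameters.} Set the local block error to $\varepsilon_{\rm loc}=\varepsilon/(4m)$. Relative error implies measurable error at most $2\varepsilon_{\rm loc}$ by Lemma~6 of Ref.~\cite{SchusterEtAlStrong2025}. A block acts on at most $3\xi$ qubits, so \cref{lem:1D_strong_unitary_designs} applies provided $\varepsilon_{\rm loc}\geq 2k^2/2^{2\xi}$. This holds because $\xi\geq\frac{16}{3}\log_2(Ank^2/\varepsilon)$ gives $2^{2\xi}\gg nk^2/\varepsilon\geq mk^2/\varepsilon$. The required block depth is $C_{\rm 1D}\log^7(k)(3\xi k+\log(4m/\varepsilon))=O(k\log^7(2k)\log(nk/\varepsilon))$. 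Blocks in one macro-layer are disjoint, so each macro-layer has this depth.

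For the shells, choose $\varepsilon_2$ so that the shell term of the gluing bound, summed over the $m-1$ gluing steps, is at most $\varepsilon/4$. That is, $m\,k^{5/8}\varepsilon_2^{1/8}\lesssim\varepsilon$, so I take $\varepsilon_2=(\varepsilon/(A' m k))^{8}$. Then $\log(n/\varepsilon_2)=O(\log(nk/\varepsilon))$, and \cref{thm:shell} gives shell depth $C_{\rm sh}\log(n/\varepsilon_2)=O(\log(nk/\varepsilon))$. The hypothesis $\varepsilon_2\geq 2e^{-c_{\rm sh}n}$ follows from $\log(nk/\varepsilon)\leq c_0 n$ once $c_0$ is small enough. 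The hypothesis $n\geq n_{\rm sh}$, together with $m\geq 2$, can also be forced: for small $n$ the constraint $\log(nk/\varepsilon)\leq c_0 n$ is vacuous or fails, or the constants $C$ and $\xi_0$ absorb those cases, for instance by using a single global 1D block of depth $O(nk\log^7 k)$ when $n$ is bounded.

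\textbf{Step 2: error via iterated gluing.} Apply the imported \cref{lem:patch_protocol}, which iterates \cref{lem:gluing_strong_random_unitaries}. The first step replaces two overlapping bricks by a Haar unitary on their union. The regrouping needs the fresh Haar block to be a strong $k$-design on the larger region, and the shells must be reused at each step. Over the $m-1$ steps the total error is
\[
\sum_{\rm bricks}2\varepsilon_{\rm loc}+O\!\left(m k^2 2^{-3\xi/16}\right)+O\!\left(m k^{5/8}\varepsilon_2^{1/8}\right).
\]
The first sum is at most $\varepsilon/2$. Since $2^{-3\xi/16}\leq\varepsilon/(Ank^2)$ and $m\leq n$, the second term is at most $\varepsilon/4$ when $A$ is large. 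The third term is at most $\varepsilon/4$ by the choice of $\varepsilon_2$. The final ensemble $D\,U_{[n]}\,C$ with $U_{[n]}$ Haar is exactly Haar by invariance, which gives measurable error at most $\varepsilon$. Here I must track the trace-distance normalization: the gluing lemma is stated in half trace distance while our claim uses $\operatorname{Err}^{\rm full}$, so I will halve the budgets accordingly.

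\textbf{Step 3: depth and optimality.} Adding two shells and two macro-layers gives
\[
d\leq 2C_{\rm sh}\log(n/\varepsilon_2)+2\,O\!\left(k\log^7(2k)\log(nk/\varepsilon)\right)\leq C\bigl(1+k\log^7(2k)\bigr)\log(nk/\varepsilon).
\]
The ensemble uses only the $n$ system qubits. \cref{pro:lower_bound} supplies the matching $\Omega(\log n)$ bound for $\varepsilon<1/4$.

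The main obstacle is Step 2. The imported iterated-gluing statement has to be matched to exactly these hypotheses: the remainder block of size $2\xi+\zeta$, the reuse of the same shells across all $m-1$ steps (rather than fresh shells per step), and consistent normalization of the errors. Most of the work will be checking that \cref{lem:patch_protocol}'s tolerances absorb these details, and tuning $A$ and $\xi_0$ so that all the side conditions hold simultaneously.
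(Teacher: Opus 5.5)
Your proposal is correct and follows the paper's proof: two independent copies of the shell of \cref{thm:shell} sandwich two macro-layers of parallel 1D strong $k$-design blocks from \cref{lem:1D_strong_unitary_designs}, glued via \cref{lem:patch_protocol}, with $c_0$ taken small enough to force $n\ge n_{\rm sh}$, $4\xi\le n$ (from $\xi\le C_\xi\log(nk/\varepsilon)\le C_\xi c_0 n$), and $\varepsilon_2\ge 2e^{-c_{\rm sh}n}$. The differences are bookkeeping: the paper gives \cref{lem:patch_protocol} exactly its stated tolerances (block relative error $\varepsilon/(6n)$, hence measurable error $\varepsilon/(3n)$, and $\varepsilon_2=\varepsilon^8/(Bn^8k^5)$), whereas your $\varepsilon/(4m)$ and $(\varepsilon/(A'mk))^8$ do not meet that lemma's hypotheses as written, so your error bound instead rests on your own hybrid sum over the gluing steps, absorbing the not-yet-glued bricks into the shared shells; that sum is sound, but quoting the lemma with its own parameters spares you the argument.
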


The constant $c_0$ is chosen small enough to ensure both that the patch size in \cref{eq:strong-k-equal-patches} satisfies $4\xi\leq n$ and that the required shell accuracy lies within the regime of \cref{thm:shell}. The circuit consists of two layers of local strong $k$-designs and two global strong $2$-design shells, with all local blocks and shells sampled independently. The factor $1+k\log^7(2k)$ in \cref{eq:strong-k-depth} is inherited from the available one-dimensional random-circuit construction of the local strong $k$-designs in \cref{lem:1D_strong_unitary_designs} and is not optimized here.
For fixed $k$ and fixed error, the upper bound is logarithmic in $n$ on the system sizes covered by \cref{thm:strong-k}; \cref{pro:lower_bound} gives the corresponding logarithmic lower bound, which establishes optimal asymptotic scaling $\Theta(\log n)$. The full proof is provided in \cref{app:proof-strong-k}.

\section{Methods and proof overview}\label{sec:methods}

The proof proceeds in four stages, each developed in a separate subsection.  In Section~\ref{sec:methods-reduction}, we reduce the robustness of the perfect-matching circuit against mixed-sign queries to a classical mixing problem for a Markov chain on Pauli strings. We then further simplify the chain onto the supports of the Pauli strings, viewed as nonempty subsets of $[n]$. In Section~\ref{sec:methods-coupling}, we analyze the resulting support chain using a grand coupling argument. This reduces the worst-case mixing problem over all $2^{n}-1$ nonempty initial supports to the coalescence of just $n+1$ coupled chains: the $n$ chains initialized from singletons of support $1$, and the chain initialized from full support. Section~\ref{sec:methods-coalescence} establishes this coalescence in two steps. First, we show that any singleton support grows to large size within $O(\log{n})$ steps with high probability. Second, we prove that, once a support is sufficiently large, it remains large and its discrepancy from the full-support chain contracts in logarithmic time. Finally, in Section~\ref{sec:methods-assembly}, we combine the perfect-matching circuit with a weak relative-error 2-design to obtain a strong 2-design in measurable error.
The gluing theorem then promotes this construction to a strong $k$-design in measurable error with the same logarithmic depth scaling. This section presents the main ideas of the argument, while the complete technical proofs are deferred to the corresponding appendices.
\subsection{From mixed queries to a classical Markov chain}
\label{sec:methods-reduction}

The perfect-matching ensemble $\mathcal{E}_{PM}^{(T)}$ from \cref{def:pm-ensemble} is invariant under conjugation, transposition, and random single-qubit Pauli rotations at the input and output circuit layers, required by \cref{pro:mixed_query_reduction} to hold. Controlling the mixed-sign sector therefore reduces to a classical Markov chain problem: we must show that, uniformly over all $P\in \mathcal{P}_n^*$, $p_{\mathcal{E}_{\rm PM}^{(T)}}(\,\cdot\,|\,P)$ approaches the Haar distribution $\pi$ in total variation distance after $T=O(\log(n))$ layers, even in the worst-case input string $P$.

Two structural properties, proved in Appendix~\ref{sec:The Markov Chains}, simplify this problem further. First, independent perfect-matching layers compose according to a classical Markov chain on Pauli strings. Second, after a single layer, the subsequent dynamics depend on the input Pauli only through its support: the local labels $X,Y,Z$ are completely randomized and retain no information about their initial values. The induced support update has then a particularly simple description. Fix a perfect-matching $M\sim \mathcal{M}_n$, and consider an edge $e\in M$. If the restriction of the current Pauli string to $e$ is $I\otimes I$, then both qubits remain outside the support. Otherwise, for an active
edge (one touching the support of $P$), Haar-random conjugation resamples the restriction uniformly from the $15$ nonidentity two-qubit Pauli strings. Among these, $9$ act nontrivially on both qubits, while $3$ act nontrivially only on either endpoint. Therefore, its two qubits join the new support together with probability $\tfrac{9}{15}=\tfrac35$, or one of them
alone with probability $\tfrac{3}{15}=\tfrac15$ each.  The support
$A_{t}=\mathrm{supp}(P_{t})$ is therefore itself a Markov chain on the nonempty subsets of $[n]$.

\begin{restatement}{Lemmas \ref{lem:pauli-chain-reduction},\ref{lem:stationarity},\ref{lem:tvd_equivalence} of Appendix~\ref{app:PauliMixing}, short form (reduction to a support Markov chain)}
Let $K_{n}$ denote the one-layer Pauli transition kernel, and let $\kappa_{n}$ denote the induced
kernel on nonempty supports $\mathcal{S}_{n}=\{A\subseteq[n]:A\neq\emptyset\}$.
Then for every $T\geq 1$ and every $P\in\mathcal{P}_{n}^{\star}$,
\begin{equation}\label{eq:support-reduction}
 p_{\mathcal{E}_{\rm PM}^{(T)}}(Q\,|\,P)=K_{n}^{T}(P,Q),
 \qquad
 \mathrm{TVD}\bigl(K_{n}^{T}(P,\cdot),\,\pi\bigr)
 =\mathrm{TVD}\bigl(\kappa_{n}^{T}(\mathrm{supp}(P),\cdot),\,\pi_{\rm supp}\bigr).
\end{equation}
The stationary distribution of the support chain is
$\pi_{\rm supp}(A)=3^{|A|}/(4^{n}-1)$. Equivalently, under $\pi_{\mathrm{supp}}$, the support size has the distribution of a
$\mathrm{Bin}(n,3/4)$ random variable conditioned to be nonzero.
\end{restatement}

The target stationary distribution 
has a simple interpretation: at equilibrium a random operator is typically supported on $3/4$ of the qubits. \cref{thm:mixing} is thus reduced to proving that the support chain reaches this equilibrium (Haar) profile from every nonempty initial support within $O(\log{n})$ layers.

\subsection{The grand coupling}\label{sec:methods-coupling}

Standard spectral-gap estimates cannot give a tight bound for Equation~\eqref{eq:support-reduction}~\cite{LevinPeresWilmer2017}. Converting a gap into total variation distance costs an exponential factor, so even a constant gap would only certify mixing at $T=O(n)$. We instead use a grand coupling argument for Markov chains~\cite{LevinPeresWilmer2017}, developed in detail in the Appendix~\ref{sec:Grand coupling between support chains}.

\begin{definition}[Grand coupling]\label{def:grand-coupling}
At each time $t$, sample a single matching $M_{t}\sim\mathcal{M}_n$. For every edge
$e=\{a,b\}\in M_{t}$, independently sample an outcome
\begin{equation}\label{eq:coupling-outcomes}
 Y_{t,e}=\{a,b\},\ \{a\},\ \{b\}
 \quad\text{with probabilities}\quad
 \tfrac35,\ \tfrac15,\ \tfrac15,
\end{equation}
respectively, shared by all initial supports. For a fixed realization, $w=\{M_t,Y_{t,e}\}_{0\leq t < T}$ of this shared randomness, each initial support $A_0$ evolves deterministically according to
\begin{equation}\label{eq:coupling-update}
 A_{t+1}=\Psi_{t}(A_{t})
 :=\bigcup_{\substack{e\in M_{t}\\ e\cap A_{t}\neq\emptyset}}Y_{t,e}.
\end{equation}
\end{definition}
Thus, each chain collects the sampled outcome of every active edge that touches it and ignores all inactive edges. Because every chain is driven by the same randomness, the coupled dynamics satisfy two deterministic properties, immediate
from \cref{eq:coupling-update}:

\begin{restatement}{Lemma \ref{lem:coupling_facts} of Appendix~\ref{sec:Grand coupling between support chains}, short form (coalescence and monotonicity)}
For every fixed realization $w$ of the shared randomness:
(a)~Coalescence: if $A_{t}=B_{t}$ then $A_{s}=B_{s}$ for all $s\geq t$.
(b)~Monotonicity: if $A_{t}\subseteq B_{t}$ then $A_{s}\subseteq B_{s}$ for all $s\geq t$.
\end{restatement}
Coalescence turns mixing into a collision problem: if we show that, apart from an exponentially small fraction of realizations $w$, all initial supports collide in $T=O(\log(n))$ time, such a collision will evolve identically for any further time (see \cref{fig:grand-coupling-overview}). Monotonicity then reduces the number of initial conditions that must be tracked. Let $A_{T}^{(i)}$ denote the chain initialized at singleton $(i)$, let $A_{T}^{(\mathrm{full})}$ denote the chain initialized at $([n])$, and let $A_{T}^{(S)}$ denote the chain initialized at an arbitrary nonempty support $(S)$. For every $i\in S$,  $A_{T}^{(i)}\subseteq A_{T}^{(S)}\subseteq A_{T}^{(\mathrm{full})}$ at all times $t$. We also write $A_t^{(\pi)}$ for a chain whose initial support is sampled from $\pi_{\rm supp}$. Then, we obtain:

\begin{restatement}{Lemmas \ref{lem:coupling_inequality}-\ref{lem:singleton_reduction} of Appendix~\ref{sec:Grand coupling between support chains}, short form (coupling inequality and reduction to singletons)}
For every $A\in\mathcal{S}_{n}$ and every $T\geq0$,
\begin{equation}\label{eq:coupling-inequality}
 \mathrm{TVD}\bigl(\kappa_{n}^{T}(A,\cdot),\,\pi_{\rm supp}\bigr)
 \;\leq\;\Pr\bigl[A_{T}\neq A_{T}^{(\pi)}\bigr]
 \;\leq\;\sum_{i=1}^{n}\Pr\bigl[A_{T}^{(i)}\neq A_{T}^{(\mathrm{full})}\bigr].
\end{equation}
\end{restatement}
The worst-case mixing problem over all $2^{n}-1$ nonempty initial supports is therefore reduced to $n+1$ coupled chains: the $n$ initial singletons and the full-support chain.  Once every singleton chain has coalesced with the full-support chain, monotonicity forces every chain initialized between them to follow the same trajectory, independently on its initial support. 

\begin{figure}[t]
    \centering

    \begin{subfigure}[t]{0.48\textwidth}
        \centering
        \includegraphics[width=\linewidth]{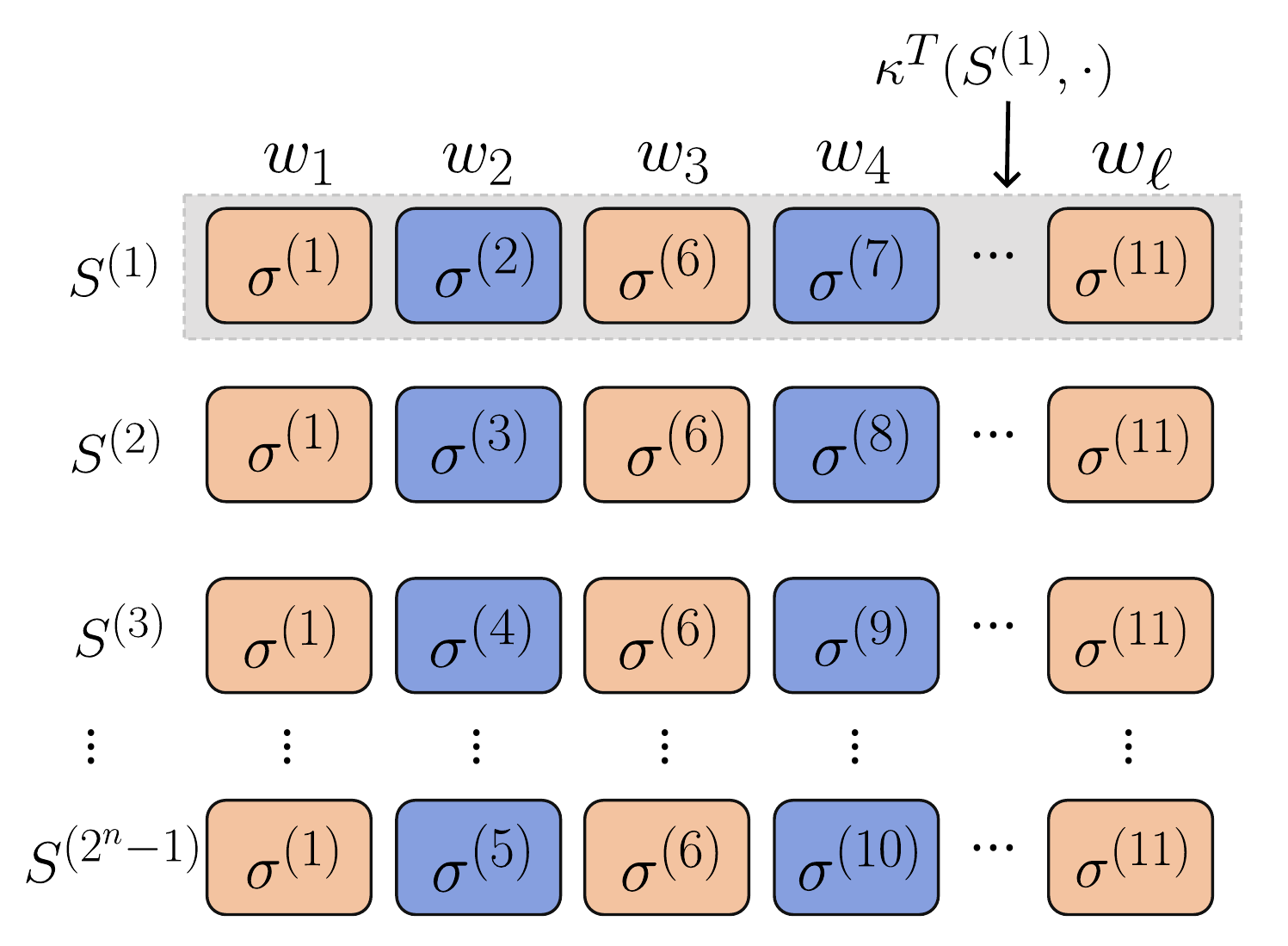}
        \caption{The grand coupling strategy}
        \label{fig:grand-coupling-overview-panel-a}
    \end{subfigure}
    \hfill
    \begin{subfigure}[t]{0.48\textwidth}
        \centering
        \includegraphics[width=\linewidth]{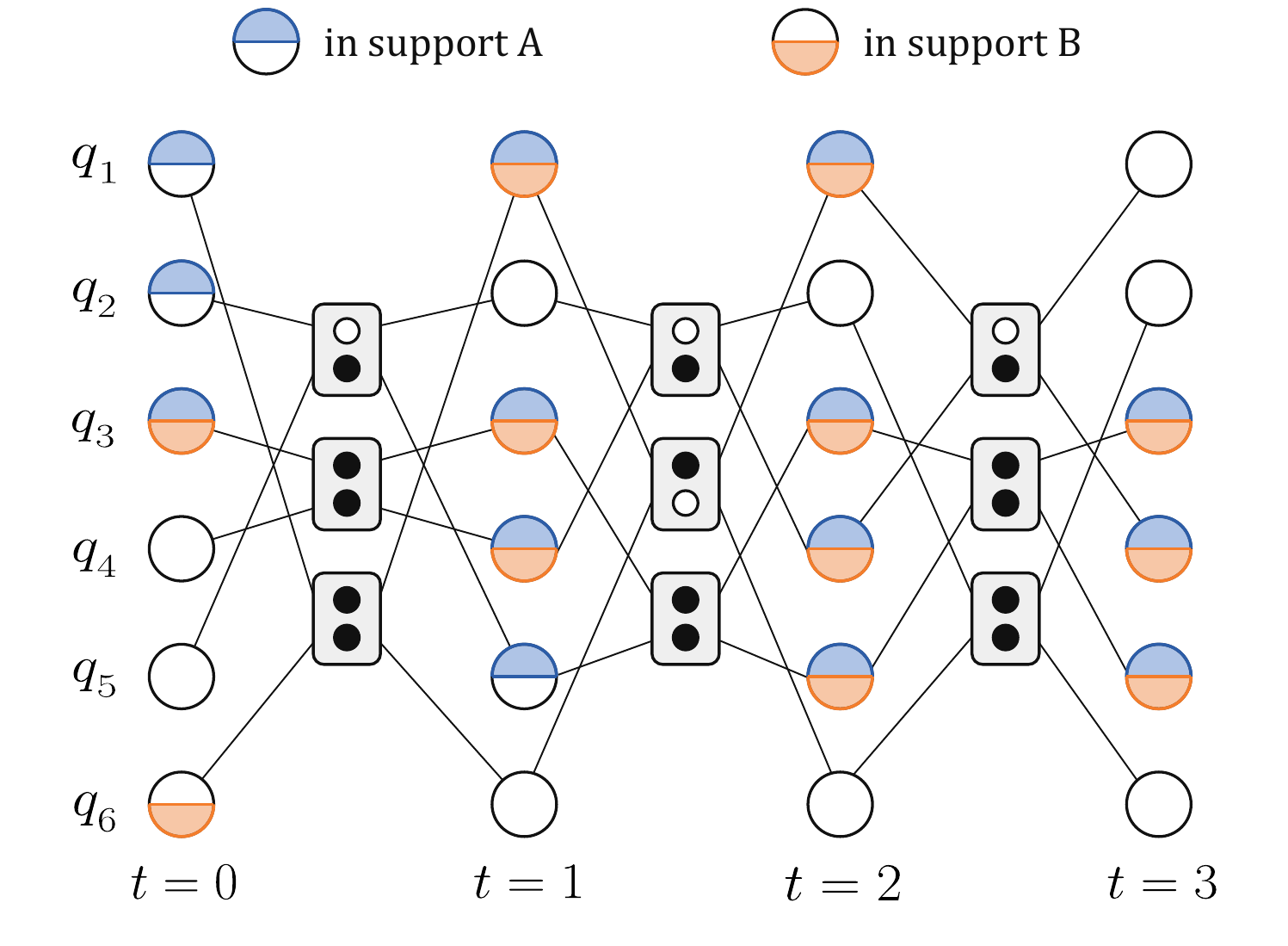}
        \caption{Exemplified deterministic process}
        \label{fig:grand-coupling-overview-panel-b}
    \end{subfigure}
    \caption{(a)~Schematic of the grand coupling. Each row is indexed by one of the $2^n-1$ supports $S^{(i)} \in \mathcal{S}_n$, while each column corresponds to a realization $w_j=\{M_t,(Y_{t,e})_{e\in M_t}\}_{0\leq t < T}$ of the shared randomness. The entry $\sigma^{(i)}$ is the support obtained by evolving the initial row-index support $S^{(i)}$ under the column-index deterministic process $w_j$. Instead of comparing the probability distributions of each support after going through $T$-steps of the Markov chain (rows), we fix the possible random deterministic instances $w$ of the chain, and study the deterministic evolution of all supports under such realization (columns). By showing that outside an exponentially small fraction of columns all the elements of the column are equal (orange columns), we may infer properties about the probability distributions in the rows. (b)~Example of a fixed realization $w_j=\{M_t,(Y_{t,e})_{e\in M_t}\}_{0\leq t < T}$ for $n=6$ and two initial supports $A=\{1,2,3\}$ (blue) and $B=\{3,6\}$ (orange). The fixed $Y_{t,e}$ schedule updates the supports at each layer: if an active edge is $\{a,b\}=:\{\bullet,\bullet\}$, it outputs support in both endpoints, while $\{a\}=:\{\bullet,\circ\}$ and $\{b\}=:\{\circ, \bullet\}$ only in one. The two chains evolve deterministically under the same process $w_j$, and coalesce at $t=2$. By the coalescence property, they remain identical at all subsequent times.}
    \label{fig:grand-coupling-overview}
\end{figure}

\subsection{Coalescence in logarithmic time}\label{sec:methods-coalescence}

We now show that, with probability at least $1-\eta$, all $n$ singleton chains coalesce with the full-support chain within $T=O(\log(n/\eta))$ layers. As illustrated in \cref{fig:PauliMarkovMixing}, the proof proceeds in two phases, each within $O(\log{(n)})$ layers. First, a chain initialized from singletons reaches support size at least $n/2$. This growth phase is analyzed in Appendix~\ref{sec:growth}. Second, once the support is large, it remains large with high probability, while its discrepancy from the full-support chain contracts geometrically. This coalescence phase is treated in Appendix~\ref{sec:uniformmix}.
\subsubsection{Phase 1: growth}
Let $H_{t}=|A_{t}|$ denote the support size at time $t$, and let $J_{t}$ be the number of edges of the sampled matching $M_t$ whose two endpoints both lie in $A_{t}$. The number of active edges is then $R_{t}=H_{t}-J_{t}$. Each active edge contributes at least one qubit to the next support, and with
probability $3/5$ to both. Consequently, conditional on the current support and matching, the exact one-step structure is
\begin{equation}\label{eq:overview-one-step}
 H_{t+1}=R_{t}+B_{t},\qquad B_{t}\sim\mathrm{Bin}\bigl(R_{t},\tfrac35\bigr).
\end{equation}
The single-step weight update from Eq.~\eqref{eq:overview-one-step} is therefore a probability distribution, which can at most change the support by a factor of two. Let $\mathcal{F}_t$ denote the information revealed before sampling the $t$-th layer, which we define precisely in the Appendix~\ref{sec:growth}. We first control the multiplicative drift of a single step by averaging over the random matchings:

\begin{restatement}{Lemma \ref{lem:drift} of Appendix~\ref{sec:growth-step}, short form (multiplicative drift)}
Conditional on $\mathcal{F}_{t}$, $\E[J_{t}\,|\,\mathcal{F}_{t}]=H_{t}(H_{t}-1)/(2(n-1))$. Consequently,
\begin{equation}\label{eq:overview-drift}
 \E[H_{t+1}\,|\,\mathcal{F}_{t}]
 =\tfrac85H_{t}\Bigl(1-\tfrac{H_{t}-1}{2(n-1)}\Bigr)
 \;\geq\;\tfrac65\,H_{t}
 \qquad\text{while }H_{t}\leq n/2.
\end{equation}
\end{restatement}

\begin{figure}[t]
    \centering

    \begin{subfigure}[t]{0.85\linewidth}
        \centering
        \includegraphics[width=\linewidth]{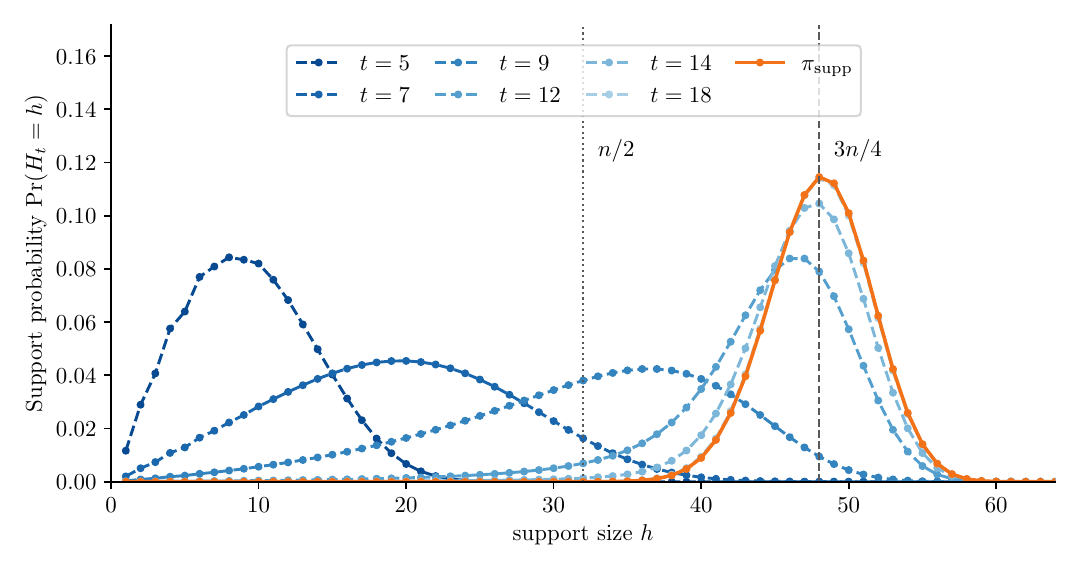}
        \caption{Singleton mixing towards the uniform distribution}
        \label{fig:PauliMarkovMixing-a}
    \end{subfigure}

    \vspace{0.5em}

    \begin{subfigure}[t]{0.42\linewidth}
        \centering
        \includegraphics[width=\linewidth]{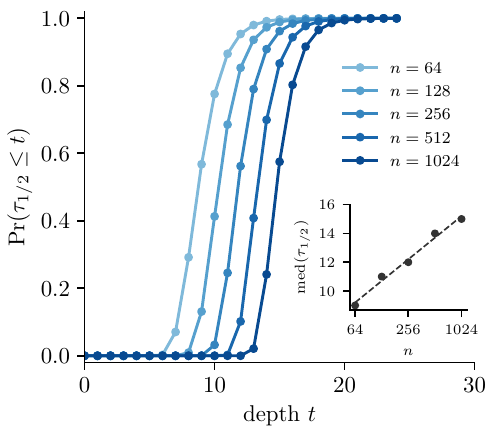}
        \caption{Phase 1: growth}
        \label{fig:PauliMarkovMixing-b}
    \end{subfigure}
    \hspace{5mm}
    \begin{subfigure}[t]{0.42\linewidth}
        \centering
        \includegraphics[width=\linewidth]{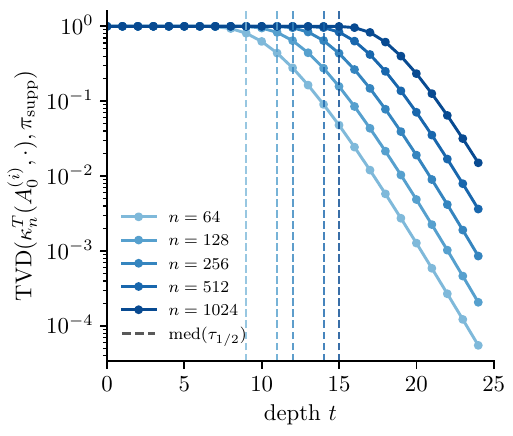}
        \caption{Phase 2: coalescence}
        \label{fig:PauliMarkovMixing-c}
    \end{subfigure}

    \caption{
    (a) Mixing of the support chain induced by the perfect-matching circuit, exemplified numerically for $n=64$. An initial singleton evolves towards the uniform support distribution $\pi_{\rm supp}$ after $O(\log n)$ steps. (b) Starting from a singleton of support size $H_0=1$, the support density grows at each step. The figure shows the cumulative probability  $Pr(\tau_{1/2}\leq t)$ of achieving the hitting time after $t$ steps in \cref{lem:hitting-linear}. After $O(\log n)$ steps, the cumulative probability contracts geometrically towards 1 with further steps. The inset, which has the x axis in $\log$ scale, shows the $O(\log n)$ scaling of the median hitting time $med(\tau_{1/2})$, approximately equal to the first $t$ when $Pr(\tau_{1/2}\leq t)>1/2$. 
    (c) After the growth phase, dense supports remain
    dense up to $e^{-\Omega(n)}$ corrections (\cref{lem:dense-persistence}) while discrepancies
    between coupled chains contract geometrically per layer (\cref{lem:discrepancy}). The error in total variational distance between the chain and the uniform distribution decays exponentially after the hitting time.
    }
    \label{fig:PauliMarkovMixing}
\end{figure}

In Appendix~\ref{sec:growth-step}, we identify the two sources of randomness that fluctuate around the drift. The first comes from the random matching $M_t$, which determines the number $R_t$ of active edges and is analyzed in Lemma~\ref{lem:exposure}. Conditional on the matching, the second source is the binomial randomness in the edge updates, controlled in Lemma~\ref{lem:growthstep}. We also account for the possibility that very small supports can temporarily fail to grow. For example, a singleton remains a singleton after one layer with probability $2/5$. Such stalling events are treated in Lemma~\ref{lem:escape}. A stopped-process argument combines these three into the following logarithmic hitting-time bound in Lemma \ref{lem:hitting-linear}:

\begin{restatement}{Lemma \ref{lem:hitting-linear} of Appendix~\ref{sec:growth-step}, short form (hitting half density)}
There exists a universal constant $C_{\rm hit}>0$ such that, for every chain initialized with $H_{0}=1$ and every $0<\delta\leq1/2$, the hitting time
$\tau_{1/2}=\inf\{t:H_{t}\geq n/2\}$ satisfies
\begin{equation}\label{eq:hitting}
 \Pr\Bigl[\tau_{1/2}>C_{\rm hit}\bigl(\log n+\log\tfrac1\delta+1\bigr)\Bigr]\leq\delta.
\end{equation}
\end{restatement}
Thus, any singleton grows with probability $1-\delta$ to size $\geq n/2$ in time $T=O(\log(n/\delta))$.

\subsubsection{Phase 2: persistence and contraction}

We next show that, once a support reaches size $\geq n/2$, it is unlikely to return to the sparse regime: 

\begin{restatement}{Lemma \ref{lem:dense-persistence} of Appendix~\ref{sec:uniformmix}, short form (dense persistence)}
On the event $H_{t}\geq n/2$,
$\ \Pr[H_{t+1}<n/2\,|\,\mathcal{F}_{t}]\leq2e^{-c_{\rm d}n}$ with the
explicit constant $c_{\rm d}=1/3200$.
\end{restatement}

Finally, we consider two coupled chains
$A_{t}\subseteq\widetilde{A}_{t}$. The discrepancy $d_{t}=|\widetilde{A}_{t}\setminus A_{t}|$ can persist only through a matching edge that is active for the upper chain $\widetilde A_t$ but inactive for the lower chain $A_t$. Once $A_t$ occupies at least half of the qubits, we show that such edges are sufficiently unlikely that the expected discrepancy contracts geometrically.

\begin{restatement}{Lemma \ref{lem:discrepancy} of Appendix~\ref{sec:uniformmix}, short form (gap contraction)}
If $|A_{t}|\geq n/2$ and $n\geq6$, then
$\ \E[d_{t+1}\,|\,A_{t},\widetilde{A}_{t}]\leq\varrho\,d_{t}$ with
$\varrho=24/25$.
\end{restatement}

Then, the proof of mixing combines growth, persistence, and contraction with appropriately chosen error budgets $\eta/16$ each. We first apply the hitting-time estimate to all $n$ singleton chains and take a union bound. We then control the probability that these chains return below half support during the contraction phase. This step introduces the restriction $\eta\geq e^{-\Omega(n)}$, since the per-layer failure probability is exponentially small in $n$. Finally, after the $O(\log(n/\eta))$ contraction
layers, Markov's inequality shows that every singleton chain has coalesced with the full-support chain, except with probability $O(\eta)$. The coupling bound in Eq.~\eqref{eq:coupling-inequality}, followed by the support reduction in Eq.~\eqref{eq:support-reduction}, then yields the desired uniform Pauli-mixing estimate:

\begin{restatement}{Theorem~\ref{thm:mixing} of Appendix~\ref{sec:uniformmix}
(uniform Pauli mixing)}
There exist universal constants $C_{\mathrm{mix}},c_{\mathrm{mix}}>0$ and $n_{\mathrm{mix}}$ such that the following holds. For every even $n\ge n_{\mathrm{mix}}$, every
$e^{-c_{\mathrm{mix}}n}\le\eta\le\tfrac12$, and every depth
$T\ge C_{\mathrm{mix}}\log\tfrac n\eta$,
\begin{equation}\label{eq:mixing}
  \max_{P\in\mathcal P_n^\star}\mathrm{TVD}\big(K_n^T(P,\cdot),\,\pi\big)\;\le\;\eta.
\end{equation}
\end{restatement}
Together with \cref{pro:mixed_query_reduction}, this establishes that the perfect-matching ensemble is robust against mixed-sign queries in measurable error.

\subsection{Assembling the strong scrambler}\label{sec:methods-assembly}

\cref{thm:mixing} and \cref{pro:mixed_query_reduction} together make the perfect-matching circuit robust in the mixed-sign sector. The remaining task is to assemble the entire strong $k$-design in measurable error. To do so, we reverse the reduction procedure we made in \cref{sec:composing_2} and \cref{sec:assembling}, which we illustrated in \cref{fig:three-subfigures}. 

The construction proceeds in two steps. First, Theorem~\ref{thm:mixing} and Proposition~\ref{pro:mixed_query_reduction} show that the perfect-matching ensemble controls the mixed-sign sector in logarithmic depth. We compose it with an independent weak relative-error $2$-design, which controls the same-sign sector. Applying Lemma~\ref{lem:composition} then yields the logarithmic-depth strong $2$-design shell of Theorem~\ref{thm:shell}. Its proof is given in Appendix~\ref{app:proof-strong-two-shell}.
Second, we place two such shells around the brickwork layers of local strong $k$-designs and apply the strong gluing lemma, \cref{lem:gluing_strong_random_unitaries}, iteratively. This produces the strong $k$-design in measurable error stated in \cref{thm:strong-k}, with total depth logarithmic in $n$ for fixed $k$ and fixed accuracy.

\section{Discussion}
\label{sec:discussion}
Our main theorems resolve the unitary-design part of the strong fast-scrambling conjecture introduced in Ref.~\cite{SchusterEtAlStrong2025}. For every fixed design order $k$ and fixed measurable-error, \cref{thm:strong-k} constructs strong approximate unitary $k$-designs on the original physical qubits of the system in depth $O(\log n)$. A light-cone argument gives the corresponding $\Omega(\log n)$ lower bound, which is saturated, and therefore proves the optimal depth $\Theta(\log n)$.

The result applies to fixed orientation query strategies, with arbitrary adaptive quantum operations and quantum memory between the queries. It therefore covers bounded-query oracle formulations of OTOCs and echo protocols~\cite{LarkinOvchinnikov1969,Kitaev2015,RobertsStanford2015,SwingleEtAl2016,GarttnerEtAl2017}, as well as Hayden--Preskill decoding~\cite{HaydenPreskill2007,YoshidaKitaev2017}, whenever they fit this comb model. The theorem does not by itself cover controlled queries, adaptive choices of query orientation, noisy oracles, or physical implementation.

The main technical result is uniform Pauli mixing for the perfect-matching circuit, which is then simplified to a classical Markov chain on Pauli supports. Random perfect matchings spread every nonidentity Pauli operator to macroscopic support, after which the discrepancy between coupled chains contracts toward the Haar-induced Pauli distribution. A grand-coupling argument makes this convergence uniform over the initial Pauli operator.
Combining this result with a weak relative-error design and the strong gluing construction yields strong designs of arbitrary fixed order.

The construction leaves open whether all of these additional ingredients are necessary. In particular, our proof does not determine whether the perfect-matching ensemble alone forms a strong approximate $2$-design in logarithmic depth. A direct higher-moment analysis might also avoid the present gluing architecture and improve the dependence on the design order. The factor $k\log^7(2k)$ in \cref{thm:strong-k} is inherited from strong gluing, and might leave margin for improvement.

The roles of the error model and the interaction geometry also deserve further study. Measurable error captures bounded-query operational tests while avoiding the linear-depth obstruction known for relative-error strong designs~\cite{SchusterEtAlStrong2025}. Intermediate notions of approximation may control broader classes of experiments without requiring full relative error. Likewise, extending the support-mixing argument to sparse expanders, small-world networks, and time-dependent interaction graphs could help identify which connectivity properties are sufficient for logarithmic-depth operator spreading and strong-design behavior.

Finally, the optimal depth of strong pseudorandom unitaries remains open.
The present theorem gives a logarithmic-depth information-theoretic strong-design construction, but it does not provide a keyed family or computational security against efficient distinguishers. A logarithmic-depth, system-only construction of strong pseudorandom unitaries would require a suitable keyed local primitive together with a security-composition theorem that remains valid under inverse, transpose, and complex-conjugate queries.

\section{AI disclosure}
The idea for this project—attacking the strong fast-scrambling conjecture by reducing the analysis of perfect-matching circuits to a Markov-chain mixing problem—was conceived at QIP 2026, and we, the human authors, have worked on the problem throughout the year. The conceptual development and the bulk of the mathematical work were carried out without assistance from LLMs. GPT-5.6 Pro was used only at the final stage to help fill gaps in the full proof of Pauli mixing.

\section{Acknowledgements}
We thank Antonio Acín, Fernando Brandão, Jens Eisert, Soumik Ghosh, Jonas Haferkamp, Hsin-Yuan Huang, Shivan Mittal, and Thomas Schuster for discussions.  
T.P.D. and J.B.R. have been supported by the Government of Spain (Severo Ochoa CEX2019-000910-S and FUNQIP), Fundació Cellex, Fundació Mir-Puig, Generalitat de Catalunya (CERCA program). T.P.D. has been supported by grant PREP2022-000452, funded by MCIN/AEI/10.13039/501100011033 and the European Social Fund Plus. J.B.R. has received funding from the “Secretaria d’Universitats i Recerca del Departament de Recerca i Universitats de la Generalitat de Catalunya” under grant FI-3 00096, as well as the European Social Fund Plus. A. A. M. and S.F.E.O. have been supported by the BMFTR (DAQC, MuniQC-Atoms, QuSol, Hybrid++, PasQuops), Clusters of Excellence (ML4Q, MATH+), the QuantERA, the Munich Quantum Valley, Berlin Quantum, the Quantum Flagship (Millenion, Pasquans2), the DFG (CRC 183, SPP 2514), the European Research Council (DebuQC), PraktiQOM. A.A.M. further acknowledges support from a 2025 Google PhD Fellowship.
\bibliography{ref}

\newpage

\appendix 
\resumetoc

\setcounter{secnumdepth}{2}
\setcounter{equation}{0}
\setcounter{figure}{0}
\setcounter{table}{0}
\setcounter{section}{0}
\renewcommand{\thetable}{A\arabic{table}}
\renewcommand{\theequation}{A\arabic{section}.\arabic{equation}}
\renewcommand{\thefigure}{A\arabic{figure}}
\renewcommand{\thesection}{A.\arabic{section}} 
\renewcommand{\thesubsection}{\thesection.\arabic{subsection}}
\renewcommand{\thesubsubsection}{\thesubsection.\arabic{subsubsection}}
\makeatletter
\renewcommand{\p@subsection}{}
\renewcommand{\p@subsubsection}{}
\makeatother
\begin{center}
\textbf{\large Appendix}
\end{center}
\tableofcontents 

\newpage
\section{Preliminaries}\label{app:preliminaries}

\subsection{Strong designs in measurable error and the comb formalism}\label{app:definitions}

\subsubsection{Moment channels. Additive and relative approximation errors}
Set \(d=2^n\).  For \(p,q\geq0\) with \(p+q=k\), we define the mixed-moment channel of an ensemble $\mathcal{E}$ as
\begin{equation}\label{eq:mixed-twirl}
 \Phi_{\cE}^{(p,q)}(X)
 :=\E_{U\sim\cE}\!\left[
 (U^{\otimes p}\otimes U^{*,\otimes q})\,X\,
 (U^{\dagger,\otimes p}\otimes U^{T,\otimes q})
 \right],
\end{equation}
acting on operators \(X\) on \((\mathbb{C}^{d})^{\otimes k}\). It describes a single parallel query round: \(p\) registers receive the sampled unitary \(U\) and the remaining \(q\) registers receive its complex conjugate \(U^{*}\).  For \(q=0\) it reduces to the usual \(k\)-th moment (twirling) channel, and an exact unitary \(k\)-design agrees with the Haar measure on it exactly. We can define the following two weak error notions, which quantify in different ways the error in the sector \((p,q)=(k,0)\):
 
\begin{definition}[Weak additive error]\label{def:weak-additive-error}
The \emph{weak additive error} $\varepsilon_a$ of \(\cE\) at order \(k\) is the diamond distance between the forward moment channels.
The ensemble \(\cE\) is a \emph{weak additive-error \(k\)-design} with error
\begin{equation}\label{eq:weak-additive}
 \varepsilon_a
 :=\bigl\|\Phi_{\cE}^{(k,0)}-\Phi_{\Haar}^{(k,0)}\bigr\|_{\diamond}
 =\sup_{\rho}\,
 \Bigl\|\Bigl(\bigl[\Phi_{\cE}^{(k,0)}-\Phi_{\Haar}^{(k,0)}\bigr]\otimes\mathbb{I}\Bigr)(\rho)\Bigr\|_{1},
\end{equation}
where the supremum runs over all density operators \(\rho\) on the query register together with an additional register.
\end{definition}
The most general such experiment prepares a state \(\rho\) on the \(k\) query copies and an arbitrarily entangled additional register, applies the sampled unitary to all \(k\) copies in parallel, and performs an arbitrary joint measurement.  By the Helstrom bound~\cite{Helstrom1976}, the optimal acceptance-probability gap between \(\cE\) and Haar over all such experiments equals one half of \cref{eq:weak-additive}.

However, the weak additive error only ensures security under parallel queries of the sampled unitary. Instead, the weak relative error represents the strongest notion of approximation error for unitary designs, ensuring adaptive security even to properties that cannot be efficiently measured:
 
\begin{definition}[Weak relative error]\label{def:weak-relative-error}
The ensemble \(\cE\) is a \emph{weak relative-error \(k\)-design} with error \(\varepsilon_r\geq0\) if
\begin{equation}\label{eq:weak-relative}
 (1-\varepsilon_r)\,\Phi_{\Haar}^{(k,0)}
 \;\preceq_{\rm CP}\;
 \Phi_{\cE}^{(k,0)}
 \;\preceq_{\rm CP}\;
 (1+\varepsilon_r)\,\Phi_{\Haar}^{(k,0)},
\end{equation}
where \(\Psi\preceq_{\rm CP}\Phi\) means that \(\Phi-\Psi\) is completely positive.  Equivalently, the Choi operators obey
\((1-\varepsilon_r)\,J\bigl(\Phi_{\Haar}^{(k,0)}\bigr)\leq J\bigl(\Phi_{\cE}^{(k,0)}\bigr)\leq(1+\varepsilon_r)\,J\bigl(\Phi_{\Haar}^{(k,0)}\bigr)\)
as operator inequalities.
\end{definition}
Both definitions are statements about $(k,0)$ queries, so these are the weak definitions of design. For completeness, if the conditions \cref{eq:weak-additive}, \cref{eq:weak-relative} are satisfied for any possible $(p,q)$ choice (with $p+q=k$), this represents the upgrade to their strong versions. The strong additive error is then given by
\begin{equation}
 \max_{p+q=k}\left\|\Phi_{\cE}^{(p,q)}-\Phi_{\Haar}^{(p,q)}\right\|_\diamond,
\end{equation}
while the strong relative error \(\varepsilon_r\) is
\begin{equation}
 (1-\delta)\Phi_{\Haar}^{(p,q)}\preceq_{\rm CP}
 \Phi_{\cE}^{(p,q)}\preceq_{\rm CP}
 (1+\delta)\Phi_{\Haar}^{(p,q)}
 \quad\text{for all }p+q=k.
\end{equation}

\subsubsection{Approximate strong designs in measurable error}
The relative error is a powerful condition, as it provides security even for events no physical experiment can efficiently resolve. For strong designs this strength is fatal, since relative error requires depth $\Omega(n)$ on any geometry~\cite{SchusterEtAlStrong2025}. The measurable error notion introduced in \cite{cui2025unitarydesignsnearlyoptimal} relaxes the relative error condition to adaptive security without the restriction of multiplicative error. In this appendix, we reformulate its definition from the point of view of quantum combs \cite{ChiribellaEtAl2009}.

Fix a choice of queries \(\bm s=(s_1,\ldots,s_{k})\in\{1,\dagger,T,*\}^k\).  A \(k\)-slot comb \(\cW\) has memory registers \(\mathsf{M}_0,\ldots,\mathsf{M}_k\), an initial state \(\rho_0\) on \(\mathsf{Q}\otimes \mathsf{M}_0\), interleaving channels with compatible input and output spaces, and a final measurement.  Writing
\[
 \mathcal V_{s_j}^{U}(X)=U^{s_j}X(U^{s_j})^\dagger,
\]
the output state is
\begin{equation}\label{eq:comb-output}
 \rho_{\cW,\bm s}^{U}
 =\cW_{k+1}\circ(\mathcal V_{s_k}^{U}\otimes\operatorname{id}_{\mathsf{M}_k})
 \circ\cW_{k}\circ\cdots\circ
 (\mathcal V_{s_1}^{U}\otimes\operatorname{id}_{\mathsf{M}_1})
 \circ\cW_1(\rho_0).
\end{equation}
The maps \(\cW_j\) may enlarge or discard private memory. By purification and deferred measurement this represents an arbitrary adaptive strategy for the given word $\bm s$~\cite{ChiribellaEtAl2009}.

\begin{definition}[Strong measurable error]
An ensemble \(\cE\) is a strong \(\varepsilon_m\)-approximate unitary \(k\)-design in measurable error if
\begin{equation}\label{eq:measurable-error}
 \operatorname{Err}_{k}^{\rm full}:=\sup_{\bm s,\cW}
 \left\|
 \E_{U\sim\cE}\rho_{\cW,\bm s}^{U}
 -\E_{U\sim\Haar}\rho_{\cW,\bm s}^{U}
 \right\|_1\leq\varepsilon_m.
\end{equation}
The supremum ranges over all orientation words $\bm s$ and all finite-memory $k$-slot strategies. The word is fixed before the interaction, while the strategy for each fixed word is fully adaptive.
\end{definition}

Two normalizations of \cref{eq:measurable-error} appear in the literature and in the imported statements we use throughout our work. We write $ \operatorname{Err}_{k}^{\rm full}$ for the full trace norm, as displayed
above, and  $\operatorname{Err}_{k} :=\frac{1}{2}  \operatorname{Err}_{k}^{\rm full}$ for the half trace distance, which is the operationally normalized
distinguishing advantage. We will make precise which one we are using in each case. All our final statements about measurable error are quoted in the full trace norm of Eq.~\eqref{eq:measurable-error}.

The strong measurable error is the principal error notion used in this work for our theorems. Relative error will also play an important role in the imported strong gluing lemmas.

\newpage
\section{Pauli mixing in perfect-matching circuits} \label{app:PauliMixing}
In this appendix we show that the ensemble of perfect-matching circuits of depth $T=O(\log(n/\eta))$, mixes the Pauli-strring chain second-moment Markov chain to total variation error $\eta$. First, in \cref{sec:The Markov Chains} we define the classical Markov chains associated to the perfect-matching ensemble, and reduce the Pauli strings chain to a support chain. In \cref{sec:Grand coupling between support chains}, we introduce the grand coupling strategy to resolve the support Markov chain. We then show in \cref{sec:growth} that singleton supports of size $1$ reach at least size $n/2$ within $O(\log n +\log (1/\delta))$ perfect-matching layers, for every failure probability $0<\delta \leq 1/2$. Finally, in \cref{sec:uniformmix} we show that supports size persist above $n/2$ and that the support distribution converges with high probability towards the uniform distribution.

Throughout the appendix, $n\geq 2$ is even. We write $\log=\log_2$, and $\operatorname{ln}$ is the natural logarithm.

\subsection{Markov chains}\label{sec:The Markov Chains}
In this section, we first show in \cref{sec:The Markov Chain and Pauli strings} how the action of
an ensemble over Pauli strings can be represented by a classical Markov chain. We then apply this representation to the perfect-matching ensemble in \cref{sec:Pauli_Mixing}, and characterize the corresponding Haar target distribution in \cref{sec:pauli_haar}. We then simplify the Pauli Markov chain into a Markov chain of supports in \cref{sec:supports}, which finally provides a single mixing condition on supports in \cref{eq:pauli_support_identity} which is used throughout the rest of this appendix.

\subsubsection{Markov chain on Pauli strings}
\label{sec:The Markov Chain and Pauli strings}
We consider the set of Pauli strings $\mathcal{P}_n=\{\mathbb{I},X,Y,Z\}^{\otimes n}$, and its non-trivial part $\mathcal{P}^{\star}_n=\mathcal{P}_n \backslash \{\mathbb{I}^{\otimes n}\}$. We just take the Hermitian representatives without extra phases, which satisfy $\operatorname{Tr}(PQ)=D\delta_{P,Q}$ with $D=2^n$. The support of a string $P \in \mathcal{P}_n^{\star}$ is $\operatorname{supp}(P)=\{i\in [n]:P_i \neq \mathbb{I}\}$, and its weight is $|\operatorname{supp}(P)|$. 

The conjugation of a non-trivial Pauli string $P$ by a unitary $U$ drawn from an ensemble $\mathcal{E}$ is 
\begin{equation}
    U P U^{\dagger} = \sum_{Q \in \mathcal{P}_n^{\star}} c_Q(U)\, Q,
    \qquad
    c_Q(U) = \frac{1}{D}\operatorname{Tr}\!\left(Q\, U P U^{\dagger}\right).
\end{equation}
Given that Pauli strings form an orthogonal basis of the operator space.
The identity coefficient vanishes because
$c_{\mathbb{I}^{\otimes n}}(U) = \operatorname{Tr}(U P U^{\dagger})/D
= \operatorname{Tr}(P)/D = 0$. The squared coefficients form a probability distribution with $\sum_Q |c_Q(U)|^2=1$. Averaging this action over $\mathcal{E}$, we define
\begin{equation}
\label{eq:probabilities-paulis}
    p_{\mathcal{E}}(Q \,|\, P) := \frac{1}{D^{2}}\, \mathbb{E}_{U \sim \mathcal{E}} \left| \operatorname{Tr}\!\left(Q\, U P U^{\dagger}\right) \right|^{2},
\end{equation}
as the probability that the string $P$, conjugated by a random circuit $U\sim \mathcal{E}$, lands on string $Q$. The unitary does not literally map a Pauli string to a single Pauli string. We instead refer to the induced classical distribution of squared Pauli coefficients. In particular, $p_{\mathcal{E}}(\cdot \,|\, P)$ is the probability distribution over all possible output strings $Q$ after sending $P$ through the ensemble $\mathcal{E}$. This is precisely the probability distribution that we need to analyze for the mixed-query criterion.

\subsubsection{The Pauli Markov chain on the perfect-matching ensemble}\label{sec:Pauli_Mixing}
We are interested in understanding the transition probabilities in \cref{eq:probabilities-paulis} for the perfect-matching ensemble of depth $T$, that is, when $\mathcal{E}=\mathcal{E}_{PM}^{(T)}$. For simplicity, we first study the case for a single layer $T=1$, and then we extend the construction to arbitrary depth $T$.

We now define the set of perfect matchings as
\begin{equation}
    \mathcal{M}_n
    :=
    \left\{
        M \subseteq \binom{[n]}{2}
        \;:\;
        \text{each vertex in $[n]$ is incident to exactly one edge in $M$}
    \right\}.
\end{equation}
A perfect-matching layer is obtained by randomly sampling a perfect-matching $M \sim \operatorname{Unif}(\mathcal{M}_n)$ on $n$ qubits, and then applying an independent Haar-$SU(4)$ gate to each of its $n/2$ edges. Conditioned on $M$, the induced second-moment transition kernel factorizes over the edges and takes a remarkably simple form. For an edge $\{i,j\} \in M$, let $P_{\{i,j\}}$ denote the restriction of $P$ to qubits $i$ and $j$. Then:
\begin{enumerate}
    \item If $P_{\{i,j\}} =\mathbb{I}\otimes\mathbb{I}$, it is left untouched.
    \item Otherwise, $P_{\{i,j\}}$ transitions uniformly at random to one of the $15$ non-trivial two-qubit Paulis, independently across the edges.
\end{enumerate}
The first rule follows immediately from unitary invariance of the identity,
$U(\mathbb{I}\otimes\mathbb{I})U^\dagger = \mathbb{I}\otimes\mathbb{I}$.
The second is Schur orthogonality for the conjugation action of
$SU(4)$ on traceless $4\times4$ matrices: the
averaged squared overlap of $U P_{\{i,j\}} U^{\dagger}$ with each of
the fifteen non-trivial labels equals exactly $\tfrac{1}{15}$, and the
factorization over edges is the independence of the gates.

Averaging over both the random matching and the Haar gates, one layer induces a classical Markov kernel $K_n$ on $\mathcal{P}_n^\star$. Its transition probabilities are
\begin{equation}
\label{eq:markov_kernel}
    K_n(P,Q) := p_{\mathcal{E}_{PM}^{(1)}}(Q \,|\, P)= \frac{1}{D^{2}}\, \mathbb{E}_{U \sim \mathcal{E}_{PM}^{(1)}} \left| \operatorname{Tr}\!\left(Q\, U P U^{\dagger}\right) \right|^{2}
    =
    \frac{1}{D^{2}}\, 
    \mathbb{E}_{M \sim \mathcal{M}_n}\;
    \mathbb{E}_{\{U_e\}_{e\in M} \sim \mathrm{Haar}(SU(4))}
    \left| \operatorname{Tr}\!\left(Q\, U_M P U_M^{\dagger}\right) \right|^{2},
\end{equation}
where the gates $U_e$ are sampled independently from the Haar measure on $SU(4)$, and $U_M=\bigotimes_{e\in M} U_e$. Consequently, for a fixed Pauli string $P$, the distribution after one perfect-matching layer is $K_n(P,\cdot)=\{K_n(P,Q)\}_{Q\in\mathcal{P}_n^{\star}}$. Independent layers correspond to successive applications of the same Markov kernel. This yields the following reduction.
\begin{lemma}[Classical Markov chain reduction of the perfect-matching ensemble]\label{lem:pauli-chain-reduction}
Let $\mathcal{E}_{PM}^{(T)}$ be the ensemble of $T$ independent perfect-matching layers. Then for all $P,Q \in \mathcal{P}_n^\star$,
\begin{equation}
    p_{\mathcal{E}_{PM}^{(T)}}(Q \,|\, P) = K_n^{T}(P,Q).
\end{equation}
\end{lemma}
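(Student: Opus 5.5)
We argue by induction on $T$, with the base case $T=1$ being the definition \eqref{eq:markov_kernel} of $K_n$. The key tool is the Pauli-transfer representation of a unitary channel. For a fixed unitary $U$, define the matrix $\Gamma(U)_{Q,P}:=\frac{1}{D}\operatorname{Tr}(Q\,UPU^\dagger)$. Because the Hermitian Pauli strings form an orthogonal basis with $\operatorname{Tr}(PQ)=D\delta_{P,Q}$, this matrix is real orthogonal, and it satisfies $\Gamma(U_2U_1)=\Gamma(U_2)\Gamma(U_1)$. The identity decouples from the rest, so we may restrict to $\mathcal P_n^\star$. In this notation, $p_{\mathcal E}(Q\,|\,P)=\E_U\,\Gamma(U)_{Q,P}^2$.

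For the inductive step, write $R'=U_{M_{T+1}}R$, where $R\sim\mathcal E_{\PM}^{(T)}$ and the new layer is independent of $R$. Then
\begin{equation*}
\Gamma(R')_{Q,P}=\sum_{S\in\mathcal P_n^\star}\Gamma(U_{M_{T+1}})_{Q,S}\,\Gamma(R)_{S,P}.
\end{equation*}
Squaring and averaging, independence gives
\begin{equation*}
\E\,\Gamma(R')_{Q,P}^2=\sum_{S,S'}\E\bigl[\Gamma(U_M)_{Q,S}\Gamma(U_M)_{Q,S'}\bigr]\,\E\bigl[\Gamma(R)_{S,P}\Gamma(R)_{S',P}\bigr].
\end{equation*}
It therefore suffices to show that the single-layer cross terms vanish, i.e. $\E[\Gamma(U_M)_{Q,S}\Gamma(U_M)_{Q,S'}]=0$ for $S\neq S'$. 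Once that holds, the sum collapses to $\sum_S K_n(S,Q)\,K_n^T(P,S)=K_n^{T+1}(P,Q)$.

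The main step is this vanishing of off-diagonal second moments for one layer. We would use Pauli invariance of the Haar measure on $SU(4)$: for any two-qubit Pauli $\sigma$ on an edge $e$, $U_e$ and $U_e\sigma$ have the same distribution. Hence $U_M$ and $U_M\Sigma$ have the same distribution for every Pauli string $\Sigma$, for each fixed $M$, and so also after averaging over $M$. Since $\Sigma S\Sigma=\chi_\Sigma(S)\,S$ with $\chi_\Sigma(S)=\pm1$, we have $\Gamma(U_M\Sigma)_{Q,S}=\chi_\Sigma(S)\,\Gamma(U_M)_{Q,S}$. The cross term is therefore multiplied by $\chi_\Sigma(S)\chi_\Sigma(S')=\chi_\Sigma(SS')$. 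Averaging over a uniformly random $\Sigma$ makes this factor vanish whenever $S\neq S'$ up to phase, because the character of a nontrivial Pauli averages to zero. Thus the cross terms are zero.

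Alternatively, one could place the random Pauli twirl on $R$ instead, since $R$ is also Pauli-invariant at its output layer. Either choice works.

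The only care needed is bookkeeping. $\Gamma$ must be real, which holds because the Pauli strings are Hermitian. The identity must never appear as an intermediate $S$, which holds because $\Gamma(U)_{\mathbb I,S}=0$ for traceless $S$. With these points checked, the induction closes and yields $p_{\mathcal E_{\PM}^{(T)}}(Q\,|\,P)=K_n^T(P,Q)$.
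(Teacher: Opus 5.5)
Your proof is correct and follows the paper's argument: induction by appending one independent perfect-matching layer after the depth-$T$ circuit, expanding the squared Pauli coefficient of the composite circuit, and killing the off-diagonal terms using right Pauli invariance of the Haar-random edge gates. The only difference is cosmetic. Your global Pauli twirl with $\chi_\Sigma(S)\chi_\Sigma(S')=\chi_\Sigma(SS')$ handles in one stroke both of the paper's cases: the case where one edge restriction is trivial, which the paper treats by tracelessness, and the case where both are nontrivial, which the paper treats with a single anticommuting two-qubit Pauli.
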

\begin{proof}
Consider the composite circuit $LV$, where $V$ is drawn from an
arbitrary ensemble $\mathcal{E}$, and $L$ is an independent perfect-matching layer drawn from $\mathcal{E}_{PM}^{(1)}$. Conjugation by $V$ maps the input Pauli string $P$ to an operator that can be expanded in the Pauli basis as
\begin{equation}
\label{eq:Vconj}
    V P V^{\dagger} = \sum_{R \in \mathcal{P}_n^\star} a_R(V)\, R,
    \qquad
    a_R(V) = \frac{1}{D}\operatorname{Tr}\!\left(R\, V P V^{\dagger}\right),
\end{equation}
so the non-trivial input string $P$ is sent to the non-trivial intermediate string $R$ with probability $p_{\mathcal{E}}(R \,|\, P) = \mathbb{E}_{V \sim \mathcal{E}} |a_R(V)|^2$. The identity coefficient again vanishes, because $P$ is traceless. Next, the second layer $L$ acts by conjugation on the output of the first layer
\begin{equation}
\label{eq:LVconj}
    L\left(V P V^{\dagger}\right) L^{\dagger}
    = \sum_{R \in \mathcal{P}_n^\star} a_R(V)\, L R L^{\dagger}= \sum_{R \in \mathcal{P}_n^\star} a_R(V)\, \sum_{Q \in \mathcal{P}_n^\star} b_{Q,R}(L)\, Q = \sum_{Q \in \mathcal{P}_n^\star} \bigg ( \sum_{R \in \mathcal{P}_n^\star} a_R(V)\,  b_{Q,R}(L) \bigg )\, Q,
\end{equation}
where we used $L R L^{\dagger} = \sum_{Q \in \mathcal{P}_n^\star} b_{Q,R}(L)\, Q$ with $b_{Q,R}(L) = \frac{1}{D}\operatorname{Tr}\!\left(Q\, L R L^{\dagger}\right)$. Therefore, the action of the entire circuit $LV$ can be read as a single transition from the initial Pauli string $P$ to the final one $Q$ as
\begin{equation}
\label{eq:LV_whole}
    (LV)P(LV)^{\dagger}= \sum_{Q \in \mathcal{P}_n^\star} c_Q(LV)\, Q,
    \qquad
    c_Q(LV) = \frac{1}{D}\operatorname{Tr}\!\left(Q\, (LV) P (LV)^{\dagger}\right).
\end{equation}
Comparing \cref{eq:LVconj,eq:LV_whole}, we identify
\begin{equation}
    c_Q(LV) = \sum_{R \in \mathcal{P}_n^\star} b_{Q,R}(L)\, a_R(V).
\end{equation}
Using this identity and the independence of $L$ and $V$, the transition probability of the full circuit is
\begin{align}
\label{eq:R_terms_decomposition}
    p_{\mathcal{E}_{PM}^{(1)}\circ \;\mathcal{E}}(Q \,|\, P)
    &= \mathbb{E}_{L \sim \mathcal{E}_{PM}^{(1)}}\;\mathbb{E}_{V \sim \mathcal{E}}\, \big| c_Q(LV) \big|^2 \nonumber \\
    &= \sum_{R, R' \in \mathcal{P}_n^\star}
      \mathbb{E}_{V\sim \mathcal{E}}\!\left[ a_R(V)\, \overline{a_{R'}(V)} \right]
      \mathbb{E}_{L\sim \mathcal{E}_{PM}^{(1)}}\!\left[ b_{Q,R}(L)\, \overline{b_{Q,R'}(L)} \right] \nonumber \\
    &= \sum_{R \in \mathcal{P}_n^\star}
      \mathbb{E}_{V\sim \mathcal{E}} \big| a_R(V) \big|^2 \;
      \mathbb{E}_{L\sim \mathcal{E}_{PM}^{(1)}} \big| b_{Q,R}(L) \big|^2
    \;+\;
    \sum_{\substack{R, R' \in \mathcal{P}_n^\star \\ R \neq R'}}
      \mathbb{E}_{V\sim \mathcal{E}}\!\left[ a_R(V)\, \overline{a_{R'}(V)} \right]
      \mathbb{E}_{L\sim \mathcal{E}_{PM}^{(1)}}\!\left[ b_{Q,R}(L)\, \overline{b_{Q,R'}(L)} \right],
\end{align}
where in the last equality we separated the diagonal terms $R = R'$ from the cross terms $R \neq R'$. We now show that the cross terms vanish for every fixed matching, hence also after averaging over the matching. This means that $\mathbb{E}_{L}\!\left[ b_{Q,R}(L)\, \overline{b_{Q,R'}(L)} \right] = 0$ whenever $R \neq R'$, so only the diagonal part survives.

To show this, we fix a perfect-matching $M\in\mathcal{M}_n$ and recall that the associated perfect-matching layer is $L=\bigotimes_{e\in M}U_e$, with each $U_e \sim \mathrm{Haar}(SU(4))$ independent. Since $L$ factorizes, the intermediate and output Pauli strings can also be written accordingly $R=\bigotimes_{e\in M}R_e$, and $Q=\bigotimes_{e\in M}Q_e$, where $R_e$ and $Q_e$ are two qubit Pauli labels. The coefficients $b_{Q,R}(L)$ then factorize as
\begin{equation}
    b_{Q,R}(L)
    = \frac{1}{D}\operatorname{Tr}\!\left(Q\, L R L^{\dagger}\right)
    =\frac{1}{D}\operatorname{Tr}\!\left(\bigotimes_{e\in M}Q_e\, U_e R_e U_e^{\dagger}\right)
    = \prod_{e \in M}
      \underbrace{\frac{1}{4}\operatorname{Tr}\!\left(Q_e\, U_e R_e U_e^{\dagger}\right)}_{=:\ \beta_e(R_e)},
\end{equation}
so $b_{Q,R}(L)$ is a product of the edge numbers $\beta_e(R_e)$. The crossed terms in \cref{eq:R_terms_decomposition} then read
\begin{equation}
    b_{Q,R}(L)\, \overline{b_{Q,R'}(L)}=\prod_{e \in M} \beta_e(R_e) \overline{\beta_e(R'_e)}.
\end{equation}
The factor on edge $e$ involves only the gate $U_e$, and different edges involve different
independent gates, so, conditionally on the matching $M$, the expectation of the product is
the product of expectations; averaging over $M$ then gives the expectation over the layer,
\begin{align}
\mathbb{E}\!\left[\, b_{Q,R}(L)\, b_{Q,R'}(L) \,\middle|\, M \,\right]
&{}= \prod_{e \in M} \mathbb{E}_{U_e \sim \mathrm{Haar}(SU(4))}\!\left[\, \beta_e(R_e)\,
\beta_e(R'_e) \,\right], \nonumber \\[2pt]
\mathbb{E}_{L \sim \mathcal{E}^{(1)}_{\rm PM}}\!\left[\, b_{Q,R}\, b_{Q,R'} \,\right]
&= \mathbb{E}_{M \sim \mathrm{Unif}(\mathcal{M}_n)} \prod_{e \in M}
\mathbb{E}_{U_e \sim \mathrm{Haar}(SU(4))}\!\left[\, \beta_e(R_e)\, \beta_e(R'_e) \,\right].
\label{eq:layer_prod}
\end{align}
It therefore suffices to show that the conditional expectation on the first line vanishes for
every fixed matching $M$. As soon as one factor is zero, the whole product vanishes.

Since $R\neq R'$, they must disagree on at least one qubit. Since every qubit belongs to one edge of the matching, there is at least one edge $e^*$ such that $R_{e^*} \neq R'_{e^*}$. We distinguish two cases.
\begin{enumerate}
    
    \item Exactly one restriction is trivial. Suppose $R_{e^*}=\mathbb{I} \otimes \mathbb{I}$, while $R'_{e^*}$ is non-trivial (the argument is symmetric). For the trivial one $U_{e^*} (\mathbb{I} \otimes \mathbb{I})U_{e^*}^{\dagger}=\mathbb{I} \otimes \mathbb{I}$, so $\beta_{e^*}(R_{e^*})=\beta_{e^*}(\mathbb{I} \otimes \mathbb{I})=\frac{1}{4}\operatorname{Tr}(Q_{e^*})=\delta_{\{Q_{e^*}=\mathbb{I}\otimes \mathbb{I}\}}$. For the non-trivial one, $R_{e^*}'$ is traceless and conjugation keeps it traceless, so $\beta_{e^*}(R'_{e^*})=0$ whenever $Q_{e^*}=\mathbb{I}\otimes \mathbb{I}$. Therefore, the product $\beta_{e^*}(R_{e^*})\overline{\beta_{e^{*}}(R'_{e^*})}$ vanishes when $Q_{e^*}=\mathbb{I}\otimes\mathbb{I}$ and when $Q_{e^*}$ is non-trivial.

    \item Both restrictions are non-trivial and distinct. Choose a two-qubit Pauli $S$ that anticommutes with the non-trivial product $R_{e^*}R'_{e^*}$. Such an $S$ exists because distinct Hermitian representatives never differ by a phase, so $R_{e^\ast}R'_{e^\ast}$ is proportional to a non-trivial two-qubit Pauli, and every non-trivial Pauli anticommutes with half of the two-qubit Pauli group. Equivalently, $S$ commutes with exactly one of $R_{e^\ast}$ or $R'_{e^\ast}$, and anticommutes with the other. Because the Haar measure is invariant under the right substitution $U_{e^\ast} \to U_{e^\ast} S$, we get
    \begin{align}
    \mathbb{E}_{U_{e^\ast} \sim \mathrm{Haar}(SU(4))}\!\left[\beta_{e^\ast}(R_{e^*})\,\overline{\beta_{e^\ast}(R'_{e^*})}\right]
    &= \mathbb{E}_{U_{e^\ast} \sim \mathrm{Haar}(SU(4))}\!\left[
        \tfrac{1}{4}\operatorname{Tr}\!\left(Q_{e^\ast}\, U_{e^\ast} R_{e^*}\, U_{e^\ast}^{\dagger}\right)\,
        \overline{\tfrac{1}{4}\operatorname{Tr}\!\left(Q_{e^\ast}\, U_{e^\ast} R'_{e^*}\, U_{e^\ast}^{\dagger}\right)}
       \right] \nonumber\\
    &= \mathbb{E}_{U_{e^\ast} \sim \mathrm{Haar}(SU(4))}\!\left[
        \tfrac{1}{4}\operatorname{Tr}\!\left(Q_{e^\ast}\, (U_{e^\ast} S)\, R_{e^*}\, (U_{e^\ast}S)^{\dagger}\right)\,
        \overline{\tfrac{1}{4}\operatorname{Tr}\!\left(Q_{e^\ast}\, (U_{e^\ast} S)\, R'_{e^*}\, (U_{e^\ast}S)^{\dagger}\right)}
       \right] \nonumber\\
    &= \mathbb{E}_{U_{e^\ast} \sim \mathrm{Haar}(SU(4))}\!\left[
        \tfrac{1}{4}\operatorname{Tr}\!\left(Q_{e^\ast}\, U_{e^\ast} \left(S R_{e^*} S^{\dagger}\right) U_{e^\ast}^{\dagger}\right)\,
        \overline{\tfrac{1}{4}\operatorname{Tr}\!\left(Q_{e^\ast}\, U_{e^\ast} \left(S R'_{e^*} S^{\dagger}\right) U_{e^\ast}^{\dagger}\right)}
       \right] \nonumber\\
    &= -\mathbb{E}_{U_{e^\ast} \sim \mathrm{Haar}(SU(4))}\!\left[
        \tfrac{1}{4}\operatorname{Tr}\!\left(Q_{e^\ast}\, U_{e^\ast} R_{e^*}\, U_{e^\ast}^{\dagger}\right)\,
        \overline{\tfrac{1}{4}\operatorname{Tr}\!\left(Q_{e^\ast}\, U_{e^\ast} R'_{e^*}\, U_{e^\ast}^{\dagger}\right)}
       \right] \nonumber\\
    &= -\,\mathbb{E}_{U_{e^\ast} \sim \mathrm{Haar}(SU(4))}\!\left[\beta_{e^\ast}(R_{e^*})\,\overline{\beta_{e^\ast}(R'_{e^*})}\right]
    \;=\; 0.
\end{align}
where in the fourth line we used that either $SR_{e^\ast}S^{\dagger} = R_{e^\ast}$, $SR'_{e^\ast}S^{\dagger} = -R'_{e^\ast}$, or $SR_{e^\ast}S^{\dagger} = -R_{e^\ast}$, $SR'_{e^\ast}S^{\dagger} = R'_{e^\ast}$.
\end{enumerate}
In either case one factor of the product in Eq.~\eqref{eq:layer_prod} vanishes for every fixed matching $M$, so the conditional expectation vanishes and hence so does its average over $M$:
\begin{equation}
    \mathbb{E}_{L\sim \mathcal{E}_{PM}^{(1)}}\!\left[ b_{Q,R}(L)\, \overline{b_{Q,R'}(L)} \right]
    = \delta_{R,R'}\; \mathbb{E}_{L\sim \mathcal{E}_{PM}^{(1)}} \big| b_{Q,R}(L) \big|^2
    = \delta_{R,R'}\, K_n(R,Q),
\end{equation}
where the last equality is the definition of the kernel. This kills the cross terms in \cref{eq:R_terms_decomposition}, and only the diagonal terms survive
\begin{equation}\label{eq:onestep}
    p_{\mathcal{E}_{PM}^{(1)}\circ \;\mathcal{E}}(Q \,|\, P)
    = \sum_{R \in \mathcal{P}_n^\star} \mathbb{E}_V |a_R|^2\;\, \mathbb{E}_L |b_{Q,R}|^2
    = \sum_{R \in \mathcal{P}_n^\star} p_{\mathcal{E}}(R \,|\, P)\, K_n(R,Q).
\end{equation}
We conclude the proof by induction on the circuit depth. For $T=1$, the claim follows directly from the definition of the one-layer kernel, $p_{\mathcal{E}_{PM}^{(1)}}(Q\,|\,P) = K_n(P,Q)$. Now assume that $p_{\mathcal{E}_{PM}^{(T)}}(Q\,|\,P) = K_n^{T}(P,Q)$ holds for some $T \geq 1$. A depth-$(T+1)$ circuit can be written as the composition of $LV$, which consists of the first $T$ layers, and $L\sim\mathcal E_{\mathrm{PM}}$, which is an independent final layer. Applying the one-step composition rule in \cref{eq:onestep}, we obtain
\begin{equation}
    p_{\mathcal{E}_{PM}^{(T+1)}}(Q \,|\, P)
    = \sum_{R \in \mathcal{P}_n^\star} K_n^{T}(P,R)\, K_n(R,Q)
    = K_n^{T+1}(P,Q).
\end{equation}
This proves the lemma for every depth.
\end{proof}

\subsubsection{Pauli Haar target and stationarity}
\label{sec:pauli_haar}

We next consider the Haar measure, which sets the target distribution we want to reproduce with the perfect-matching ensemble. Haar-random conjugation spreads the Pauli weight of any non-identity string uniformly over all non-identity Pauli strings. Hence, for all $P$, $Q \in\mathcal{P}_n^{\star}$: 
\begin{equation}
\label{eq:haar_dstr}
p_{\mathrm{Haar}}(Q \,|\, P) = \frac{1}{4^n - 1}=:\pi(Q).
\end{equation}
Thus, Haar random conjugation induces the uniform distribution $\pi$, independently of the input Pauli string $P$. Accordingly, reproducing the Haar second moment amounts to showing that the $T$-step distribution $K_n^T(P, \cdot)$ converges to $\pi$. As a first consistency check, we verify that $\pi$ is an equilibrium of the dynamics.

\begin{lemma}[Stationarity of the uniform distribution]\label{lem:stationarity}
The uniform distribution $\pi$ on $\mathcal{P}_n^\star$ is stationary for the kernel $K_n$:
\begin{equation}
    \sum_{P \in \mathcal{P}_n^\star} \pi(P)\, K_n(P,Q) = \pi(Q)
    \qquad \text{for every } Q \in \mathcal{P}_n^\star.
\end{equation}
In other words, if the input string distribution is uniformly random, it remains uniformly random after any number of perfect-matching layers.
\end{lemma}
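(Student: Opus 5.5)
The plan is to show that $K_n$ is doubly stochastic on $\mathcal{P}_n^\star$, i.e.\ symmetric, $K_n(P,Q)=K_n(Q,P)$. Stationarity of the uniform distribution then follows at once: the row sums of $K_n$ equal one by \cref{def:pauli-dist}, so symmetry makes every column sum equal one, giving
\[
\sum_{P}\pi(P)K_n(P,Q)=\frac{1}{4^n-1}\sum_P K_n(Q,P)=\pi(Q).
\]

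To prove symmetry, I fix a matching $M$ and the gates $\{U_e\}$, and write $U=U_M$. Cyclicity of the trace gives $\operatorname{Tr}(QUPU^\dagger)=\operatorname{Tr}(PU^\dagger QU)$. Hence
\[
\bigl|\operatorname{Tr}(QUPU^\dagger)\bigr|^2=\bigl|\operatorname{Tr}(P\,U^\dagger Q\,U)\bigr|^2 ,
\]
which is exactly the integrand defining $K_n(Q,P)$, except that the layer is $U^\dagger$ in place of $U$. It therefore suffices to check that a single layer $U_M$ and its inverse $U_M^\dagger=\bigotimes_{e\in M}U_e^\dagger$ have the same distribution. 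This holds because the matching is unchanged and each $U_e^\dagger$ is again Haar-distributed on $SU(4)$, the Haar measure being inversion invariant, with independence across edges preserved. Averaging over $M$ and the gates then yields $K_n(P,Q)=K_n(Q,P)$.

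As a cross-check, I would also verify the claim directly from the edge-local description of the kernel. Conditioned on $M$, $K_n$ factorizes over the edges. On each edge the local kernel fixes $\mathbb{I}\otimes\mathbb{I}$ and sends every nonidentity two-qubit label uniformly to the $15$ nonidentity labels. This local kernel is symmetric, and a product of symmetric kernels is symmetric. Summing the symmetric conditional kernels over $M$ preserves symmetry.

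There is no real obstacle here. The only point needing care is that identity restrictions on an edge stay identity, so the nonidentity condition on the global string is preserved in both directions. This guarantees that $K_n$ is a genuine stochastic matrix on $\mathcal{P}_n^\star$ (already noted after \cref{def:pauli-dist}). The statement for any number of layers then follows by iterating, since $\pi K_n=\pi$ implies $\pi K_n^T=\pi$, and by \cref{lem:pauli-chain-reduction} $K_n^T$ is the $T$-layer transition kernel.
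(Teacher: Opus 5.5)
Your proof is correct, but it takes a different route from the paper's. The paper fixes a matching $M$ and shows $\pi K_M=\pi$ with a heat-bath argument. The layer preserves the set of active edges exactly. Under $\pi$, conditionally on that set, the labels on the active edges are independent and uniform over the $15$ nonidentity two-qubit labels, which is exactly the law from which the layer redraws them. Averaging over $M$ then gives $\pi K_n=\pi$.

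You instead prove the stronger identity $K_n(P,Q)=K_n(Q,P)$, using trace cyclicity and the invariance of the layer distribution under $U_M\mapsto U_M^\dagger$, and read off stationarity from double stochasticity. Your edge-local cross-check is the per-matching version of the same symmetry and is also correct.

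What your route buys is reversibility, i.e.\ detailed balance with respect to $\pi$. The paper tacitly assumes this when it quotes the spectral-gap bound for reversible chains at the start of the grand-coupling appendix. Your argument also uses nothing about the edge-local structure beyond inversion invariance. The paper's route instead exposes the ``activity pattern plus resampled labels'' structure, which it reuses for the support reduction and \cref{lem:tvd_equivalence}.

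For stationarity alone, even the inversion step is dispensable. Your cyclicity identity gives, for every fixed unitary $U$, $\sum_{P\in\mathcal{P}_n^\star}\bigl|\operatorname{Tr}(P\,U^\dagger QU)\bigr|^2=D^2$ by Parseval, since $U^\dagger QU$ is unitary and traceless. So the column sums of the kernel equal one for any ensemble whatsoever, and inversion invariance is needed only to upgrade this to symmetry.
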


\begin{proof}
Since $K_n = \mathbb{E}_{M}\, K_M$ is an average of fixed-matching kernels by \cref{eq:markov_kernel}, it suffices to show that $\pi K_M = \pi$ for every perfect matching $M$.
 
Fix $M$ and decompose Pauli strings into their two-qubit edge labels, $P = \bigotimes_{e \in M} P_e$. We call an edge $e$ active if $P_e \neq \mathbb{I}\otimes\mathbb{I}$, and denote the set of active edges as $\mathrm{act}(P) \subseteq M$. A perfect matching layer preserves this activity pattern exactly: inactive edges remain equal to $\mathbb{I}\otimes\mathbb{I}$, whereas active edges are replaced by non-identity two-qubit labels and therefore remain active.
 
Now sample $P \sim \pi$ and condition on a fixed non-empty activity pattern $\mathrm{act}(P) = S$. There are $15^{|S|}$ Pauli strings with pattern $S$, all equally likely under $\pi$. Conditioned on $S$ the labels on the active edges are therefore independent and uniformly distributed over the $15$ non-trivial two-qubit Paulis. This is exactly the distribution from which the layer resamples them. In other words, one perfect-matching layer is a heat-bath move for $\pi$: it fixes the coarse statistic (the activity pattern) and redraws the remaining detail (the active labels) from its exact conditional distribution under $\pi$. Such a move leaves $\pi$ invariant: the output pattern is distributed as the input pattern, because it is unchanged, and conditionally on the pattern the output labels have the $\pi$-conditional law, because they were just resampled from it. Hence $\pi K_M = \pi$ for every $M$, and averaging over $M$ gives $\pi K_n = \pi$.
\end{proof}

\subsubsection{Reduction to a Markov chain on supports} 
\label{sec:supports}
The update rule of a perfect matching layer does not distinguish among Pauli labels $X$, $Y$, $Z$ of a string. On each edge, it depends only on whether the local Pauli is trivial or nontrivial. This symmetry leads to two structural observations that reduce the Pauli Markov chain to a Markov chain on subsets of $[n]$:
 \begin{enumerate}
     \item For a fixed matching $M$, the distribution of the output support depends only on the input support $\operatorname{supp}(P)$, not on the specific Pauli labels. The support therefore evolves autonomously as a Markov chain on the non-empty subsets $\mathcal{S}_n := \{A \subseteq [n] : A \neq \emptyset\}$. 
     This state space is closed under the dynamics: any non-empty input activates at least one edge, and every active edge remains nontrivial after the update. Hence, the support cannot reach $\emptyset$.
\item Every qubit belongs to some edge of the matching. After a single layer, every letter of an initial string has an associated active edge, and is updated into a uniform output in the labels $\{X,Y,Z\}$ in that active edge support. Hence, output labels carry no information about the past labels. 
 \end{enumerate}
The first observation will enable us to reduce the Pauli Markov chain into a support Markov chain. Later, observation 2 will provide us the dictionary between string distributions and support
distributions.

\subsubsection{The support Markov chain.}
We write $P_t$ for the random string after $t$ layers, so that $\mathbb{P}(P_T = Q) = K_n^{T}(P,Q)$ by \cref{lem:pauli-chain-reduction}, and let $A_t = \operatorname{supp}(P_t)$ be the induced support process, with $A_0 = \operatorname{supp}(P)$. We again call an edge of the sampled matching active if it touches the current support. On an active edge $\{a,b\}$, the fresh label is uniform among the $15$ non-trivial two-qubit Paulis, of which $9$ act non-trivially on both qubits and $3+3$ on exactly one. The active edge therefore branches to the new support
\begin{equation}\label{eq:edge_rule}
    \{a,b\} \text{ with probability } \tfrac{9}{15}=\tfrac{3}{5},
    \qquad
    \{a\} \text{ with probability } \tfrac{3}{15}=\tfrac{1}{5},
    \qquad
    \{b\} \text{ with probability } \tfrac{3}{15}=\tfrac{1}{5},
\end{equation}
independently across all active edges, while inactive edges contribute nothing. Therefore, the new support $A_{t+1}$ is the union of the active edge contributions. Averaging over the matching and the label outcomes, one layer defines a Markov kernel $\kappa_n$ on $\mathcal{S}_n$, in direct analogy with the Pauli kernel $K_n$. Formally, $\kappa_n$ is obtained from $K_n$ by grouping the output strings by their support,
\begin{equation}\label{eq:kappa_def}
    \kappa_n(A, B)
    := \sum_{\substack{Q \in \mathcal{P}_n^\star \\ \operatorname{supp}(Q) = B}} K_n(P, Q),
    \qquad \text{for any } P \in \mathcal{P}_n^\star \text{ with } \operatorname{supp}(P) = A.
\end{equation}
The right-hand side does not depend on the chosen representative $P$, because by observation 1 the whole distribution $K_n(P,\cdot)$ is a function of $\operatorname{supp}(P)$ alone. The same representative-independence holds after any number of layers:
\begin{equation}\label{eq:kappa_T_classsum}
    \kappa_n^{T}(A,B) \;=\; \sum_{\substack{Q \in \mathcal{P}_n^\star\\ \operatorname{supp}(Q)=B}} K_n^{T}(P,Q),
    \qquad \text{for any } P \text{ with } \operatorname{supp}(P)=A.
\end{equation}
Indeed, the case $T = 1$ is \cref{eq:kappa_def}, and for the inductive step we split off the last layer and group the intermediate string $R$ by its support $C$:
\begin{equation}
    \sum_{\operatorname{supp}(Q)=B} K_n^{T}(P,Q)
    = \sum_{C \in \mathcal{S}_n}\; \sum_{\operatorname{supp}(R)=C} K_n^{T-1}(P,R)
      \underbrace{\sum_{\operatorname{supp}(Q)=B} K_n(R,Q)}_{=\;\kappa_n(C,B) \text{ by } \cref{eq:kappa_def}}
    = \sum_{C \in \mathcal{S}_n} \kappa_n^{T-1}(A,C)\,\kappa_n(C,B)
    = \kappa_n^{T}(A,B).
\end{equation}
The underbraced sum can be pulled out of the $R$-sum precisely because it takes the same value for every $R$ with $\operatorname{supp}(R) = C$, and the second equality applies the inductive hypothesis. Consequently, the support after $T$ layers is     distributed as
\begin{equation}\label{eq:support_T_law}
    \mathbb{P}\big(A_T = B\big)
    = \sum_{\substack{Q \in \mathcal{P}_n^\star\\ \operatorname{supp}(Q) = B}} \mathbb{P}(P_T = Q)
    = \sum_{\substack{Q \in \mathcal{P}_n^\star\\ \operatorname{supp}(Q) = B}} K_n^{T}(P,Q)
    = \kappa_n^{T}(A_0, B),
\end{equation}
regardless of which string with support $A_0$ the chain started from.
 
\subsubsection{The stationary support distribution.} The stationary distribution on the supports is
\begin{equation}
 \label{eq:Harr_supp}
    \pi_{\mathrm{supp}}(A) := \sum_{\substack{Q \in \mathcal{P}_n^\star\\ \operatorname{supp}(Q) = A}} \pi(Q)= \frac{3^{|A|}}{4^n-1},
\end{equation}
since there are $3^{|A|}$ strings with support $A$. Then, $\pi_{\mathrm{supp}}$ is stationary for $\kappa_n$:
\begin{equation}
\label{eq:stationarity_pi_supp}
    \sum_{A \in \mathcal{S}_n} \pi_{\mathrm{supp}}(A)\, \kappa_n(A,B)
    = \sum_{P \in \mathcal{P}_n^\star} \pi(P) \!\!\sum_{\substack{Q\,:\, \operatorname{supp}(Q)=B}}\!\! K_n(P,Q)
    = \!\!\sum_{\substack{Q\,:\,\operatorname{supp}(Q)=B}}\!\! \pi(Q)
    = \pi_{\mathrm{supp}}(B),
\end{equation}
where the first equality follows by grouping input Pauli strings according to their supports and using \cref{eq:kappa_def,eq:Harr_supp}. The second equality applies \cref{lem:stationarity}, and the final equality uses \cref{eq:Harr_supp}.
 
\subsubsection{TVD equivalence between Markov chains.}
The two chains introduced above are not merely analogous: for $T\ge1$ the Pauli chain carries no information beyond what the support chain already carries, and the two sit at exactly the same distance from their respective equilibrium:

\begin{lemma}[TVD equivalence between Markov chains]
\label{lem:tvd_equivalence}
Let $T\ge1$, let $P\in\mathcal{P}_n^\star$ with
$\operatorname{supp}(P)=A_0$, and let $Q\in\mathcal{P}_n^\star$. Then
the Pauli chain and its Haar target both factorize into a support
distribution times a uniform letter distribution,
\begin{equation}\label{eq:dictionary}
    K_n^{T}(P,Q)
    = \kappa_n^{T}\big(A_0, \operatorname{supp}(Q)\big)\; 3^{-|\operatorname{supp}(Q)|},
    \qquad
    \pi(Q)
    = \pi_{\mathrm{supp}}\big(\operatorname{supp}(Q)\big)\; 3^{-|\operatorname{supp}(Q)|},
\end{equation}
and consequently the two chains are at the same total variation
distance from equilibrium,
\begin{equation}\label{eq:tvd_supports}
    \mathrm{TVD}\big(K_n^{T}(P,\cdot),\,\pi\big)
    \;=\;
    \mathrm{TVD}\big(\kappa_n^{T}(A_0,\cdot),\,\pi_{\mathrm{supp}}\big).
\end{equation}
\end{lemma}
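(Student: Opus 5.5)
The plan is to prove the factorization \eqref{eq:dictionary} first and then read off \eqref{eq:tvd_supports} by grouping strings according to their support. The Haar half of \eqref{eq:dictionary} is immediate from \eqref{eq:haar_dstr} and \eqref{eq:Harr_supp}: since $\pi(Q)=1/(4^n-1)$ and $\pi_{\rm supp}(B)=3^{|B|}/(4^n-1)$, we get $\pi(Q)=\pi_{\rm supp}(\operatorname{supp}Q)\,3^{-|\operatorname{supp}Q|}$.

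For the chain half, I would isolate the last layer. By \cref{lem:pauli-chain-reduction} (through the one-step rule \eqref{eq:onestep}),
\[
K_n^T(P,Q)=\sum_{R}K_n^{T-1}(P,R)\,K_n(R,Q).
\]
The key claim is a one-layer uniformity statement. For every $R\in\mathcal P_n^\star$ and every $Q$ with $\operatorname{supp}Q=B$, the quantity $K_n(R,Q)$ depends on $Q$ only through $B$, so that
\[
K_n(R,Q)=3^{-|B|}\,\kappa_n(\operatorname{supp}R,B).
\]
To prove it, I fix a matching $M$. Conditional on $M$, the kernel factorizes over edges. Any edge $e$ containing a qubit of $B$ must be active for $R$, since inactive edges stay $\mathbb I\otimes\mathbb I$ and would force $Q_e=\mathbb I\otimes\mathbb I$. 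Its output label is therefore uniform over the $15$ nontrivial two-qubit Paulis. Among these, the labels with a prescribed support pattern on $e$ are equally likely: there are $9$ with support $\{a,b\}$ and $3$ with each single-qubit support. The probability of a specific label is thus $1/15$, which equals (probability of that support pattern)$\times 3^{-(\text{pattern size})}$, namely $\tfrac35\cdot\tfrac19$ or $\tfrac15\cdot\tfrac13$. Edges outside $B$ contribute the probability of the pattern $\mathbb I\otimes\mathbb I$ with no letter factor.

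Taking the product over edges gives
\[
K_M(R,Q)=3^{-|B|}\,\Pr_M[\text{support}\to B].
\]
Averaging over $M$ and comparing with \eqref{eq:kappa_def} yields the claim. The claim also justifies the representative-independence used there. Plugging it into the last-layer decomposition and grouping $R$ by support via \eqref{eq:kappa_T_classsum} for $T-1$ gives
\[
K_n^T(P,Q)=3^{-|B|}\sum_C\kappa_n^{T-1}(A_0,C)\,\kappa_n(C,B)=3^{-|B|}\kappa_n^T(A_0,B).
\]
For $T=1$ this is the claim directly.

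Finally, I sum over the fiber of strings with a given support. There are exactly $3^{|B|}$ strings with support $B$, so
\[
\tfrac12\sum_Q|K_n^T(P,Q)-\pi(Q)|=\tfrac12\sum_{B}3^{|B|}\cdot 3^{-|B|}\,|\kappa_n^T(A_0,B)-\pi_{\rm supp}(B)|,
\]
which is \eqref{eq:tvd_supports}.

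The only genuinely delicate step is the per-edge uniformity claim. It must hold for every intermediate $R$, not just in distribution, and must correctly handle edges where $R$ is active but the output support drops a qubit. That case is covered by the per-pattern count $3^{|\text{pattern}|}$, so I expect it to go through cleanly.
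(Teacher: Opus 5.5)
Your proposal is correct and follows essentially the paper's route: uniformity of the letters given the output support after the last layer, combined with the support law $\kappa_n^T$, followed by summing over the $3^{|B|}$ strings with support $B$. Your explicit per-edge check, $\tfrac1{15}=\tfrac35\cdot\tfrac19=\tfrac15\cdot\tfrac13$ for every intermediate $R$, makes rigorous what the paper only invokes informally as ``observation 2.''
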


\begin{proof}
Taking now observation 2, after just one layer, the letters of $P_T$
are, conditionally on the support, independent and uniform. After
$T\geq 1$ layers, multiplying \cref{eq:support_T_law} by this common
conditional distribution (and also using it in \cref{eq:Harr_supp} for
the Haar distribution) gives \cref{eq:dictionary}. Both
distributions are the same uniform letter distribution applied to
different support distributions, so their total variation distance is
carried entirely by the supports:
\begin{align*}
    \,\mathrm{TVD}\big(K_n^{T}(P,\cdot),\,\pi\big)
    &= \frac{1}{2}\sum_{Q\in\mathcal{P}_n^\star} \big| K_n^{T}(P,Q) - \pi(Q)\big|
    = \frac{1}{2}\sum_{B\in\mathcal{S}_n}\; \sum_{\substack{Q\,:\,\operatorname{supp}(Q)=B}} 3^{-|B|}\,
      \big|\kappa_n^{T}(A_0,B) - \pi_{\mathrm{supp}}(B)\big| \\
    &= \frac{1}{2}\sum_{B\in\mathcal{S}_n} \big|\kappa_n^{T}(A_0,B) - \pi_{\mathrm{supp}}(B)\big|
    \;=\; \,\mathrm{TVD}\big(\kappa_n^{T}(A_0,\cdot),\,\pi_{\mathrm{supp}}\big).
\end{align*}
Here we used $\mathrm{TVD}(\mu,\nu):=\frac{1}{2}\sum_x |\mu(x)-\nu(x)|$,
and in the second equality the fact that there are exactly $3^{|B|}$
strings with support $B$, which cancels the factor $3^{-|B|}$
carried by every string of that support.
\end{proof}

\Cref{lem:tvd_equivalence} simplifies the rest of the derivation: the
TVD condition of the Pauli Markov chain collapses into a support Markov
chain for $T\geq 1$, and $P$ enters only through
$A_0=\operatorname{supp}(P)$, so the worst case over the $4^n-1$ Pauli
strings is a worst case over the $2^n-1$ nonempty supports. For a
target accuracy $\eta\in(0,1]$, the mixed-query condition becomes:
\begin{equation}\label{eq:pauli_support_identity}
    \max_{P \in \mathcal{P}_n^\star}\;
    \mathrm{TVD}\Big( p_{\mathcal{E}_{PM}^{(T)}}(\,\cdot \mid P),\; p_{\mathrm{Haar}}(\,\cdot \mid P) \Big)
    \;=\;
    \max_{P \in \mathcal{P}_n^\star}\;
    \mathrm{TVD}\Big( K_n^{T}(P,\cdot),\; \pi \Big)
    \;=\;
    \max_{A \in \mathcal{S}_n}\;
    \mathrm{TVD}\Big( \kappa_n^{T}(A,\cdot),\; \pi_{\mathrm{supp}} \Big)\leq \eta.
\end{equation}
The rest of the appendix forgets Pauli strings, and only requires controlling the supports.

\newpage

\subsection{Grand coupling between support chains} \label{sec:Grand coupling between support chains}
\Cref{eq:pauli_support_identity} is a worst-case condition over the $2^n-1$ non-empty supports. A standard approach in reversible Markov chains is to attack the problem via the spectral gap $\Delta$ of the kernel~\cite{LevinPeresWilmer2017}. In our case, such an approach bounds the TVD as
\begin{equation}
    \max_{A\in\mathcal{S}_n}\; \mathrm{TVD}\big(\kappa_n^{T}(A,\cdot),\, \pi_{\mathrm{supp}}\big)
    \;\le\; e^{-T\,\Delta}\, \sqrt{\tfrac{1}{\pi_{\min}}},
\end{equation}
where $\pi_{\min} = \min_{A}\pi_{\mathrm{supp}}(A) = 3/(4^n-1)$. Even in the best scenario where $\Delta = \Omega(1)$, the resulting bound only becomes small once $T = O(n)$, far larger than the $O(\log n)$ depth we seek. We must therefore take a different route.

\subsubsection{The grand coupling}\label{sec:grandcoupling}
We now approach the Markov chain on supports via the strategy of grand coupling~\cite{LevinPeresWilmer2017}. The idea is the following: instead of studying the output probability distribution of a support $A_0$ through the Markov chain, the grand coupling fixes instead a randomized instance $w$ and studies the deterministic evolution of all supports under $w$ (see \cref{fig:grand-coupling-overview-panel-a}). Afterwards, we will infer information about the probability distributions from this engineered proof device.
\\
\\
In the support Markov chain associated to our ensemble of interest $\mathcal{E}_{PM}^{(T)}$, for each time step $t$ we sample one uniform perfect-matching $M_t \sim \mathcal{M}_n$, and for every edge $e = \{a,b\} \in M_t$ we sample and fix a random update
\begin{equation}\label{eq:shared_variables}
    Y_{t,e} \;=\;
    \{a,b\} \text{ with probability } \tfrac{3}{5},
    \qquad
    \{a\} \text{ with probability } \tfrac{1}{5},
    \qquad
    \{b\} \text{ with probability } \tfrac{1}{5},
\end{equation}
independently across edges and time steps. Every chain, regardless of the initial support, is then updated with these same shared global variables $Y_{t,e}$. Writing $\Psi_t$ for the update function of supports,
\begin{equation}\label{eq:coupled_update}
    A_{t+1} \;=\; \Psi_t(A_t)
    \;:=\; \bigcup_{\substack{e \in M_t \\ e \cap A_t \neq \emptyset}} Y_{t,e}.
\end{equation}
Therefore, each chain collects the outcome $Y_{t,e}$ of every edge that touches its current support, and ignores the rest. The whole deterministic evolution after T perfect matching layers is then fixed by $w=\{M_t,(Y_{t,e})_{e\in M_t}\}_{0\leq t < T}$. Because each layer can implement $|\mathcal{M}_n|=(n-1)!!$ different perfect matchings, and each of the $n/2$ edges of a matching can have the three possible outcomes $Y_{t,e}\in \{\{a\},\{b\},\{a,b\}\}$, after $T$ layers, the number of different realizations for $w$ is $((n-1)!! \;3^{n/2})^T$, which is the number of columns in \cref{fig:grand-coupling-overview-panel-a}.

Sharing the randomness via $w$ buys two structural facts, both immediate from \cref{eq:coupled_update}:

\begin{lemma}[Absorption and monotonicity]\label{lem:coupling_facts}
Under the grand coupling fixed realization $w$, the deterministic path for all times $t$:
\begin{enumerate}
    \item[(a)] if $A_t = B_t$, then $A_s = B_s$ for all $s \ge t$;
    \item[(b)] if $A_t \subseteq B_t$, then $A_s \subseteq B_s$ for all $s \ge t$.
\end{enumerate}
\end{lemma}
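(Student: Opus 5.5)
Both claims follow from the observation that, for a fixed realization $w$, each step map $\Psi_t$ in \cref{eq:coupled_update} is a deterministic function of the current support. Part (a) needs only that $\Psi_t$ is a function. Part (b) needs that $\Psi_t$ is monotone with respect to inclusion. I would prove the one-step statements and then induct on $s-t$.

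\textbf{Part (a).} Fix $w$, so the matchings $M_t$ and outcomes $Y_{t,e}$ are fixed. Then $\Psi_t:\mathcal S_n\to\mathcal S_n$ is a well-defined deterministic map, and every chain is updated by the same map at the same time. Hence $A_t=B_t$ implies $A_{t+1}=\Psi_t(A_t)=\Psi_t(B_t)=B_{t+1}$. Induction on $s\ge t$ then gives $A_s=B_s$ for all $s\ge t$.

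\textbf{Part (b).} First show that $A\subseteq B$ implies $\Psi_t(A)\subseteq\Psi_t(B)$. Take any edge $e\in M_t$ that is active for $A$, meaning $e\cap A\neq\emptyset$. Since $A\subseteq B$, we have $e\cap B\supseteq e\cap A\neq\emptyset$, so $e$ is also active for $B$. Thus the index set of the union defining $\Psi_t(A)$ is contained in the index set for $\Psi_t(B)$. Both unions use the same shared outcomes $Y_{t,e}$, so $\Psi_t(A)\subseteq\Psi_t(B)$. Applying this one-step fact repeatedly, by induction on $s\ge t$, gives $A_s\subseteq B_s$ for all $s\ge t$.

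The main point needing care is that $Y_{t,e}$ is sampled once per edge and shared across all chains, rather than resampled per chain. With that in place, no step is a real obstacle.

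The one-step maps also keep each chain in $\mathcal S_n$, so the chains never become empty. A nonempty $A$ has at least one active edge, since every qubit lies on some edge of a perfect matching, and each $Y_{t,e}$ is nonempty.

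Finally, I would note that each marginal chain has the law $\kappa_n$, because the shared outcomes have exactly the per-edge law in \cref{eq:edge_rule}. This justifies using the coupling in the subsequent lemmas, although it is not needed for (a) and (b) themselves.
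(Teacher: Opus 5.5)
Your proposal is correct and follows essentially the same argument as the paper. Part (a) holds because $\Psi_t$ is a deterministic function of the input support under the fixed realization $w$, and part (b) holds by one-step monotonicity $A\subseteq B\Rightarrow\Psi_t(A)\subseteq\Psi_t(B)$ (every edge active for $A$ is active for $B$ and contributes the same shared $Y_{t,e}$), in each case iterated in time; your extra remarks on nonemptiness and on each marginal chain following $\kappa_n$ are correct but not needed for the lemma itself.
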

\begin{proof}
The map $\Psi_t$ in \cref{eq:coupled_update} depends only on its input set and the shared variables $w$, so equal inputs give equal outputs. Iterating proves (a). For (b), if $A_t \subseteq B_t$, then every edge active for $A_t$ is active for $B_t$ and contributes the same $Y_{t,e}$ to both unions, while $B_t$ may activate additional edges. Hence $\Psi_t(A_t) \subseteq \Psi_t(B_t)$. Iterating proves (b).
\end{proof}
Therefore, fact (a) says that after two supports collide, they permanently receive identical updates forever (see \cref{fig:grand-coupling-overview-panel-b}). Fact (b) says the coupling is monotone: inclusions between supports, once true, are preserved for all time.

The two facts together suggest a powerful strategy. First, fact (a) suggests that we can reduce the mixing problem into a collision problem, by showing that all supports collide in $\log n$ time for most of the realizations $w$: if all columns in \cref{fig:grand-coupling-overview-panel-a} are equal, so will be the rows. Secondly, instead of tracking the whole $2^n-1$ non-trivial supports, fact (b) suggests nesting relations between them, so we only have to track a few of them. We elaborate on these two steps in the following.

\newpage
\subsubsection{From mixing to collisions}

We can now convert information from the columns of \cref{fig:grand-coupling-overview-panel-a} (fixed realizations $w$) back to its rows (probability distributions). To this end, we introduce the equilibrium chain $A^{(\pi)}_t$, whose initial support is drawn from $\pi_{supp}$ (that is, $A^{(\pi)}_0 = B$ with probability $\pi_{\mathrm{supp}}(B) = 3^{|B|}/(4^n-1)$) and which thereafter consumes the shared variables $Y_{t,e}$ like every other chain. This chain is a pure proof device, but it has a convenient property: since $\pi_{\mathrm{supp}}$ is stationary for $\kappa_n$, the chain $A^{(\pi)}_t$ is distributed exactly as $\pi_{\mathrm{supp}}$ at every time. Then, comparing the output distribution of $A$ against the target $\pi_{\mathrm{supp}}$ is equivalent to comparing the two chains for $A$ and $A^{(\pi)}$ under the same randomness, as we capture in the following lemma:
\begin{lemma}[Coupling inequality]\label{lem:coupling_inequality}
For any initial support $A \in \mathcal{S}_n$ and any time $T$,
\begin{equation}\label{eq:coupling_inequality}
    \mathrm{TVD}\big( \kappa_n^{T}(A,\cdot),\; \pi_{\mathrm{supp}} \big)
    \;\le\;
    \mathbb{P}\big( A_T \neq A^{(\pi)}_T \big).
\end{equation}
\end{lemma}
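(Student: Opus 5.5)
The plan is the standard coupling inequality: exhibit a coupling of $\kappa_n^{T}(A,\cdot)$ and $\pi_{\mathrm{supp}}$ and bound the total variation distance by the probability that the two coupled variables differ. The coupling comes directly from the grand coupling of \cref{sec:grandcoupling}. On one probability space we sample the shared randomness $w=\{M_t,(Y_{t,e})_{e\in M_t}\}_{0\le t<T}$, and independently of $w$ we draw $A^{(\pi)}_0\sim\pi_{\mathrm{supp}}$. Both $A_0=A$ and $A^{(\pi)}_0$ are then evolved with the same update maps $\Psi_t$ of \cref{eq:coupled_update}.

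The first step is to check the marginals. For a fixed input set $C$, the random set $\Psi_t(C)$ has exactly the law $\kappa_n(C,\cdot)$. The reason is that $M_t$ is uniform and the outcomes $Y_{t,e}$ on the active edges are independent with the probabilities in \cref{eq:edge_rule}, which is precisely the one-layer support rule defining $\kappa_n$. Moreover, the layer-$t$ variables are independent of everything revealed before time $t$, including $A^{(\pi)}_0$. Conditioning on the current state and applying the Markov property inductively therefore gives $A_T\sim\kappa_n^{T}(A,\cdot)$. By the same argument $A^{(\pi)}_T\sim\pi_{\mathrm{supp}}\kappa_n^{T}$. Stationarity \cref{eq:stationarity_pi_supp} then gives $\pi_{\mathrm{supp}}\kappa_n^{T}=\pi_{\mathrm{supp}}$.

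The second step is the elementary coupling bound. For any event $E\subseteq\mathcal S_n$,
\[
\mathbb P(A_T\in E)-\mathbb P(A^{(\pi)}_T\in E)\le \mathbb P(A_T\in E,\;A^{(\pi)}_T\notin E)\le \mathbb P(A_T\neq A^{(\pi)}_T).
\]
Taking the supremum over $E$ and using $\mathrm{TVD}(\mu,\nu)=\sup_E(\mu(E)-\nu(E))$ yields \cref{eq:coupling_inequality}.

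No real obstacle is expected. The only point needing care is the marginal-law verification: the shared variables must be fresh at each layer and independent of the random initial condition $A^{(\pi)}_0$, so that conditioning on the past does not bias the update. This holds by construction of $w$.
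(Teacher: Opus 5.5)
Your proposal is correct and follows the same argument as the paper. Both use the variational form of the total variation distance and bound $\mathbb P(A_T\in\mathcal B)-\mathbb P(A^{(\pi)}_T\in\mathcal B)$ by the disagreement probability; the paper splits over the collision event, while you split over $\mathcal B$ and its complement. You also check explicitly that $A^{(\pi)}_0$ is independent of the shared randomness $w$, so that each coupled chain has the correct marginal law and stationarity gives $A^{(\pi)}_T\sim\pi_{\mathrm{supp}}$. This makes precise a point the paper only asserts in \cref{eq:marginals}.
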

\begin{proof}
Each chain, viewed on its own, follows the correct transition law on every set of supports $\mathcal{B} \subseteq \mathcal{S}_n$,
\begin{equation}\label{eq:marginals}
    \kappa_n^{T}(A,\mathcal{B}) = \mathbb{P}(A_T \in \mathcal{B}),
    \qquad
    \pi_{\mathrm{supp}}(\mathcal{B}) = \mathbb{P}\big(A^{(\pi)}_T \in \mathcal{B}\big).
\end{equation}
Taking now the variational form of the TVD, that is $\mathrm{TVD}(\mu,\nu) = \max_{\mathcal{B} \subseteq \mathcal{S}_n}
\big[\mu(\mathcal{B}) - \nu(\mathcal{B})\big]$, together with
\cref{eq:marginals},
\begin{equation}\label{eq:variational}
    \mathrm{TVD}\big( \kappa_n^{T}(A,\cdot),\; \pi_{\mathrm{supp}} \big)
    = \max_{\mathcal{B} \subseteq \mathcal{S}_n}
      \Big[\, \mathbb{P}(A_T \in \mathcal{B})
            - \mathbb{P}\big(A^{(\pi)}_T \in \mathcal{B}\big) \Big].
\end{equation}
Fixing $\mathcal{B}$, and decomposing each probability in \cref{eq:variational}
over the collision event $\{A_T = A^{(\pi)}_T\}$ and its complement,
\begin{align}
\label{eq:transition_1}
    \mathbb{P}(A_T \in \mathcal{B})
    &= \mathbb{P}\big(A_T \in \mathcal{B},\, A_T = A^{(\pi)}_T\big)
     + \mathbb{P}\big(A_T \in \mathcal{B},\, A_T \neq A^{(\pi)}_T\big), \\
    \label{eq:transition_2}
    \mathbb{P}\big(A^{(\pi)}_T \in \mathcal{B}\big)
    &= \mathbb{P}\big(A^{(\pi)}_T \in \mathcal{B},\, A_T = A^{(\pi)}_T\big)
     + \mathbb{P}\big(A^{(\pi)}_T \in \mathcal{B},\, A_T \neq A^{(\pi)}_T\big). 
\end{align}
On the collision event, the two chains occupy the same support, so $\{A_T \in \mathcal{B}\} \cap \{A_T = A^{(\pi)}_T\}
    \;=\;
    \{A^{(\pi)}_T \in \mathcal{B}\} \cap \{A_T = A^{(\pi)}_T\}$, and therefore $ \mathbb{P}\big(A_T \in \mathcal{B},\, A_T = A^{(\pi)}_T\big) = \mathbb{P}\big(A^{(\pi)}_T \in \mathcal{B},\, A_T = A^{(\pi)}_T\big)$. Subtracting \cref{eq:transition_2} from \cref{eq:transition_1}:
    
\begin{align}\label{eq:after_cancel}
    \mathbb{P}(A_T \in \mathcal{B})
    - \mathbb{P}\big(A^{(\pi)}_T \in \mathcal{B}\big)
    &=
    \mathbb{P}\big(A_T \in \mathcal{B},\, A_T \neq A^{(\pi)}_T\big)
    - \mathbb{P}\big(A^{(\pi)}_T \in \mathcal{B},\, A_T \neq A^{(\pi)}_T\big) \nonumber\\
    &\leq \mathbb{P}\big(A_T \in \mathcal{B},\, A_T \neq A^{(\pi)}_T\big) \nonumber\\ 
    &\leq
    \mathbb{P}\big(A_T \neq A^{(\pi)}_T\big),
\end{align}
where in the second step we dropped the negative term, and in the last one we discarded the constraint $A_T\in\mathcal{B}$ in the remaining term. The final bound no longer depends on $\mathcal{B}$. Inserting it in \cref{eq:variational} already saturates the maximum, which proves
\cref{eq:coupling_inequality}.
\end{proof}
The right-hand side of \cref{eq:coupling_inequality} has a nice interpretation in the table of \cref{fig:grand-coupling-overview-panel-a}. Write $A_T(w\mid A)$ for the entry of the table: the support reached at time $T$ starting from support $A$ under a fixed realization $w = \{M_t, Y_{t,e}\}_{t \le T}$ of the shared variables. Two independent pieces of randomness enter: which column $w$ we are in, and which partner row $B$ the equilibrium chain starts from. Unpacking,
\begin{equation}\label{eq:coupling_unpacked}
    \mathbb{P}\big( A_T \neq A^{(\pi)}_T \big)
    \;=\;
    \sum_{w}\,\sum_{B \in \mathcal{S}_n}\;
    \underbrace{\mathbb{P}(w)}_{\substack{\text{prob.\ of}\\ \text{the column}}}\;
    \underbrace{\pi_{\mathrm{supp}}(B)}_{\substack{\text{weight of}\\ \text{the partner}}}\;
    \underbrace{\mathbf{1}\big[\, A_T(w \mid A) \neq A_T(w \mid B) \,\big]}_{\text{do the two cells differ?}}
\end{equation}
In words: pick a random column $w$ with probability $\mathbb{P}(w)$, pick a $\pi_{\mathrm{supp}}$-weighted partner row, check whether
rows $A$ and $B$ disagree in that column, and compute this probability.

Therefore, \cref{eq:coupling_unpacked} tells us what to do: if we manage to show rapid coalescence in the columns, \Cref{lem:coupling_facts}(a) says that such coalescence, once it occurs, is permanent. The mixing problem is then reduced to a collision task: show that, outside an exponentially small fraction of columns, the entire column collapses to one trajectory within $T = O(\log n)$ layers.

\subsubsection{The sandwich: two chains to rule them all}
At first sight, the worst case over all initial supports still forces us to control exponentially many chains. We now exploit the fact (b) about monotonicity in \cref{lem:coupling_facts}, which allows us to reduce the problem to only controlling $n+1$ chains.

Now run the grand coupling from the full support $[n]$ and from every singleton support $\{i\}$, and write $A^{(\mathrm{full})}_t$ and $A^{(i)}_t$ for the resulting chains. We also take any initial support $S \in \mathcal{S}_n$ and pick any qubit $i \in S$. Since $\{i\} \subseteq S \subseteq [n]$ at time zero, \Cref{lem:coupling_facts}(b) propagates the inclusions for all time:
\begin{equation}\label{eq:sandwich_1}
    A^{(i)}_t \;\subseteq\; A^{(S)}_t \;\subseteq\; A^{(\mathrm{full})}_t
    \qquad \text{for every } t \ge 0.
\end{equation}
Consequently, on the event that every singleton chain has caught the full chain, all chains are equal: if $A^{(i)}_T = A^{(\mathrm{full})}_T$ for all $i \in [n]$, then any chain sandwiched as in \cref{eq:sandwich_1} equals both. This applies in particular to the stationary chain $A^{(\pi)}_t$: its initial support is non-empty, so it contains some singleton, and it is sandwiched like every other chain. Combining with the coupling inequality \cref{eq:coupling_inequality} and a union bound over the singletons:

\begin{lemma}[Reduction to singleton coalescence]\label{lem:singleton_reduction}
Under the grand coupling, for every $T \ge 0$,
\begin{equation}\label{eq:singleton_reduction}
    \max_{A \in \mathcal{S}_n}\; \mathrm{TVD}\big( \kappa_n^{T}(A,\cdot),\; \pi_{\mathrm{supp}} \big)
    \;\le\;
    \sum_{i=1}^{n} \mathbb{P}\big( A^{(i)}_T \neq A^{(\mathrm{full})}_T \big).
\end{equation}
\end{lemma}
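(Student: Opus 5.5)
The plan is to chain the coupling inequality of \cref{lem:coupling_inequality} with the sandwich \cref{eq:sandwich_1}, and then take a union bound over the singleton chains.

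First, fix $T\ge0$ and an arbitrary initial support $A\in\mathcal{S}_n$. By \cref{lem:coupling_inequality},
\[
\mathrm{TVD}\big(\kappa_n^T(A,\cdot),\pi_{\rm supp}\big)\le \mathbb{P}\big(A_T\neq A_T^{(\pi)}\big).
\]
All chains, including $A^{(\pi)}_t$, run on the same shared variables $w$. The initial support of $A^{(\pi)}$ is drawn independently of $w$, but it is only ever a nonempty set.

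Second, define the good event
\[
G:=\bigcap_{i=1}^n\{A^{(i)}_T=A^{(\mathrm{full})}_T\}.
\]
Let $S$ be any nonempty initial set and pick $i\in S$. Monotonicity, \cref{lem:coupling_facts}(b), gives $A^{(i)}_T\subseteq A^{(S)}_T\subseteq A^{(\mathrm{full})}_T$ pointwise in $w$. On $G$ the outer two sets coincide, so $A^{(S)}_T=A^{(\mathrm{full})}_T$.

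We apply this twice. Taking $S=A$ gives $A_T=A^{(\mathrm{full})}_T$. Taking $S=A^{(\pi)}_0$, which is nonempty and so contains some singleton, gives $A^{(\pi)}_T=A^{(\mathrm{full})}_T$. This second application is pointwise in the random initial set, since the sandwich holds for every realization of $A^{(\pi)}_0$ and every $w$.

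Hence $G\subseteq\{A_T=A^{(\pi)}_T\}$, and therefore
\[
\mathbb{P}\big(A_T\neq A^{(\pi)}_T\big)\le \mathbb{P}(G^c)\le\sum_{i=1}^n\mathbb{P}\big(A^{(i)}_T\neq A^{(\mathrm{full})}_T\big)
\]
by a union bound. The right-hand side does not depend on $A$, so we can take the maximum over $A$ on the left.

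There is no real obstacle here. The one point that needs care is that the stationary chain's random starting set must be handled pointwise. The coupling must also be arranged so that each chain, including $A^{(\pi)}$, has the correct marginal law; this holds because its initial set is independent of $w$ and $\pi_{\rm supp}$ is stationary.
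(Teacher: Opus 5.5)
Your proposal is correct and follows the paper's proof essentially step by step. You combine the coupling inequality with the monotone sandwich $A^{(i)}_T\subseteq A_T\subseteq A^{(\mathrm{full})}_T$, apply it both to $A$ and to the stationary chain $A^{(\pi)}$ on the event that all singleton chains have coalesced with the full chain, and finish with a union bound; your remark that the stationary chain is handled pointwise in its nonempty random initial set is exactly the observation the paper uses.
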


\begin{proof}
Fix an initial support $A \in \mathcal{S}_n$ and run, alongside $A_t$ and the
stationary chain $A^{(\pi)}_t$, the chains $A^{(\mathrm{full})}_t$ and
$A^{(i)}_t$ for $i \in [n]$, all under the same shared variables. Define the event where all singleton chains have coalesced with the full-support chain:
\begin{equation}\label{eq:catchup_event}
    \mathcal{C}_T \;:=\; \bigcap_{i=1}^{n}
    \big\{ A^{(i)}_T = A^{(\mathrm{full})}_T \big\}.
\end{equation}
Since $A$ is non-empty, pick any $i \in A$, so $\{i\} \subseteq A \subseteq [n]$ at time zero. By
\Cref{lem:coupling_facts}(b), at any given $T$
\begin{equation}
    A^{(i)}_T \;\subseteq\; A_T \;\subseteq\; A^{(\mathrm{full})}_T .
\end{equation}
On $\mathcal{C}_T$ the two ends of this sandwich are equal, which forces all chains to be equal:
$A_T=A_T^{\pi}=A_T^{(i)} = A^{(\mathrm{full})}_T$. Therefore
$A_T = A^{(\mathrm{full})}_T = A^{(\pi)}_T$ on $\mathcal{C}_T$, i.e.
\begin{equation}\label{eq:disagreement_implies_miss}
    \big\{ A_T \neq A^{(\pi)}_T \big\} \;\subseteq\; \mathcal{C}_T^{\,c}
    \;=\; \bigcup_{i=1}^{n} \big\{ A^{(i)}_T \neq A^{(\mathrm{full})}_T \big\}.
\end{equation}
Combining the coupling inequality from \cref{lem:coupling_inequality} with
\cref{eq:disagreement_implies_miss} and the union bound,
\begin{equation}
    \mathrm{TVD}\big( \kappa_n^{T}(A,\cdot),\; \pi_{\mathrm{supp}} \big)
    \;\le\;
    \mathbb{P}\big( A_T \neq A^{(\pi)}_T \big)
    \;\le\;
    \mathbb{P}\big( \mathcal{C}_T^{\,c} \big)
    \;\le\;
    \sum_{i=1}^{n} \mathbb{P}\big( A^{(i)}_T \neq A^{(\mathrm{full})}_T \big).
\end{equation}
The right-hand side does not depend on $A$, so the bound survives the maximum over $A \in \mathcal{S}_n$, which proves \cref{eq:singleton_reduction}.
\end{proof}

As a conclusion, by only controlling the coalescence between the singleton chains and the full-support chain, \Cref{lem:singleton_reduction} bounds the TVD on any of the exponentially-many input supports $A$.

\subsubsection{The gap set} \label{sec:gapset}
For $i\in[n]$, we now define the discrepancy between the two chains $A^{(\mathrm{full})}_t$ and $A_t^{(i)}$ as the gap set:
\begin{equation}
    G_t^{(i)}= A^{(\mathrm{full})}_t \backslash A_t^{(i)}.
\end{equation}
In particular, the two chains collide ($A_t^{(i)}=A^{(\mathrm{full})}_t$) when $G_t=\emptyset$. We can infer the update of the gap set from \cref{eq:coupled_update}. Because $A_t^{(i)} \subseteq A^{(\mathrm{full})}_t$, every active edge of the small chain is also active on the full chain, and contributes the same $Y_{t,e}$ to both. The only elements in the updated gap set can arise from edges that are for the full chain, but inactive for the small chain:
\begin{equation}
    G_{t+1}^{(i)}
=
\bigcup_{\substack{
e\cap A_t^{(i)}=\emptyset\\
e\cap A_t^{(\mathrm{full})}\neq\emptyset
}}
Y_{t,e}.
\end{equation}
We now want to understand this contraction of the gap set towards $\emptyset$. To this end, we check what happens for a qubit in $G_t^{(i)}$, when getting paired with another: 
\begin{itemize}
    \item If the paired qubit is in $A_t^{(i)}$, that edge becomes active for both chains, as both receive the same $Y_{t,e}$ and that qubit in $G_t^{(i)}$ is not included in $G_{t+1}^{(i)}$. This helps contract the gap set.
    \item If the paired qubit is outside $A_t^{(i)}$, but inside $A_t^{(\mathrm{full})}$ (hence in $G_t^{(i)}$), then those two qubits are still present in $G_{t+1}^{(i)}$ with probability $3/5$ ($Y_{t,e}=\{a,b\}$), helping to keep the gap size equal. Only one of the qubits is present in $G_{t+1}^{(i)}$ with probability $2/5$ ($Y_{t,e}\in \{\{a\},\{b\}\}$), helping the gap set to contract. 
    \item If the paired qubit is outside $A_t^{(\mathrm{full})}$, then the edge is active for the full chain and inactive for the small one, so both qubits are present in $G_{t+1}^{(i)}$ with probability $3/5$ ($Y_{t,e}=\{a,b\}$), helping the gap set to expand. Only one of them is present with probability $2/5$ ($Y_{t,e}\in\{\{a\},\{b\}\}$), leaving the gap set size equal.
\end{itemize}
Therefore, each element in the gap set dies with probability roughly equal to $|A_t^{(i)}|/n$, and survives or expands with probability roughly equal to $1-|A_t^{(i)}|/n$. Therefore, if the small chain covers at least half the qubits ($|A_t^{(i)}|\geq n/2$), this suggests the geometric contraction $|G_t^{(i)}| \lesssim c^t$, with $c<1$, so $G_T^{(i)}=\emptyset$ for $T=O(\log n)$. However, if $|A_t^{(i)}|<n/2$, we have no geometrical contraction. Note the pleasant asymmetry: only the singleton chains matter, because $A_t^{(\mathrm{full})} \supseteq A_t^{(i)}$ is satisfied for free.
\\
\\
This suggests a clean three-step plan. Because the singletons $A_0^{(i)}$ start at the worst part of the chain:
\begin{enumerate}
    \item \textbf{Growth:} Grow the initial singletons $A_0^{(i)}$ to size $|A_T^{(i)}|\geq n/2$ with exponentially high probability for $T=O(\log n)$.
    \item \textbf{Persistence:} Show that supports $\geq n/2$ persist above $n/2$ with exponentially high probability.
    \item \textbf{Contraction:} Show geometrical contraction of the gap set towards $\emptyset$ for supports $\geq n/2$.
\end{enumerate}
The rest of the appendix is devoted to proving rigorously this three-step roadmap for all of the singletons, so that each summand in \cref{eq:singleton_reduction} is at most $\eta/n$ after $T = O(\log \tfrac{n}{\eta})$ steps, and the TVD is bounded by $\eta$.

\newpage

\subsection{Growth of singletons}
\label{sec:growth}

The previous \cref{sec:Grand coupling between support chains} ended with a program to attack \cref{lem:singleton_reduction}: grow singletons to size $\geq n/2$, show large supports of size $\geq n/2$ stay large, and finally contract them geometrically. In this section we carry out the first step, and show that every singleton chain $A^{(i)}_t$ reaches half density, $|A_T^{(i)}|\ge n/2$ after $T=O(\log n)$ perfect matching layers, except with an exponentially small failure probability that we are free to tune.
\\
\\
To this end, we begin by defining the support weight at step $t$ as $H_t:=|A_t|$. For a fixed realization $w$ of the grand coupling, we now define the filtration at time $t$ as the information contained in its first $t$ layers, which formally is captured by the $\sigma$-algebra:
\begin{equation}\label{eq:filtration}
  \mathcal F_t \;:=\; \sigma\Big(\big(M_s,\,(Y_{s,e})_{e\in M_s}\big)\;:\;0\le s<t\Big).
\end{equation}
Conditioning on $\mathcal F_t$ freezes the past of every coupled chain simultaneously. The remaining randomness is in the layers $\{M_s,Y_{s,e}\}$ with $s\geq t$.

\subsubsection{Two sources of randomness and the expected drift}
Conditioned on $\mathcal{F}_t$, a support chain lands on $A_t$, with size $H_t$. We now want to understand what happens to the updated support size $H_{t+1}$ after one more perfect matching layer. There are two sources of stochasticity involved in this whole section:
\begin{enumerate}
   \item \textbf{The active edges.} For the fresh perfect matching layer, a perfect matching $M_t$ is sampled from $\mathcal{M}_n$. This defines the stochastic variable $R_t$: the number of active edges, those that have some endpoint in $A_t$. Out of the $R_t$ active edges, we define $J_t$ as the number of active edges that are internal to $A_t$, that is, with both endpoints in $A_t$. Every internal edge holds two support qubits, and every other active edge exactly one, so counting support qubits,
    \begin{equation}
    \label{eq:active-edges}
        R_t \;=\; H_t - J_t .
    \end{equation}
    The randomness of the active edges can therefore be captured in the internal-edge count $J_t$.
    \item \textbf{Edge update.} Once the perfect matching $M_t$ is fixed, so is the number of active edges $R_t$ and internal edges $J_t$. Each active edge keeps both of its endpoints with probability $\tfrac{3}{5}$ and a single one with probability $\tfrac{2}{5}$, independently across edges. Indeed, by the update rule \cref{eq:coupled_update}, the new support $A_{t+1}$ is the disjoint union of the outcomes $Y_{t,e}$ over the active edges, so each active edge contributes one guaranteed qubit, plus a second one with probability $\tfrac{3}{5}$. The stochastic updated size is therefore a binomial distribution on top of the active-edge count, so conditional on $(\mathcal{F}_t, M_t)$:
    \begin{equation}
    \label{eq:one-step}
        H_{t+1} \;=\; R_t + B_t, \qquad B_t\sim\mathrm{Bin}\!\left(R_t , \tfrac{3}{5}\right),
    \end{equation}
    where $B_t$ counts the active edges that keep both
endpoints.
\end{enumerate}
Two deterministic bounds come for free from this structure: every internal edge uses two support qubits, so $J_t \leq \lfloor H_t/2 \rfloor$ and hence $R_t \geq \lceil H_t/2 \rceil \geq 1$; and every active edge outputs one or two qubits, so $R_t \leq H_{t+1} \leq 2R_t$. Together,
\begin{equation}
\label{eq:sandwich}
    \lceil H_t / 2 \rceil \;\leq\; H_{t+1} \;\leq\; 2 H_t .
\end{equation}
In particular $A_{t+1} \neq \emptyset$: one layer can at most double or halve the support, which can stall but never dies.

We now capture the expected growth of the support size, or the one-step drift, in the following lemma:

\begin{lemma}[Expected one-step drift]
\label{lem:drift}
Condition on $\mathcal{F}_t$, so that the support $A_t$ and its size $H_t \geq 1$ are fixed. The expected value of internal edges is
\begin{equation}
\label{eq:internal-edges}
    \mathbb{E}\!\left[ J_t \,\middle|\, \mathcal{F}_t \right] \;=\; \frac{H_t (H_t - 1)}{2(n-1)} .
\end{equation}
Consequently, if $H_t \leq n/2$,
\begin{equation}
\label{eq:drift-heuristic}
    \mathbb{E}\!\left[ H_{t+1} \,\middle|\, \mathcal{F}_t \right] \;\geq\; \tfrac{6}{5}\, H_t .
\end{equation}
\end{lemma}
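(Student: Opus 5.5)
The plan is to compute $\E[J_t\mid\mathcal F_t]$ by linearity over pairs of support qubits, and then feed it into the one-step structure \cref{eq:one-step} together with the identity $R_t=H_t-J_t$ from \cref{eq:active-edges}. Conditioning on $\mathcal F_t$ fixes $A_t$. The fresh matching $M_t\sim\mathrm{Unif}(\mathcal M_n)$ is independent of $\mathcal F_t$, because layers are sampled independently in the grand coupling.

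First, write $J_t=\sum_{\{a,b\}\subseteq A_t}\mathbf 1[\{a,b\}\in M_t]$. In a uniformly random perfect matching on $[n]$, the partner of a fixed vertex $a$ is uniform over the other $n-1$ vertices. This follows by symmetry: relabelling by a permutation fixing $a$ preserves the uniform measure. Hence $\Pr[\{a,b\}\in M_t]=1/(n-1)$. There are $\binom{H_t}{2}$ unordered pairs inside $A_t$, so linearity gives \cref{eq:internal-edges}.

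Next, compute the drift. Conditional on $(\mathcal F_t,M_t)$ we have $\E[H_{t+1}\mid\mathcal F_t,M_t]=R_t+\tfrac35R_t=\tfrac85R_t$. By the tower property,
\[
\E[H_{t+1}\mid\mathcal F_t]=\tfrac85\bigl(H_t-\E[J_t\mid\mathcal F_t]\bigr)=\tfrac85H_t\Bigl(1-\tfrac{H_t-1}{2(n-1)}\Bigr).
\]
This is the exact formula quoted in the short form.

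Finally, use $H_t\le n/2$. Then $H_t-1\le n/2-1=(n-2)/2\le(n-1)/2$, so $(H_t-1)/(2(n-1))\le 1/4$. The bracket is therefore at least $3/4$, and $\tfrac85\cdot\tfrac34=\tfrac65$, which gives \cref{eq:drift-heuristic}.

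The only step needing care is the claim that the partner distribution is uniform and independent of $\mathcal F_t$; everything else is linearity and arithmetic. The argument has no real obstacle.
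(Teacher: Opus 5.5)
Your proposal is correct and follows the paper's proof essentially step for step. The paper counts over support qubits, each having an internal partner with probability $(H_t-1)/(n-1)$, and then halves to correct the double counting, whereas you sum directly over the $\binom{H_t}{2}$ unordered pairs; this is the same linearity computation. The tower-property drift formula and the bound $(H_t-1)/(2(n-1))\le 1/4$ are identical to the paper's.
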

 
\begin{proof}
For \cref{eq:internal-edges}, a qubit in the support has equal chance $1/(n-1)$ to be paired with any other qubit. Of those, $H_t-1$ are also in the support, so every qubit gets an internal partner with probability $(H_t-1)/(n-1)$. Summing over all support qubits gives $H_t(H_t-1)/(n-1)$, but an internal edge contains two support qubits, so this count sees every internal edge twice. Dividing by 2 yields \cref{eq:internal-edges}.
\\
\\
For the drift, we first take the expected value for a fixed matching in \cref{eq:one-step}:
\begin{equation} \label{eq:match_update}
    \mathbb{E}\!\left[ H_{t+1} \,\middle|\, \mathcal{F}_t, M_t \right]
    \;=\; R_t + \mathbb{E}\!\left[ B_t \,\middle|\, \mathcal{F}_t, M_t \right]
    \;=\; R_t + \tfrac{3}{5}\, R_t
    \;=\; \tfrac{8}{5}\, R_t
    \;=\; \tfrac{8}{5} \left( H_t - J_t \right),
\end{equation}
where we used $\mathbb{E}[B_t]=\frac{3}{5}R_t$ for $B_t \sim \mathrm{Bin}\!\left(R_t , \tfrac{3}{5}\right)$ in the second step, and \cref{eq:active-edges} in the last one. Then, averaging \cref{eq:match_update} over the matchings yields
\begin{align}
    \mathbb{E}\!\left[ H_{t+1} \,\middle|\, \mathcal{F}_t \right]
    \;=\; \mathbb{E}\Big[\, \mathbb{E}\!\left[ H_{t+1} \,\middle|\, \mathcal{F}_t, M_t \right] \,\Big|\, \mathcal{F}_t \Big]
    \;=\; \tfrac{8}{5} \left( H_t - \mathbb{E}\!\left[ J_t \,\middle|\, \mathcal{F}_t \right] \right)
    \;=\; \tfrac{8}{5}\, H_t \left( 1 - \frac{H_t - 1}{2(n-1)} \right), \label{eq:drift}
\end{align}
where we inserted the already-proven \cref{eq:internal-edges} in the last step. Finally, if $H_t \leq n/2$, the drag of the internal edges is small, $\frac{H_t - 1}{2(n-1)} \leq \frac{n/2 - 1}{2(n-1)} < \frac{1}{4}$, and \cref{eq:drift-heuristic} follows.
\end{proof}
The previous \cref{lem:drift} has a clean reading: the support size has a positive multiplicative drift when $H_t\leq n/2$. This suggests that a logarithmic number of layers should carry even a singleton above $n/2$. To turn this into a guarantee, we must control additionally the two sources of stochasticity around the drift: the number active edges may be unlucky (too few, with many internal edges), and the edge updates may be unlucky too (too many active edges update into a single endpoint).

There is, however, one more danger to account for: at small sizes the support can get stuck. A singleton, for instance, simply stays at size one with probability $\tfrac{2}{5}$. The picture to attack is therefore: once the support is large, it grows steadily after controlling the stochasticity sources. While it is small, it may stall for a few layers. In the rest of the section we show that the stalls are short and the growth is reliable with controlled stochasticity sources, so that every singleton reaches half density after logarithmically many layers.

\begin{figure}[t]
    \centering
    \includegraphics[width=0.85\linewidth]{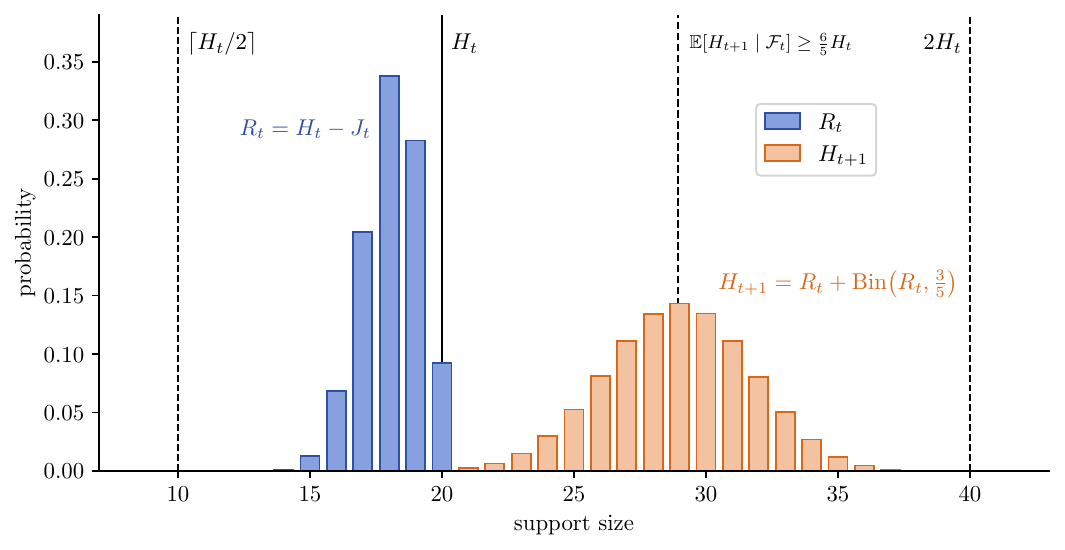}
    \caption{Exemplified one step growth of a support of size $H_t=20$ for $n=100$, conditioned on $\mathcal{F}_t$. Sampling a matching induces a random variable of active edges $R_t$ (blue). Each of the resulting active edges outputs a support of size $1$ with probability $2/5$, or output support of size $2$ with probability $3/5$, resulting in an updated $H_{t+1}$ distribution from \cref{eq:one-step} (orange). The resulting distribution is contained in \cref{eq:sandwich}.}
    \label{fig:placeholder}
\end{figure}

\newpage

\subsubsection{Controlling stochasticity: One step of growth with high probability}\label{sec:growth-step}
We now control the fluctuations of both sources of stochasticity around the drift: the active edges and the edge updates.
\\
\\
We begin with the stochasticity of active edges $R_t$, which we control with the variable $J_t=H_t-R_t$ via \cref{eq:active-edges}. We want to show that $J_t$ concentrates around its mean. We cannot use a standard Chernoff bound for $J_t$, as internal edges are not formed independently and are correlated (if $a$ is matched to $b$, then $a$ cannot be matched to $c$). Instead, we reveal the matching one edge at a time and apply the bounded-difference martingale estimate of Azuma~\cite{Azuma1967} to the Doob martingale of $J_t$ along the revelation.
\begin{lemma}[Matching exposure inequality]\label{lem:exposure}
Let $A_t\subseteq[n]$ with $|A_t|=H_t\ge 1$, let $M_t$ be a uniform perfect-matching of $[n]$, and let $J_t$ be the number of internal edges of $M_t$ with both endpoints in $A_t$. Then, for every $u\ge0$,
\begin{equation}
  \mathbb P\big(J_t-\mathbb EJ_t\ge u\big)\;\le\;\exp\!\Big(-\frac{u^2}{2H_t}\Big).
\end{equation}
\end{lemma}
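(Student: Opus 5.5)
We plan to follow the route indicated just before the statement: reveal the uniform perfect matching $M_t$ one edge at a time, and apply Azuma's inequality to the Doob martingale of $J_t$. The exposure order is chosen so that only $H_t$ steps carry any information about $J_t$. Enumerate the support as $A_t=\{a_1,\ldots,a_{H_t}\}$. At step $s$, if $a_s$ is still unmatched, reveal its partner, which is uniform among the currently unmatched vertices other than $a_s$. If $a_s$ is already matched, the step reveals nothing. After the $H_t$ steps every support qubit has its partner revealed, so $J_t$ is determined. We then set $Z_s:=\mathbb E[J_t\mid\mathcal G_s]$, where $\mathcal G_s$ is the information after step $s$. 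This gives $Z_0=\mathbb E J_t$ and $Z_{H_t}=J_t$, and Azuma then gives $\exp(-u^2/(2\sum_s c_s^2))$. It therefore suffices to show $|Z_s-Z_{s-1}|\le c_s$ with $\sum_s c_s^2\le H_t$, i.e. $c_s\le 1$.

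The key step is the bounded-difference estimate, which we would prove by a switching coupling. Conditional on $\mathcal G_{s-1}$, the remaining unmatched vertices are matched uniformly. Compare two revealed partners $b$ and $b'$ of $a_s$. The transposition $\sigma$ of $b$ and $b'$ is a bijection between completions with $\{a_s,b\}$ and those with $\{a_s,b'\}$: it swaps the partner of $a_s$ and the partner of the other vertex. Under this bijection, only the edge through $a_s$ and the edge through $b'$ (respectively $b$) change. Hence $J_t$ changes by at most $2$, and a naive bound gives $c_s=2$ and exponent $u^2/(8H_t)$.

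The main obstacle is sharpening this to $c_s\le1$ to get exactly $\exp(-u^2/(2H_t))$. I would try a finer count. Let $x\in\{b,b'\}$ and write $c$ for the other endpoint of the switched edge. The edges $\{a_s,x\}$ and $\{y,c\}$ become $\{a_s,y\}$ and $\{x,c\}$, where $y$ is the other element of $\{b,b'\}$. The change in $J_t$ is
\[
\mathbf 1[b'\in A_t]-\mathbf 1[b\in A_t]+\mathbf 1[\{c,b\}\subseteq A_t]-\mathbf 1[\{c,b'\}\subseteq A_t],
\]
up to sign. If $b,b'$ are both in or both out of $A_t$, this is at most $1$ in absolute value. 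If exactly one is in $A_t$, say $b\in A_t$ and $b'\notin A_t$, the first two terms give $-1$ and the last two give $+\mathbf 1[c\in A_t]$, so the total lies in $\{-1,0\}$. Thus $|J_t(\sigma\omega)-J_t(\omega)|\le1$ pointwise. Since $Z_s-Z_{s-1}$ is an average over $b'$ of such differences, $|Z_s-Z_{s-1}|\le1$, so $c_s=1$. Steps where $a_s$ is already matched have $c_s=0$, so $\sum_s c_s^2\le H_t$.

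Finally, Azuma–Hoeffding gives $\mathbb P(J_t-\mathbb EJ_t\ge u)\le\exp(-u^2/(2H_t))$. If the pointwise bound $1$ failed in some case, the fallback is $c_s=2$, which gives the weaker constant $\exp(-u^2/(8H_t))$. That would still suffice downstream, since only the Gaussian-type scaling in $H_t$ is used later.
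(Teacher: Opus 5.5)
Your proposal is correct and is essentially the paper's proof: you use the same exposure of the partners of the support qubits over $H_t$ stages, the same switching bijection $\{a_s,b\},\{b',c\}\leftrightarrow\{a_s,b'\},\{b,c\}$, whose effect on $J_t$ factors as $(1-\mathbf 1[c\in A_t])(\mathbf 1[b'\in A_t]-\mathbf 1[b\in A_t])\in\{-1,0,1\}$ (the paper's $(a-d)(c-b)$ with $a=1$), and the same Azuma--Hoeffding step with increments bounded by one. Your case analysis already gives $c_s\le 1$, so the $c_s=2$ fallback is never needed.
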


\begin{proof}
We  reveal the matching in stages: while some qubit of $A_t$ is unmatched, pick the lowest labelled qubit and reveal its partner. Each stage matches at least one qubit of $A_t$, so after at most $H_t$ stages all of $A_t$ is matched and $J_t$ is determined. We continue with zero-contribution stages, if needed, until exactly $H_t$ stages have been counted. Let $Z_j:=\mathbb E[J_t\mid\text{first $j$ revelations}]$ be the associated Doob martingale, so that $Z_0 = \mathbb E[J_t]$ (nothing revealed yet) and $Z_{H_t} = J_t$ (all of $A_t$ matched, so $J_t$ is determined).

We claim $|Z_j-Z_{j-1}|\le 1$. To prove it, we fix a stage $j$, and let $v\in A_t$ be the qubit whose partner is being revealed. We compare two possible partners $w\ne w'$, both currently unmatched. Conditional on $\{v,w\}\in M$ (and on the past revelations), the qubit $w'$ is matched to some unmatched qubit $z$. The switch
\[
  \{v,w\},\{w',z\}\;\longleftrightarrow\;\{v,w'\},\{w,z\}
\]
is a bijection between the conditional matching ensembles given the two outcomes. The bijection alters only these two edges, so it can only change $J_t$ by at most one. To see this we take $a=[v \in A_t]$, $b=[w \in A_t]$, $c=[w'\in A_t]$, $d=[z \in A_t]$, and consider that an edge is internal iff both its bits are $1$:
\begin{equation}
    \Delta J_t=(ac+bd)-(ab+cd)=(a-d)(c-b)
\end{equation}
Therefore $|\Delta J_t|\leq 1$. Any zero-contribution stages have increment zero. Hence the
conditional expectations of $J_t$ given any two possible $j$-th revelations differ by at
most one, so the Doob increment satisfies $|Z_j-Z_{j-1}| \leq 1$.

Finally, we apply the Azuma-Hoeffding inequality to martingales with increments bounded by $c_j=1$, and using the fact that there are at most $H_t$ stages, gives
\begin{equation}
  \mathbb P\big(Z_{H_t} - Z_0 \ge u\big)
  \;\le\; \exp\!\Big(-\frac{u^2}{2\sum_j c_j^2}\Big)
  \;=\; \exp\!\Big(-\frac{u^2}{2H_t}\Big).
\end{equation}
\end{proof}
\Cref{lem:exposure} controls the first source of stochasticity, claiming it is exponentially concentrated around the drift. We now incorporate the second source of stochasticity, which is the edge update in \cref{eq:one-step}. Conditional on $(\mathcal F_t, M_t)$, the number $B_t$ of
active edges that keep both endpoints is a sum of $R_t$ independent Bernoulli
variables, so we can apply a Chernoff bound:

\begin{lemma}[One growth step]\label{lem:growthstep}
Condition on $\mathcal F_t$, so that the support $A_t$ and its size $H_t$ are fixed.
If $H_t \le n/2$, then
\[
  \mathbb P\Big(H_{t+1}<\tfrac{21}{20}H_t\;\Big|\;\mathcal F_t\Big)\;\le\;2\,e^{-H_t/800}.
\]
\end{lemma}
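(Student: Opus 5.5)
Conditional on $\mathcal F_t$ with $H_t\le n/2$, the plan is to split the failure event according to the two sources of randomness in \cref{eq:one-step}: an unlucky matching, where $R_t$ is too small, and an unlucky edge update, where $B_t$ is too small given $R_t$. Each gets a tail bound, and a union bound finishes. By \cref{lem:drift}, $\mathbb E[J_t\mid\mathcal F_t]=H_t(H_t-1)/(2(n-1))<H_t/4$, since $(H_t-1)/(2(n-1))<1/4$ when $H_t\le n/2$. Set the threshold $u=H_t/40$. Then \cref{lem:exposure} gives
\[
\mathbb P\bigl(J_t\ge \tfrac{H_t}{4}+\tfrac{H_t}{40}\,\big|\,\mathcal F_t\bigr)\le \exp\!\bigl(-\tfrac{(H_t/40)^2}{2H_t}\bigr)=e^{-H_t/3200}.
\]
This exponent is too weak, so I will choose the split more carefully, aiming at a clean $e^{-H_t/800}$ for each piece.

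Parametrize the argument by a slack $s$, taken as $s=H_t/20$ (to be tuned). On the good matching event $J_t< H_t/4+s$ we have $R_t=H_t-J_t> \tfrac34H_t-s=\tfrac{7}{10}H_t$. Conditional on $M_t$, $B_t\sim\mathrm{Bin}(R_t,3/5)$, and Hoeffding's inequality gives
\[
\mathbb P\bigl(B_t\le \tfrac35R_t-r\,\big|\,\mathcal F_t,M_t\bigr)\le e^{-2r^2/R_t}.
\]
If the update is also good, then $H_{t+1}=R_t+B_t\ge \tfrac85R_t-r$. We need $\tfrac85\cdot\tfrac7{10}H_t-r=1.12H_t-r\ge \tfrac{21}{20}H_t$, which leaves room $r=0.07H_t$. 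Using $R_t\le H_t$, Hoeffding then yields $e^{-2(0.07)^2H_t}=e^{-0.0098H_t}\le e^{-H_t/800}$. The matching step gives $e^{-(H_t/20)^2/(2H_t)}=e^{-H_t/800}$ exactly. Both pieces fit the stated bound, and the union bound gives $2e^{-H_t/800}$.

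The final step is to verify the arithmetic honestly: that $\mathbb E J_t<H_t/4$ holds for all $H_t\le n/2$, and that the monotonicity $H_{t+1}\ge \tfrac85R_t-r$ combined with the lower bound on $R_t$ composes correctly when conditioning on $M_t$. The deviation of $B_t$ should be taken relative to its mean $\tfrac35R_t$, with $R_t$ random but bounded below on the good event. The main obstacle is choosing $s$ and $r$ so that both exponents reach $H_t/800$ simultaneously. The matching exponent is the binding one, since Azuma's inequality has variance proxy $H_t$, and $s=H_t/20$ is what forces the constant $800$. There is then just enough margin, $1.12$ versus $1.05$, to absorb the binomial fluctuation.
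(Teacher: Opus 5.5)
Your argument is correct and follows the paper's proof. It uses the same Azuma step via \cref{lem:exposure} with deviation $H_t/20$, so that $R_t>\tfrac{7}{10}H_t$ except with probability $e^{-H_t/800}$, followed by a binomial lower-tail bound on $B_t$ and a union bound. The only difference is cosmetic: you apply additive Hoeffding with $r=0.07H_t$ and $R_t\le H_t$, getting $e^{-0.0098H_t}$, whereas the paper shows that failure forces $B_t<R_t/2$ and applies the multiplicative Chernoff bound with $\varepsilon=1/6$, getting $e^{-R_t/120}\le e^{-7H_t/1200}$. The discarded $u=H_t/40$ attempt in your first paragraph can simply be deleted.
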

\begin{proof}
We condition on $\mathcal F_t$, and first bound an error on the stochasticity from the active edges. By \cref{eq:internal-edges} in \cref{lem:drift}, when $H_t\le n/2$, we have that $\frac{H_t - 1}{2(n-1)} \leq \frac{n/2 - 1}{2(n-1)} = \frac{n-2}{4(n-1)} < \frac{1}{4}$, and therefore $\mathbb E[J_t\mid\mathcal F_t]\le H_t/4$. \Cref{lem:exposure} with conservative margin $u=H_t/20$ gives
\begin{equation}
\label{eq:error_1}
  \mathbb P\Big(J_t>\tfrac{3H_t}{10}\;\Big|\;\mathcal F_t\Big)
  \;\le\;\mathbb P\Big(J_t-\mathbb E[J_t\mid\mathcal F_t]>\tfrac {H_t}{20}\;\Big|\;\mathcal F_t\Big)
  \;\le\;e^{-H_t/800},
\end{equation}
Hence, except with probability $e^{-H_t/800}$, we have $J_t \le \tfrac{3H_t}{10}$,
and therefore $R_t = H_t - J_t \ge \tfrac{7H_t}{10}$.
\\
\\
We can now control the stochasticity arising from the edges update. For that, we also condition on $M_t$ and suppose $R_t\ge\tfrac{7H_t}{10}$. If $H_{t+1}<\tfrac{21}{20}H_t$, then using \cref{eq:one-step}
\begin{equation}
  B_t\;=\;H_{t+1} - R_t\;<\;\tfrac{21}{20}H_t-\tfrac{7}{10}H_t\;=\;\tfrac{7}{20}H_t\;\le\;\tfrac {R_t}{2}\;=\;\tfrac56\,\mathbb E\big[B_t\,\big|\,\mathcal F_t,M_t\big],
\end{equation}
where we used $\mathbb{E}[B_t]=\frac{3}{5}R_t$ for $B_t \sim \mathrm{Bin}\!\left(R_t , \tfrac{3}{5}\right)$ in the last step. Therefore, failing to grow forces $B_t$ to take a value $1/6$ below its mean, so we now bound the probability of such an event. For $B \sim \mathrm{Bin}(m,p)$ and any $0<\varepsilon<1$, the multiplicative Chernoff
bound for the lower tail is
\begin{equation}\label{eq:chernoff}
  \mathbb P\big(B \le (1-\varepsilon)\, mp\big) \;\le\; e^{-\varepsilon^2 mp/2}.
\end{equation}
Applying \cref{eq:chernoff} to $B_t$, with $m = R_t$, $p = \tfrac35$ and the margin
$\varepsilon=\tfrac16$ identified above, gives
\begin{equation}\label{eq:error_2}
  \mathbb P\Big(B_t<\tfrac{R_t}{2}\;\Big|\;\mathcal F_t,M_t\Big)
  \;\le\;e^{-R_t/120}\;\le\;e^{-7H_t/1200}\;\le\;e^{-H_t/800}.
\end{equation}
Therefore, on the good event of active edges $R_t \ge \tfrac{7H_t}{10}$, the
failure $\{H_{t+1}<\tfrac{21}{20}H_t\}$ implies $\{B_t<\tfrac{R_t}{2}\}$, and the probability of failing to grow is bounded by \cref{eq:error_2}. Finally, growth failure requires one of the two bad events bounded in \cref{eq:error_1} and \cref{eq:error_2}. Adding their probabilities proves the lemma.
\end{proof}
\Cref{lem:growthstep} ensures multiplication by $\tfrac{21}{20}$ per layer, with a
failure probability exponentially small in the current size $H_t$. To convert
multiplication into additive drift we pass to logarithms, yielding the following corollary:
\begin{corollary}[Logarithmic drift above the threshold]\label{cor:drift}
Define the universal constants
\begin{equation}\label{eq:constants}
  \mu \;:=\; \tfrac12\log\tfrac{21}{20},
  \qquad
  h_\ast \;:=\; \Bigg\lceil\,800\,\ln\frac{4\big(1+\log\frac{21}{20}\big)}
  {\log\frac{21}{20}}\,\Bigg\rceil .
\end{equation}
Then, whenever $h_\ast\le H_t<n/2$,
\[
  \mathbb E\big[\log H_{t+1}-\log H_t\,\big|\,\mathcal F_t\big]\;\ge\;\mu .
\]
\end{corollary}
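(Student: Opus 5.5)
Condition on $\mathcal F_t$ and fix $h_\ast\le H_t<n/2$. Write $L:=\log H_{t+1}-\log H_t$ and split on the good event $\mathcal G:=\{H_{t+1}\ge\tfrac{21}{20}H_t\}$. By \cref{lem:growthstep} (applicable since $H_t\le n/2$), $\mathbb P(\mathcal G^c\mid\mathcal F_t)\le 2e^{-H_t/800}$. On $\mathcal G$ we have $L\ge\log\tfrac{21}{20}$. On $\mathcal G^c$ we only use the deterministic lower bound from \cref{eq:sandwich}: $H_{t+1}\ge\lceil H_t/2\rceil\ge H_t/2$, so $L\ge-1$. Hence
\begin{equation*}
  \mathbb E[L\mid\mathcal F_t]\;\ge\;\log\tfrac{21}{20}\,\bigl(1-q\bigr)-q
  \;=\;\log\tfrac{21}{20}-q\bigl(1+\log\tfrac{21}{20}\bigr),
  \qquad q:=\mathbb P(\mathcal G^c\mid\mathcal F_t)\le 2e^{-H_t/800}.
\end{equation*}
This holds because the right side is decreasing in $q$, so we may replace $q$ by its upper bound.

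It remains to check that $q(1+\log\tfrac{21}{20})\le\tfrac12\log\tfrac{21}{20}$, which gives $\mathbb E[L\mid\mathcal F_t]\ge\mu$. It suffices that
\begin{equation*}
  2e^{-H_t/800}\le\frac{\log\frac{21}{20}}{2\bigl(1+\log\frac{21}{20}\bigr)},
  \qquad\text{equivalently}\qquad
  H_t\ge 800\ln\frac{4\bigl(1+\log\frac{21}{20}\bigr)}{\log\frac{21}{20}}.
\end{equation*}
This is exactly guaranteed by $H_t\ge h_\ast$, by the definition in \cref{eq:constants}. Since $H_t$ is $\mathcal F_t$-measurable, the whole argument is pointwise in $\mathcal F_t$.

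There is no real obstacle. The only points requiring care are (i) that the worst-case loss on the failure event is bounded by $\log 2=1$ (base-2 logs), which the halving bound of \cref{eq:sandwich} supplies uniformly, and (ii) that the hypothesis $H_t<n/2$ matches the $H_t\le n/2$ requirement of \cref{lem:growthstep}.
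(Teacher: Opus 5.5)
Your proposal is correct and follows the paper's proof essentially step for step. You split on the growth event of Lemma~\ref{lem:growthstep}, lose at most $1$ on its complement via the halving bound \eqref{eq:sandwich}, and observe that $2e^{-H_t/800}\le \log\frac{21}{20}\big/\bigl(2(1+\log\frac{21}{20})\bigr)$ is exactly the condition $H_t\ge h_\ast$; your explicit remark that the lower bound is decreasing in $q$, which licenses substituting its upper bound, is a small clarification the paper leaves implicit.
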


\begin{proof}
We separate the growth event $\{H_{t+1}\ge\tfrac{21}{20}H_t\}$ from its complement,
and write $q$ for the failure probability of the growth event in
\Cref{lem:growthstep}. On the growth event, the log increment is at least
$\log\tfrac{21}{20}$, while on the complementary event, \cref{eq:sandwich}
still guarantees $\log H_{t+1}-\log H_t\ge\log\tfrac12=-1$. Hence
\begin{equation}\label{eq:drift-split}
  \mathbb E\big[\log H_{t+1}-\log H_t\,\big|\,\mathcal F_t\big]
  \;\ge\;(1-q)\log\tfrac{21}{20}\;-\;q .
\end{equation}
The failure probability $q$ reduces the drift. We require that this loss is at most
half of the ideal increment $\log\tfrac{21}{20}$, so that we always keep a positive
constant drift of at least $\mu$. Setting $\mu$ equal to the right-hand side of
\cref{eq:drift-split} and rearranging, this holds precisely when
\begin{equation}\label{eq:qstar}
  q\;\le\;\frac{\log\frac{21}{20}}{2\big(1+\log\frac{21}{20}\big)}.
\end{equation}
By \cref{lem:growthstep}, the failure probability satisfies $q\leq 2e^{-H_t/800}$, which decreases with the size. Solving for
$2e^{-H_t/800}$ to be at most the right-hand side of \cref{eq:qstar} shows that the
requirement is met exactly when $H_t\ge 800\ln\frac{4(1+\log\frac{21}{20})}{\log\frac{21}{20}}$,
that is, whenever $H_t\ge h_\ast$. Inserting \cref{eq:qstar} into
\cref{eq:drift-split} yields
\begin{equation}
  \mathbb E\big[\log H_{t+1}-\log H_t\,\big|\,\mathcal F_t\big]
  \;\ge\;(1-q)\log\tfrac{21}{20}-q\;\ge\;\tfrac12\log\tfrac{21}{20}\;=\;\mu. \qedhere
\end{equation}
\end{proof}

\subsubsection{Escaping from small supports}\label{sec:growth-escape}
After controlling the drift and both sources of stochasticity around it, 
the resulting \Cref{cor:drift} does not apply below $h_\ast$. This includes the singleton chains of size one. We then require a method to escape the small support, and grow in very short time above the size $h_*$ with overwhelming probability. We prove this in the following lemma, which states that any stalls below the threshold end in a time with a geometric tail:
\begin{lemma}[Small supports escape quickly]\label{lem:escape}
Define the universal constants
\begin{equation}\label{eq:escape-constants}
  L_0:=\lceil\log h_\ast\rceil,
  \qquad
  b_\ast:=\big(\tfrac{3}{10}\big)^{h_\ast},
  \qquad
  p_\ast:=b_\ast^{\,L_0},
  \qquad
  n_g:=4h_\ast+2 .
\end{equation}
Let $n\ge n_g$ be even, and condition on $\mathcal F_t$ with $1\le H_t<h_\ast$. Then
the escape time $\tau_\ast:=\inf\{s\ge0:H_{t+s}\ge h_\ast\}$ satisfies
\begin{equation}\label{eq:geom}
  \mathbb P\big(\tau_\ast>mL_0\,\big|\,\mathcal F_t\big)\;\le\;(1-p_\ast)^m
  \qquad\text{for every }m\ge0.
\end{equation}
\end{lemma}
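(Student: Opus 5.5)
The plan is to show that, from any state with $1\le H_t<h_\ast$, the chain reaches size at least $h_\ast$ within $L_0$ further layers with probability at least $p_\ast$, uniformly over the conditioning. The geometric tail \cref{eq:geom} then follows by iterating over consecutive blocks of $L_0$ layers. The natural route to such a uniform bound is an explicit doubling path: in each of $L_0=\lceil\log h_\ast\rceil$ consecutive layers, the support doubles. Starting from $H\ge1$, after $L_0$ doublings we have $H\ge 2^{L_0}\ge h_\ast$. If the chain reaches $h_\ast$ earlier, the escape has already happened and the path is simply stopped. Note that any union of events on which $\tau_\ast\le L_0$ only helps.

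For a single layer, condition on $\mathcal F_s$ with $H_s<h_\ast$. The step doubles exactly when two things hold simultaneously: no edge of $M_s$ is internal to $A_s$ ($J_s=0$, so $R_s=H_s$), and every active edge outputs both endpoints ($B_s=R_s$). By \cref{eq:one-step}, this gives $H_{s+1}=2H_s$. The two probabilities are bounded as follows.

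\begin{itemize}
\item \textbf{Outcomes.} Given the matching, the probability that every active edge keeps both endpoints is $(3/5)^{H_s}$.
\item \textbf{Matching.} Reveal the partners of the support qubits one at a time. At each stage the partner must avoid the remaining unmatched support qubits. At most $h_\ast$ of them are forbidden among at least $n-2h_\ast\ge n/2$ available qubits, using $n\ge n_g=4h_\ast+2$. So each stage succeeds with probability at least $1-\frac{h_\ast}{n-2h_\ast}\ge \frac12$, and $\Pr[J_s=0\mid\mathcal F_s]\ge 2^{-H_s}$.
\end{itemize}

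Combining these, the doubling probability is at least $(3/10)^{H_s}\ge(3/10)^{h_\ast}=b_\ast$ whenever $H_s<h_\ast$. Chaining through $L_0$ layers with the tower property, conditioning successively on $\mathcal F_{t+1},\dots$, gives
\[
\Pr[\tau_\ast\le L_0\mid\mathcal F_t]\ge b_\ast^{L_0}=p_\ast .
\]

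For the iteration, define the block events $E_j=\{\tau_\ast>jL_0\}$. On $E_j$, the state at time $t+jL_0$ satisfies $H<h_\ast$, so the one-block bound applies conditionally on $\mathcal F_{t+jL_0}$. This gives $\Pr[E_{j+1}\mid\mathcal F_{t+jL_0}]\le(1-p_\ast)\mathbf 1_{E_j}$, and induction yields $(1-p_\ast)^m$.

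The main obstacle is the matching-avoidance estimate. I need a careful count of the available partners to guarantee a constant per-stage success probability uniformly in $n\ge n_g$, and $n_g=4h_\ast+2$ is chosen precisely to make this work. I also need to handle the Markov conditioning correctly, since the stopping at $h_\ast$ interacts with the iteration.
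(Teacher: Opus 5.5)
Your proposal is correct and follows the paper's proof essentially step for step. It uses the same best-case doubling event ($J=0$ and every active edge keeping both endpoints), the same sequential partner-exposure bound of $1/2$ per support qubit from $n\ge 4h_\ast+2$, and the same per-layer bound $b_\ast=(3/10)^{h_\ast}$ chained over $L_0$ layers. It then iterates over blocks of $L_0$ layers to get the geometric tail, and your treatment of stopping at $h_\ast$ and of the conditioning on $\mathcal F_{t+jL_0}$ is, if anything, more explicit than the paper's.
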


\begin{proof}
For $H_t<h_\ast$, we consider the best-case event with no internal edges, and that all active edges have output on both endpoints:
\begin{equation}
  \{J_t=0\}\;\cap\;\{Y_{t,e}=e\ \text{for every active edge }e\}.
\end{equation}
The $H_t$ qubits on the support have in total $H_t$ active edges, and each of these edges outputs both endpoints. Then $H_{t+1}=2H_t$. Although this event is unlikely, we ask for the probability of such an event:
\begin{itemize}
  \item For the event $\{J_t=0\}$, we expose the partners of the support qubits one at a time, and ask every time that the newly revealed partner lands outside the support. Suppose the first $k-1$ exposed qubits all got outside partners. These pairs occupy $2(k-1)\le 2(h_\ast-1)$
  qubits, so for the next support qubit $k$, there remain at least
  \begin{align}
      n-2(h_\ast-1)-1 \;\ge\; 2h_\ast+3
  \end{align}
  candidate partners (the $-1$ excludes the qubit itself, and the inequality uses
  $n\ge n_g=4h_\ast+2$). Among these candidates, at most $h_\ast-1$ are support qubits.
  Hence each exposure lands outside with conditional probability at least
  \begin{align}
      1-\frac{h_\ast-1}{2h_\ast+3}\;\ge\;\frac12 ,
  \end{align}
  and chaining the at most $H_t\le h_\ast$ exposures with the chain rule,
  \begin{align}
      \mathbb P\big(J_t=0\,\big|\,\mathcal F_t\big)
      \;=\;\prod_{k=1}^{H_t}
      \mathbb P\big(k\text{-th partner outside}\,\big|\,\text{first }k-1\text{ outside},\,\mathcal F_t\big)
      \;\ge\;\Big(\frac12\Big)^{H_t}
      \;\ge\;2^{-h_\ast}.
  \end{align}
  \item For the event $\{Y_{t,e}=e\ \text{for every active edge }e\}$, conditional on $\{J_t=0\}$, there are $H_t$ active edges (one for each qubit in $A_t$). Their updates are independent, and each keeps both of its endpoints
  with probability $\tfrac35$, so
  \begin{align}
      \mathbb P\big(Y_{t,e}=e\ \text{for every active edge}\,\big|\,J_t=0,\,\mathcal F_t,M_t\big)
      \;=\;\Big(\frac35\Big)^{H_t}
      \;\ge\;\Big(\frac35\Big)^{h_\ast}.
  \end{align}
\end{itemize}
Multiplying the two factors, the support doubles uniformly for all $H_t<h_\ast$, with probability at least
\begin{equation}\label{eq:doubling}
  \mathbb P\big(H_{t+1}=2H_t\,\big|\,\mathcal F_t\big)
  \;\ge\;2^{-h_\ast}\Big(\frac35\Big)^{h_\ast}\;=
  \Big(\frac{3}{10}\Big)^{h_\ast}=
  \;b_\ast.
\end{equation}
Now we start from $H_t\ge1$, and ask the probability of consecutive best-case scenarios for growing above $h_{\ast}$. We can apply \cref{eq:doubling} iteratively before hitting $h_{\ast}$, so at most $\lceil\log h_\ast\rceil=L_0$ doublings are needed, as $2^{L_0} \geq h_*$.
\\
\\
The escape attempt above is a block of constant depth $L_0$ with a constant success probability of at least $p_\ast$, valid from any size below $h_\ast$ and any past. If the chain is still below $h_\ast$ after $m$ blocks, the next block fails to escape with probability at most
$1-p_\ast$, so for an escape time $\tau_\ast$:
\begin{equation}
  \mathbb P\big(\tau_\ast>(m+1)L_0\,\big|\,\mathcal F_t\big)
  \;\le\;(1-p_\ast)\,\mathbb P\big(\tau_\ast>mL_0\,\big|\,\mathcal F_t\big).
\end{equation}
Hence, iterating $m$ times contracts the survival probability geometrically to
$(1-p_\ast)^m$, which is \cref{eq:geom}.
\end{proof}

Since $(M_t, (Y_{t,e}))$ are i.i.d. across $t$ and independent of $\mathcal{F}_t$, every estimate conditional on $\mathcal{F}_t$ (\cref{lem:growthstep,cor:drift,lem:escape}) holds for the stopping times. We will use this fact in \cref{sec:growth-hitting}.

\subsubsection{From drift to a hitting time}\label{sec:growth-hitting}

We now control two regimes. On the one hand, above $h_\ast$ the size grows steadily and each layer pushes $\log H_t$ up by at least $\mu$ on average, and never moves it by more than one (\cref{cor:drift}). On the other hand, below $h_\ast$, stalls end in a time that contracts geometrically (\cref{lem:escape}). Hence, it remains to combine the two regimes into a bound on the total time to reach $n/2$. The subtlety is that the chain may cross $h_{*}$ more than once, so that the total time decomposes into the steps spent climbing above $h_\ast$ plus the sum of the durations of all the stalls.

However, the problem is that the durations of the stalls are not independent (earlier stalls influence the starting conditions of later ones), so stall times cannot be directly summed. We attack this problem and provide the final characterization of the total time to hit $n/2$ support in the following lemma:
\begin{lemma}[Hitting half support]\label{lem:hitting-linear}
There exists a universal constant $C_{\mathrm{hit}}>0$ such that, for every even $n\ge n_{\mathrm g}$ and every chain with $H_0=1$, the hitting time
\begin{equation}
  \tau_{1/2}=\inf\{t\ge0:\,H_t\ge n/2\}
\end{equation}
satisfies for every $0<\delta\le1/2$,
\begin{equation}\label{eq:hit-delta}
  \mathbb P\Big(\tau_{1/2}>C_{\mathrm{hit}}\Big(\log n+\log\tfrac1\delta+1\Big)\Big)
  \;\le\;\delta.
\end{equation}
\end{lemma}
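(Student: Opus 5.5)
The plan is to combine the two regimes through a single potential function and an exponential supermartingale, so that dependent stall durations never have to be summed directly. Let $L_t:=\log H_t$ and work with the chain stopped at $\tau_{1/2}$. Above the threshold $h_\ast$, \Cref{cor:drift} gives conditional drift at least $\mu$, and \cref{eq:sandwich} gives increments in $[-1,1]$. Below $h_\ast$, \Cref{lem:escape} gives geometric escape. I would define a modified potential
\[
\Phi_t:=\max(L_t,\log h_\ast)
\]
and count ``good'' steps (time spent at $H_t\ge h_\ast$) and ``bad'' steps (time spent below $h_\ast$) separately.

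First I handle the good steps. Along the subsequence of times with $h_\ast\le H_t<n/2$, the increments $\Delta_t=L_{t+1}-L_t$ satisfy $\mathbb E[\Delta_t\mid\mathcal F_t]\ge\mu$ and $|\Delta_t|\le1$. Hence $X_t:=\sum_{s<t,\,H_s\ge h_\ast}(\Delta_s-\mu)$ is a submartingale with bounded increments, and Azuma (or Freedman) gives
\[
\Pr\bigl(X_t\le -\mu N/2\text{ after }N\text{ good steps}\bigr)\le e^{-\mu^2N/8}.
\]
Drops below $h_\ast$ interrupt this sum. The key observation is that each excursion below $h_\ast$ begins at $L\ge\log h_\ast-1$ and ends at $L\ge\log h_\ast$. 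The net change in $\Phi$ accumulated over good steps therefore lower-bounds $\Phi_{\rm final}-\Phi_0$, up to a loss of at most $1$ per excursion from the step that enters the bad region. I would absorb this loss by charging it to the good step that caused it, i.e.\ define the increment of $\Phi$ on a good step as $\min(\Delta_s,\log h_\ast-L_s)$. This truncated increment still has drift at least $\mu$, since truncation only occurs when $H_{s+1}<h_\ast\le H_s$, and by \cref{lem:growthstep} that has probability $O(e^{-h_\ast/800})$, which is already accounted for in \cref{cor:drift}. If that accounting is delicate, I would simply enlarge $h_\ast$ by a constant. The outcome is that, except with probability $e^{-cN}$, after $N=\lceil(4/\mu)\log n\rceil+O(\log(1/\delta))$ good steps the potential has risen by more than $\log(n/2)$, forcing $\tau_{1/2}$ to have occurred.

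Next I bound the bad steps. For the total time spent below $h_\ast$, \Cref{lem:escape} applied at each stopping time of entry (valid by the remark after \Cref{lem:escape}, since future layers are i.i.d.\ and independent of $\mathcal F_t$) shows that the conditional law of each stall length is stochastically dominated by $L_0\cdot\mathrm{Geom}(p_\ast)$, regardless of the past. The stall lengths are dependent, but this conditional domination is exactly what is needed: an exponential-moment argument, iterating conditional expectations over stopping times, gives $\mathbb E[e^{\lambda S_k}]\le m(\lambda)^k$ for the sum $S_k$ of the first $k$ stalls, with $\lambda$ a small constant. I also need the number of excursions to be bounded. Each excursion into the bad region is initiated by a good step, so there are at most $N+1$ excursions (the $+1$ accounts for the initial one from $H_0=1$). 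Therefore the total bad time is at most $S_{N+1}$, and Markov's inequality on $e^{\lambda S_{N+1}}$ gives $S_{N+1}\le C'(N+\log(1/\delta))$ except with probability $\delta/2$.

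Finally, I combine the two estimates: with probability at least $1-\delta$,
\[
\tau_{1/2}\le N+S_{N+1}\le C_{\rm hit}\bigl(\log n+\log\tfrac1\delta+1\bigr).
\]
The main obstacle is the bookkeeping in the middle step: making the truncated-potential submartingale rigorous across crossings of $h_\ast$, and ensuring that the dependence between stalls and good steps is handled purely through conditional bounds at stopping times. I would try the truncated-increment formulation first. If it fails, the fallback is a cruder approach: bound the number of crossings by a constant times $N$, and accept a lossy constant in $C_{\rm hit}$.
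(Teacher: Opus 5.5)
Your proposal is correct and is essentially the paper's argument: your truncated potential $\Phi_t=\max(\log H_t,\log h_\ast)$ plays the role of the paper's embedded ladder $Z_j=\log H_{\sigma_j}$, Azuma then shows that at most $N=O(\log n+\log\frac1\delta)$ steps above $h_\ast$ occur before the hit, and the at most $N+1$ stalls are controlled through the conditional geometric tail of \cref{lem:escape} at stopping times, iterated into an exponential-moment bound followed by Markov's inequality. One correction: the increment of $\Phi$ on a good step is $\max(\Delta_s,\log h_\ast-L_s)$, not $\min$ (the $\min$ is $\le 0$ whenever $L_s\ge\log h_\ast$), and because this increment is pointwise at least $\Delta_s$, the drift bound $\mu$ follows immediately from \cref{cor:drift}, so there is no per-excursion loss to absorb and no need to estimate the truncation probability or enlarge $h_\ast$ --- this is exactly the paper's remark that recording the return only replaces the stalled value by a larger one.
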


\begin{proof}
\emph{Step 1: the embedded ladder.}
We start from $H_0<h_*$, and consider the possibility that the chain moves across $h_*$ multiple times. For this case, we want to study the chain only when it's in the region $\geq h_*$, so keep track of these steps. We define the first time the chain is ever above $h*$ as
\begin{equation}
  \sigma_0=\inf\{t\ge0:\,H_t\ge h_\ast\},
\end{equation}
which is almost surely finite by \cref{lem:escape}. Taking another time step $\sigma_j+1$, makes the support to either remain above $h_*$ (in that case we set $\sigma_{j+1} := \sigma_j + 1$) or drop below and stall again (in that case we do not record it, only when it comes again above $h_*$). Once the process concludes at $H_{\sigma_j}\ge n/2$, we just set $\sigma_{j+1}=\sigma_j$.
Hence, we have the complete list $\{\sigma_0,\sigma_1,\sigma_2,\ldots\}$ of recorded times at which $H_t \geq h_*$, concluding when $H_t\geq n/2$.
\\
\\
The recording times $\sigma_j$ are random. This matters because we want to use \cref{cor:drift,lem:escape}, which are statements conditional on the past. Applying them at a random time is only legitimate if that time is determined by the past alone. Random times with this property are called stopping times. Formally, $T$ is a stopping time if, for every deterministic $m$, the event $\{T\le m\}$ is decidable from the first $m$ layers, i.e.\ $\{T\le m\}\in\mathcal F_m$. The requirement simply forbids looking into the future.
\\
\\
The recording times pass this test. First, $\sigma_0$ is a first hitting time, and
\begin{equation}
  \{\sigma_0\le m\}=\big\{\max_{0\le t\le m}H_t\ge h_\ast\big\}\in\mathcal F_m .
\end{equation}
Next, let $\tau$ be a finite stopping time. Then $\tau+1$ clearly is one as well. Moreover, the first return
\begin{equation}
  \rho=\inf\{t\ge\tau+1:\,H_t\ge h_\ast\}
\end{equation}
is a stopping time, because for every deterministic $m$,
\begin{equation}
  \{\rho\le m\}=\bigcup_{s=0}^{m-1}\Big(\{\tau=s\}\cap
  \big\{\max_{s+1\le t\le m}H_t\ge h_\ast\big\}\Big)\in\mathcal F_m ,
\end{equation}
and it is finite because stalls end in finite time (\cref{lem:escape}). Every
$\sigma_{j+1}$ is built from $\sigma_j$ as either $\sigma_j+1$, such a first return,
or $\sigma_j$ itself, with the choice among the three cases decided by
the information available at those moments. Applying this recursively shows that every
$\sigma_j$ is a finite stopping time.
\\
\\
\medskip\noindent\emph{Step 2: the ladder climbs (drift and bounded increments).}
Define $Z_j:=\log H_{\sigma_j}$. We now show that $(Z_j)_j$ is well behaved by showing the two properties that Azuma's inequality needs in the upcoming Step 3: the recorded walk climbs on average, and it never jumps far.

\smallskip\noindent\emph{(a) Drift.}
On $\{H_{\sigma_j}<n/2\}$, \Cref{cor:drift} applies at the stopping time $\sigma_j$ and controls the very next layer: $\mathbb E[\log H_{\sigma_j+1}-\log H_{\sigma_j}\mid\mathcal F_{\sigma_j}]\ge\mu$. If that layer stalls, the next recorded value is the first return instead of the stall. But the stall lies below $h_\ast$ while the return lies at or above it, so recording the return only replaces the value by a larger one. Hence in both cases $Z_{j+1}\ge\log H_{\sigma_j+1}$ in every realization, and taking conditional expectations preserves the drift: \begin{equation}\label{eq:embedded-drift}
  \mathbb E\big[Z_{j+1}-Z_j\,\big|\,\mathcal F_{\sigma_j}\big]\;\ge\;\mu
  \quad\text{on }\{H_{\sigma_j}<n/2\}.
\end{equation}

\smallskip\noindent\emph{(b) Bounded increments:}
If the ladder is frozen, nothing moves. If no stall happens, the two recorded values
are just one layer apart, and one layer can at most double or halve the size because of
\cref{eq:sandwich}. The only worry is a stall: could the chain come back far from
where it left? It cannot, because a stall begins and ends around the threshold. To
fall below $h_\ast$, the chain must have been standing below $2h_\ast$ (one layer at
most halves), and to climb back above $h_\ast$ from below it, one layer at most
doubles, so it returns below $2h_\ast$. Hence both $Z_j$ and $Z_{j+1}$ lie in
$[\log h_*,\log h_*+1),$
an interval of length one, so
\begin{equation}
|Z_{j+1}-Z_j|\le 1.
\end{equation}
The recorded walk is therefore well-behaved, and gains at least $\mu$ per step on average and moves at most
one per step: stalls neither hide progress nor create jumps.
\\
\\
\medskip\noindent\emph{Step 3: few recorded times suffice.}
In step one we arranged the series of recorded times $\{\sigma_0,\sigma_1,\sigma_2,\ldots\}$. In this step we ask: how many recorded times do we need to reach $\geq n/2$? This is captured by
\begin{equation}
  \widehat\tau:=\inf\{j\ge0:\,H_{\sigma_j}\ge n/2\},
  \qquad
  \mathcal G_j=\mathcal F_{\sigma_{j\wedge\widehat\tau}}.
\end{equation}
where $\mathcal G_j$ is the information available at the $j$-th recorded moment. Azuma's inequality requires a martingale with a given number of increments,
whereas our walk stops at the random index $\widehat\tau$. We therefore pad the walk
past the hit with harmless deterministic increments:
\begin{equation}
  X_j=\begin{cases} Z_{j+1}-Z_j, & j<\widehat\tau,\\ \mu, & j\ge\widehat\tau.\end{cases}
\end{equation}
The padding will never affect the conclusion on the failure event
$\{\widehat\tau>N\}$, which is the only one we analyze; on this event, all increments are real.
By \cref{eq:embedded-drift}, each increment carries a drift,
\begin{equation}
\label{eq:Drift_mu}
  m_j:=\mathbb E[X_j\mid\mathcal G_j]\;\ge\;\mu,
\end{equation}
before the hit and trivial after it. Moreover $X_j$ is $\mathcal G_{j+1}$-measurable, $|X_j|\le1$ by Step 2(b), and hence $|X_j-m_j|\le2$.
Subtracting the promise from the performance,
\begin{equation}
  M_N=\sum_{j=0}^{N-1}(X_j-m_j),
\end{equation}
measures the accumulated deviation of the actual climb relative to its promised
average. By construction it is a martingale with increments bounded by two, so we can control it via Azuma's inequality.
\\
\\
On $\{\widehat\tau>N\}$ no artificial increment has been used, and
\begin{equation}
  \sum_{j=0}^{N-1}X_j=Z_N-Z_0<\log(n/2)-\log h_\ast\le\log n,
\end{equation}
where we telescoped the first sum, and used $Z_0 = \log\; H_{\sigma_0} \geq \log\; h_*$, 
and $Z_N < \log(n/2)$ 
on $\{\widehat\tau>N\}$.
Additionally, on $\{\widehat\tau>N\}$, by \cref{eq:Drift_mu},
\begin{equation}
  M_N \;=\; \sum_{j=0}^{N-1}X_j \;-\; \sum_{j=0}^{N-1}m_j
  \;\le\; \sum_{j=0}^{N-1}X_j \;-\; \mu N
  \;<\; \log n \;-\; \mu N,
\end{equation}
For $u\ge0$, we can choose
\begin{equation}\label{eq:N-choice}
  N=\big\lceil C_1(\log n+u+1)\big\rceil,\qquad C_1\ge2/\mu.
\end{equation}
We define $S=\log n+u+1$. Then $\mu N \ge \mu C_1 S \ge 2S \ge 2\log n$, so $\log n-\mu N\le-\mu N/2$. Therefore
\begin{equation}
  \{\widehat\tau>N\}\subseteq\{M_N\le-\mu N/2\}.
\end{equation}
Applying the Azuma-Hoeffding inequality on the martingale $(M_j)$ with increments bounded by two gives
\begin{equation}\label{eq:tauhat-tail}
  \mathbb P\big(\widehat\tau>N\big)\le\exp\!\Big(-\frac{\mu^2N}{32}\Big)\le e^{-\kappa_1S},
  \qquad \kappa_1=\frac{\mu^2C_1}{32}>0.
\end{equation}
Consequently, only $N=O(\log n+u)$ recorded times $\{\sigma_0,\dots,\sigma_N\}$ are required to reach half density, except with probability $e^{-\kappa_1 S}$. It remains to show that the physical time elapsed between consecutive entries (the stalls) is also under control.
\\
\\
\medskip\noindent\emph{Step 4: stalls are cheap.}
It remains to convert the stall steps into physical time. Between $\sigma_{j-1}$ and $\sigma_{j}$, there is the possibility of a stall. If so, we define $\theta_j$
as the moment the chain enters the stall, and $W_j = \sigma_j - \theta_j$ is the duration of such a stall. Therefore, we are interested in controlling the total stall time $\sum_j W_j$. To this end, we initialize at $W_0=\sigma_0$ and
$\theta_0=0$. For $j\ge1$, define
\begin{equation}
  \theta_j=\sigma_{j-1}+\mathbb I_{\{j\le\widehat\tau\}},\qquad W_j=\sigma_j-\theta_j.
\end{equation}
These definitions deserve three quick checks before we use them. First, each $\theta_j$ is a stopping time, since $\{j \le \widehat{\tau}\}= \left\{H_{\sigma_{j-1}} < \frac{n}{2}\right\} \in \mathcal{F}_{\sigma_{j-1}}$. Note also that after the hit all $\sigma_j$ coincide, $\theta_j=\sigma_j$, and $W_j=0$:
stalls simply stop occurring. Second, write $\mathcal F_{\theta_j}$ for
the information available when the $j$-th stall begins. Since the times $(\theta_j)$ are
nondecreasing, these sigma-algebras are nested. Third, by the time stall $j$ begins, all
earlier stall durations are already known facts: for $i<j$, both $\theta_i$ and
$\sigma_i$ are at most $\theta_j$, so $W_i=\sigma_i-\theta_i$ is
$\mathcal{F}_{\theta_j}$-measurable.
\\
\\
We now want to show that each stall is short, given its entire past. At the
moment $\theta_j$, exactly one of two things is true. Either the size is already at or
above $h_\ast$ (in which case $W_j=0$ and there is nothing
to bound), or the size is below $h_\ast$, in which case $W_j$ is precisely the return
time of \cref{lem:escape} started at $\theta_j$. In both cases,
applying that lemma at the stopping time $\theta_j$ gives, uniformly over the complete
history,
\begin{equation}\label{eq:W-tail}
  \mathbb P\big(W_j>mL_0\,\big|\,\mathcal{F}_{\theta_j}\big)\le(1-p_\ast)^m,\qquad m\ge0.
\end{equation}
We now choose
\begin{equation}\label{eq:lambda-B}
  0<\lambda<L_0^{-1}\ln\frac1{1-p_\ast},
  \qquad
  B=\frac{e^{\lambda L_0}}{1-e^{\lambda L_0}(1-p_\ast)}.
\end{equation}
We also partition the values of $W_j$ into the disjoint events
\begin{equation}
  \{0\le W_j\le L_0\},\qquad \{mL_0<W_j\le(m+1)L_0\},\quad m\ge1.
\end{equation}
The first event has conditional probability at most $1=(1-p_\ast)^0$. For $m\ge1$, the
corresponding event is contained in $\{W_j>mL_0\}$ and therefore has conditional
probability at most $(1-p_\ast)^m$ by \cref{eq:W-tail}. Consequently,
\begin{equation}\label{eq:cond-B}
  \mathbb E\big[e^{\lambda W_j}\,\big|\,\mathcal{F}_{\theta_j}\big]
  \le\sum_{m=0}^{\infty}e^{\lambda(m+1)L_0}(1-p_\ast)^m=B.
\end{equation}
Although the stall durations are not independent, the moment bound \cref{eq:cond-B}
holds conditionally on everything before each stall, uniformly over that past, and
this is enough to bound their sum. Indeed, $\sum_{j=0}^{N-1}W_j$ is
$\mathcal F_{\theta_N}$-measurable, so it factors out of the conditional expectation
given $\mathcal F_{\theta_N}$, and \cref{eq:cond-B} bounds the remaining factor:
\begin{equation}
  \mathbb E\,e^{\lambda\sum_{j=0}^{N}W_j}
  \;=\;\mathbb E\Big[e^{\lambda\sum_{j=0}^{N-1}W_j}\,
        \mathbb E\big(e^{\lambda W_N}\,\big|\,\mathcal F_{\theta_N}\big)\Big]
  \;\le\;B\,\mathbb E\,e^{\lambda\sum_{j=0}^{N-1}W_j}.
\end{equation}
Iterating the same argument for $N-1,N-2,\dots,0$ gives
\begin{equation}\label{eq:bound_markov}
  \mathbb E\exp\Big(\lambda\sum_{j=0}^{N}W_j\Big)\;\le\;B^{\,N+1};
\end{equation}
We now convert this moment bound into a tail bound. Choose $a>\lambda^{-1}\ln B$ and
set $\kappa_2=\lambda a-\ln B>0$. Since exponentiating is monotone, the events
$\{\sum_j W_j>a(N+1)\}$ and $\{e^{\lambda\sum_j W_j}>e^{\lambda a(N+1)}\}$ coincide,
so Markov's inequality applied to $e^{\lambda\sum_j W_j}$, together with
\cref{eq:bound_markov}, yields
\begin{equation}\label{eq:W-sum-tail}
  \mathbb P\Big(\sum_{j=0}^{N}W_j>a(N+1)\Big)
  \;\le\;B^{N+1}\,e^{-\lambda a(N+1)}
  \;=\;e^{-\kappa_2(N+1)}\;\le\;e^{-\kappa_2C_1S}.
\end{equation}
Consequently, the total stalled time $\sum_j W_j$ stays below a constant budget of
$a$ layers per recorded time, except with probability $e^{-\kappa_2 C_1 S}$: stalls
add only a constant factor to the length of the journey.
\\
\\
\medskip\noindent\emph{Step 5: telescope and tune.}
We now know that with high probability, $\widehat\tau\leq N$ suffices, by \cref{eq:tauhat-tail}, and the total stalled time is small \cref{eq:W-sum-tail}. It remains to assemble them together into a bound on the total physical hitting time $\tau_{1/2}$.
\\
\\
The total hitting time equals the time to get out of the first stall, the number of recorded times above $h_*$, and the additional stalls after the first one. If $\widehat\tau\le N$, then
\begin{equation}
\tau_{1/2}=\sigma_{\widehat\tau}=W_0+\widehat\tau+\sum_{j=1}^{\widehat\tau}W_j
  \le N+\sum_{j=0}^{N}W_j.
\end{equation}
$\tau_{1/2}=\sigma_{\widehat{\tau}}$ because the chain is strictly below $\frac{n}{2}$ before $\sigma_{\widehat{\tau}}$. Stall values are below $h_*,$ and earlier recorded values are below $\frac{n}{2}$ by the definition of $\widehat{\tau}$. We now intersect the two good events of few recorded times $<N$ (\cref{eq:tauhat-tail}) and small total stall time below $a(N+1)$ (\cref{eq:W-sum-tail}):
\begin{equation}
  \tau_{1/2}\le N+a(N+1)\le C_{\mathrm{tail}}S,
  \qquad C_{\mathrm{tail}}=(1+a)(C_1+1)+a,
\end{equation}
where $N\le C_1S+1$ and $S\ge1$. Therefore,
\begin{equation}\label{eq:pre-tail}
  \mathbb P\big(\tau_{1/2}>C_{\mathrm{tail}}S\big)
  \le e^{-\kappa_1S}+e^{-\kappa_2C_1S}\le 2e^{-c_{\mathrm{tail}}S},
\end{equation}
with
\begin{equation}
  c_{\mathrm{tail}}=\min\{\kappa_1,\;\kappa_2C_1\}>0.
\end{equation}
For $0<\delta\le1/2$, setting
\begin{equation}
  u=\frac{\ln2}{c_{\mathrm{tail}}}\Big(1+\log\frac1\delta\Big)
\end{equation}
yields
\begin{equation}
  2e^{-c_{\mathrm{tail}}u}=2e^{-(\ln2)(1+\log_2(1/\delta))}=\delta.
\end{equation}
Since $\log(1/\delta)\ge1$,
\begin{equation}
  u+1\le\Big(1+\frac{2\ln2}{c_{\mathrm{tail}}}\Big)\log\frac1\delta.
\end{equation}
Thus \cref{eq:hit-delta} follows with
\begin{equation}
C_{\mathrm{hit}}=C_{\mathrm{tail}}\Big(1+\frac{2\ln2}{c_{\mathrm{tail}}}\Big).
\end{equation}
\end{proof}
The previous lemma finally proves that singletons hit support size $n/2$ in logarithmic time with high probability.
\newpage
\subsection{Uniform mixing}\label{sec:uniformmix}
After proving that small supports reach support size $n/2$ with overwhelming probability in logarithmic time, we now proceed to the rest of the roadmap we established at the end of \cref{sec:Grand coupling between support chains}: Persistence and Contraction.
\subsubsection{Dense persistence}
We want to show that once a support has size $n/2$, it stays above $n/2$ with overwhelming probability. To do so, we use the drift and the control of both sources of stochasticity we already derived in \cref{sec:growth}.
\begin{lemma}[Dense persistence]\label{lem:dense-persistence}
With the explicit constant $c_{\mathrm d}=1/3200$, on the event $H_t\ge n/2$,
\begin{equation}\label{eq:persist}
  \mathbb P\big(H_{t+1}<n/2\,\big|\,\mathcal F_t\big)\;\le\;2\,e^{-c_{\mathrm d}n}.
\end{equation}
\end{lemma}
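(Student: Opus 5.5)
The plan is to rerun the two-source argument of \cref{lem:growthstep}, now in the dense regime. Condition on $\mathcal F_t$ with $H_t\ge n/2$. Recall from \cref{eq:active-edges,eq:one-step} that $H_{t+1}=R_t+B_t$, where $R_t=H_t-J_t$ and $B_t\sim\mathrm{Bin}(R_t,\tfrac35)$ given $M_t$. I will find a deterministic lower bound $R_t\ge \tfrac{7n}{20}$ that fails only with probability $e^{-n/3200}$. On that event I will show that $B_t\ge R_t/2$ also fails only with probability at most $e^{-n/3200}$. Together these give $H_{t+1}\ge \tfrac32 R_t\ge \tfrac{21n}{40}>n/2$, and a union bound yields \cref{eq:persist}.

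\emph{Step 1 (active edges).} By \cref{lem:drift}, $\mathbb E[J_t\mid\mathcal F_t]=\frac{H_t(H_t-1)}{2(n-1)}\le \frac{H_t^2}{2n}$, since $\frac{H_t-1}{n-1}\le \frac{H_t}{n}$ for $H_t\le n$. I apply \cref{lem:exposure} with margin $u=n/40$. Using $H_t\le n$, this gives
\[
\mathbb P\Bigl(J_t>\tfrac{H_t^2}{2n}+\tfrac{n}{40}\,\Big|\,\mathcal F_t\Bigr)\le \exp\!\Bigl(-\tfrac{n^2/1600}{2H_t}\Bigr)\le e^{-n/3200},
\]
and this is where the constant $c_{\rm d}=1/3200$ comes from. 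Off this bad event, $R_t\ge f(H_t)-\tfrac{n}{40}$ with $f(h)=h-\tfrac{h^2}{2n}$. The function $f$ is nondecreasing on $[0,n]$ because $f'(h)=1-h/n\ge0$. Hence $f(H_t)\ge f(n/2)=\tfrac{3n}{8}$, so $R_t\ge \tfrac{3n}{8}-\tfrac{n}{40}=\tfrac{7n}{20}$.

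\emph{Step 2 (edge updates).} Now condition additionally on $M_t$ with $R_t\ge 7n/20$. The event $B_t<R_t/2=\tfrac56\,\mathbb E[B_t\mid\mathcal F_t,M_t]$ is a lower deviation by a factor $\varepsilon=1/6$. The Chernoff bound \cref{eq:chernoff} then gives probability at most $e^{-R_t/120}\le e^{-7n/2400}\le e^{-n/3200}$. On the complement, $H_{t+1}=R_t+B_t\ge \tfrac32R_t\ge \tfrac{21n}{40}\ge n/2$. Summing the two failure probabilities proves the lemma.

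The only delicate point is Step 1, where the bound must hold uniformly over all $H_t\in[n/2,n]$. A larger support has more internal edges, so $\mathbb E[J_t\mid\mathcal F_t]$ grows and might seem to threaten the bound. The monotonicity of $f$ resolves this, since $\mathbb E[R_t\mid\mathcal F_t]$ is smallest at $H_t=n/2$. In addition, the Azuma variance proxy $2H_t\le 2n$ is still linear in $n$, so a margin of order $n$ gives the claimed $e^{-\Omega(n)}$ tail.
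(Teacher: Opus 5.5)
Your proof is correct and follows the paper's argument essentially step for step: you apply the exposure lemma with margin $u=n/40$ (using $H_t\le n$) to get $R_t\ge 7n/20$ outside an event of probability $e^{-n/3200}$, and then a multiplicative Chernoff bound on $B_t$ to handle the edge updates. The differences are cosmetic: you lower-bound $\mathbb E[R_t\mid\mathcal F_t]$ through the nondecreasing surrogate $h-h^2/(2n)$ where the paper uses concavity of the exact expression on $[n/2,n]$, and you reuse the margin $\varepsilon=1/6$ from \cref{lem:growthstep} (obtaining $H_{t+1}\ge 21n/40$) where the paper takes $\varepsilon=2/7$, aimed directly at the threshold $n/2$.
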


\begin{proof}
Condition on $\mathcal F_t$, so the support $A_t$ is fixed, with $H_t\ge n/2$. Using the equality \cref{eq:active-edges} in \cref{eq:internal-edges},
\begin{equation}\label{eq:ER-dense}
  \mathbb E[R_t\mid\mathcal F_t]\;=\;H_t-\frac{H_t(H_t-1)}{2(n-1)}\;\ge\;\frac{3n}{8},
\end{equation}
where the last inequality uses concavity in $H_t\in[n/2,n]$, with values $\ge 3n/8$ at
both endpoints ($n/2$ at $H_t=n$, and $>\tfrac{3n}{8}$ at $H_t=n/2$).

Taking $R_t<7n/20$, then $J_t-\mathbb EJ_t=\mathbb ER_t-R_t>n/40$. Since $H_t\le n$,
\Cref{lem:exposure} with $u=n/40$ gives 
\begin{equation}\label{eq:Rt-tail}
  \mathbb P\big(R_t<7n/20\,\big|\,\mathcal F_t\big)\;\le\;e^{-n/3200}.
\end{equation}
Using the relation  $B_t \;=\; H_{t+1}-R_t$ from \cref{eq:one-step}. On the event $R_t\ge7n/20$, if  $H_{t+1}<n/2$, then

\begin{equation}
  B_t\;<\;n/2-R_t\;\le\;3R_t/7\;=\;\tfrac57\,\mathbb E\big[B_t\,\big|\,\mathcal F_t,M_t\big],
\end{equation}
where the last equality uses the intermediate relation $\mathbb{E}\!\left[ B_t \,\middle|\, \mathcal{F}_t, M_t \right]\;=\; \tfrac{3}{5}\, R_t$ in \cref{eq:match_update}. Applying the Chernoff bound in \cref{eq:chernoff} with relative deviation $\varepsilon=2/7$ gives
\begin{equation}\label{eq:Bt-tail-dense}
  \mathbb P\big(B_t<3R_t/7\,\big|\,\mathcal F_t,M_t\big)
  \;\le\;e^{-6R_t/245}\;\le\;e^{-3n/350}.
\end{equation}
Since $3/350>1/3200$, adding \cref{eq:Rt-tail} and \cref{eq:Bt-tail-dense}
proves \cref{eq:persist}.
\end{proof}

\subsubsection{Contraction}
Before attacking the final contraction part, we prove a lemma that we will need to contract the gap set:
\begin{lemma}[Gap set contraction]\label{lem:discrepancy}
Couple $A_t\subseteq \widetilde{A}_t$ by the grand coupling and write
\begin{equation}
  a=|A_t|,\qquad d_t=|\widetilde{A}_t\setminus A_t|,\qquad z=n-a-d_t.
\end{equation}
Conditional on $A_t,\widetilde{A}_t$,
\begin{equation}\label{eq:contraction}
  \mathbb E[d_{t+1}\mid A_t,\widetilde{A}_t]
  \;=\;\frac{8}{5(n-1)}\left(\binom{d_t}{2}+d_tz\right)
  \;\le\;\frac{8(n-a)}{5(n-1)}\,d_t.
\end{equation}
If $a\ge n/2$ and $n\ge6$, the final contracting coefficient is at most
\begin{equation}\label{eq:rho}
  \varrho:=\frac{24}{25}<1.
\end{equation}
\end{lemma}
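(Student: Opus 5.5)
The plan is to compute $\mathbb E[d_{t+1}\mid A_t,\widetilde A_t]$ exactly by linearity of expectation over the edges of the fresh matching $M_t$. I would use the gap update rule derived in \cref{sec:gapset}, which holds verbatim for any coupled pair $A_t\subseteq\widetilde A_t$ (the argument there only used monotonicity):
\begin{equation*}
  \widetilde A_{t+1}\setminus A_{t+1}
  =\bigcup_{\substack{e\cap A_t=\emptyset\\ e\cap\widetilde A_t\neq\emptyset}}Y_{t,e}.
\end{equation*}
This holds because edges active for $A_t$ contribute the same $Y_{t,e}$ to both chains. The union is over distinct matching edges, so it is disjoint. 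Hence $d_{t+1}$ is the sum over these edges of $|Y_{t,e}|$.

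Next I would partition $[n]$ into $A_t$ (size $a$), the gap $D=\widetilde A_t\setminus A_t$ (size $d_t$), and the outside $Z=[n]\setminus\widetilde A_t$ (size $z$). An edge is counted in the union precisely when it avoids $A_t$ and meets $D$, that is, when it lies inside $D$ or joins $D$ to $Z$. For a uniform perfect matching, any fixed pair $\{u,v\}$ is an edge with probability $1/(n-1)$. The expected number of counted edges is therefore $\bigl(\binom{d_t}{2}+d_tz\bigr)/(n-1)$.

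Conditional on $M_t$, each counted edge independently has $\mathbb E|Y_{t,e}|=2\cdot\tfrac35+1\cdot\tfrac25=\tfrac85$, and the $Y_{t,e}$ are independent of $M_t$. The tower property then gives the exact identity in \cref{eq:contraction}.

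For the inequality, I would write $\binom{d_t}{2}+d_tz=d_t\bigl(\tfrac{d_t-1}{2}+z\bigr)\le d_t(d_t+z)=d_t(n-a)$. Finally, if $a\ge n/2$, the coefficient satisfies $\frac{8(n-a)}{5(n-1)}\le\frac{4n}{5(n-1)}$, and $\frac{4n}{5(n-1)}\le\frac{24}{25}$ is equivalent to $100n\le120(n-1)$, i.e.\ $n\ge6$.

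The only step needing care is justifying that the gap update is a disjoint union whose size is additive over edges, together with the edge-marginal probability $1/(n-1)$. Both are elementary, so I do not expect a real obstacle.
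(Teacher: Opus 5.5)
Your proposal is correct and follows the paper's proof: you count the edges that are inactive for $A_t$ but active for $\widetilde A_t$ (expected number $\bigl(\binom{d_t}{2}+d_tz\bigr)/(n-1)$), multiply by the mean output size $8/5$, and bound $\binom{d_t}{2}+d_tz\le d_t(n-a)$ and $\frac{4n}{5(n-1)}\le\frac{24}{25}$ for $n\ge6$. The paper computes $8/5$ as the expected support size of a uniform nonidentity two-qubit Pauli instead of from the law of $Y_{t,e}$, and it leaves implicit the disjoint-union additivity that you justify explicitly.
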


\begin{proof}
An output discrepancy occurs precisely on a matching edge that is inactive for $A_t$
but active for $\widetilde{A}_t$. Such an edge has either two endpoints in $\widetilde{A}_t\setminus A_t$, or
one endpoint there and one among the $z$ vertices outside $\widetilde{A}_t$. The expected number of such edges is
\begin{equation}
  \frac{\binom{d_t}{2}+d_tz}{n-1},
\end{equation}
because every pair is a
matching edge with probability $1/(n-1)$. On each such edge the $A_t$ chain outputs $\mathbb I\mathbb I$ and the upper chain
outputs a uniform nonidentity two-qubit Pauli, whose expected support size is $\frac{6\cdot1+9\cdot2}{15}=\frac85$. Edges active for both chains receive the same sampled label and create no
discrepancy. This proves the first equality of the lemma. Furthermore,
\begin{equation}
  \binom{d_t}{2}+d_tz\;\le\;d_t(d_t+z)\;=\;d_t(n-a),
\end{equation}
proving the inequality. If $a\ge n/2$, then for $n\ge6$
\begin{equation}
  \frac{8(n-a)}{5(n-1)}\;\le\;\frac{4n}{5(n-1)}\;\le\;\frac{24}{25}.
\end{equation}
\end{proof}

\subsubsection{Uniform coalescence in logarithmic time}
Finally, we put growth, persistence and contraction together into a final theorem, which proves Pauli mixing for the perfect-matching ensemble:

\paulimixing*

\begin{proof}
The proof runs the whole three-step program announced at the end of \cref{sec:Grand coupling between support chains}: grow every singleton chain to half density, ensure that no chain leaves the dense region, and merge each singleton chain with the full-support chain while both are dense.
Each of the three can fail, and we give each a failure budget of $\eta/16$. On the good event, all chains coalesce and the coupling inequality converts coalescence into the claimed total-variation bound.
 
\medskip\noindent\emph{Step 0: fix the constants.}
Let $\varrho=24/25$ be the contraction factor of \cref{lem:discrepancy}, and choose
\begin{equation}
  0<c_{\mathrm{mix}}<c_{\mathrm d}.
\end{equation}
Define the final depth constant
\begin{equation}\label{eq:C0}
  C_0:=4C_{\mathrm{hit}}+\frac{4}{\log(1/\varrho)}+3,
\end{equation}
whose two main terms will pay for the growth stage and the contraction stage, respectively. Finally, choose an even $n_{\mathrm{mix}}\ge\max\{n_{\mathrm g},6\}$ large enough that, for every $n\ge n_{\mathrm{mix}}$,
\begin{equation}\label{eq:nmix-cond}
  2nC_0\Big(\log n+\frac{c_{\mathrm{mix}}}{\ln2}n+1\Big)e^{-c_{\mathrm d}n}
  \;\le\;\frac{e^{-c_{\mathrm{mix}}n}}{16}.
\end{equation}
Such a threshold exists because $c_{\mathrm d}-c_{\mathrm{mix}}>0$, so the exponential on the right dominates the polynomial factor on the left; this display is exactly the persistence budget of Step~5, prepared in advance.
 
\medskip\noindent\emph{Step 1: the coupled chains and their discrepancies.}
Run the grand coupling simultaneously from the full support $A_0^{(\mathrm{full})}=[n]$ and from every singleton $A_0^{(i)}=\{i\}$. For each $i$, record when the singleton chain reaches half density, and how far it still is from the full chain:
\begin{equation}
  \tau_i:=\inf\{t\geq 0:\,|A_t^{(i)}|\ge n/2\},
  \qquad
  D_t^{(i)}=|A_t^{(\mathrm{full})}\setminus A_t^{(i)}|.
\end{equation}
Monotonicity of the coupling gives $A_t^{(i)}\subseteq A_t^{(\mathrm{full})}$ for all $t$, so $D_t^{(i)}$ is precisely the number of qubits on which the two chains still disagree: chain $i$ has coalesced with the full chain exactly when $D_t^{(i)}=0$.
 
\medskip\noindent\emph{Step 2: grow every singleton.}
Set
\begin{equation}\label{eq:tg}
  t_{\mathrm g}:=\Big\lceil C_{\mathrm{hit}}\Big(\log n+\log\frac{16n}\eta+1\Big)\Big\rceil.
\end{equation}
Applying \cref{lem:hitting-linear} to each singleton with failure probability $\delta=\eta/(16n) \in (0,1/2]$ and taking a union bound over the $n$ singletons,
\begin{equation}\label{eq:growth-fail}
  \mathbb P\big(\exists i:\,\tau_i>t_{\mathrm g}\big)\;\le\;\frac\eta{16}.
\end{equation}
This spends the first budget: except with probability $\eta/16$, all singleton chains are dense by time $t_{\mathrm g}$.
 
\medskip\noindent\emph{Step 3: the timetable.}
After growing, we will let the discrepancies contract for $s$ further layers, with
\begin{equation}\label{eq:s-T0}
  s=\Bigg\lceil\frac{\log(16n^2/\eta)}{\log(1/\varrho)}\Bigg\rceil,
  \qquad
  T_0=t_{\mathrm g}+s;
\end{equation}
$s$ is chosen so that $n\varrho^{\,s}\le\eta/(16n)$, i.e.\ so that a discrepancy of size at most $n$, shrunk by $\varrho$ per layer, falls below the second budget. The total time is logarithmic: writing $L=\log(n/\eta)$, we have $L\ge\log12>3$ (as $n\ge6$, $\eta\le1/2$) and $\log n\le L$, so
\begin{equation}
  \log n+\log\frac{16n}\eta+1\;\le\;2L+5\;\le\;4L,
  \qquad
  \log\frac{16n^2}\eta\;=\;4+L+\log n\;\le\;4L,
\end{equation}
and the ceiling inequality $\lceil x\rceil\le x+1$ gives
$t_{\mathrm g}\le(4C_{\mathrm{hit}}+1)L$ and $s\le(4/\log(1/\varrho)+1)L$. Therefore
\begin{equation}\label{eq:T0-bound}
  T_0+1\;\le\;C_0\log\frac n\eta.
\end{equation}
 
\medskip\noindent\emph{Step 4: merge.}
\Cref{lem:discrepancy} contracts the expected discrepancy by $\varrho$ per layer, but only while the singleton chain is dense. The subtlety is that we may not condition on the chain staying dense, as that would condition on the future. The clean device is to weight by the indicator of ``dense so far'', which only looks at the past: for $0\le r\le t_{\mathrm g}$ and $t\ge r$, define
\begin{equation}
  E_{i,r}=\{\tau_i=r\},
  \qquad
  G_{i,r,t}=E_{i,r}\cap\bigcap_{u=r}^{t}\{|A_u^{(i)}|\ge n/2\},
\end{equation}
the event that chain $i$ became dense at time $r$ and has remained dense through time $t$. This event is $\mathcal F_t$-measurable, and it updates by one intersection per layer, $G_{i,r,t+1}=G_{i,r,t}\cap\{|A_{t+1}^{(i)}|\ge n/2\}$. Hence, for deterministic $t\ge r$,
\begin{align}
  \mathbb E\big[D_{t+1}^{(i)}\mathbf 1_{G_{i,r,t+1}}\,\big|\,\mathcal F_t\big]
  &=\mathbf 1_{G_{i,r,t}}\,
    \mathbb E\big[D_{t+1}^{(i)}\mathbf 1_{\{|A_{t+1}^{(i)}|\ge n/2\}}\,\big|\,\mathcal F_t\big]\\
  &\le\mathbf 1_{G_{i,r,t}}\,
    \mathbb E\big[D_{t+1}^{(i)}\,\big|\,\mathcal F_t\big]\\
  &\le\varrho\,D_t^{(i)}\mathbf 1_{G_{i,r,t}},\label{eq:contract-step}
\end{align}
where the first step pulls out the past-measurable indicator, the second discards the remaining indicator, and the last is \cref{lem:discrepancy}: on $G_{i,r,t}$ the lower support has size at least $n/2$, and the fresh layer depends on the past only through the current supports. Iterating \cref{eq:contract-step} over the deterministic times $r,r+1,\dots,T_0-1$, and bounding the initial discrepancy crudely by $n$,
\begin{equation}\label{eq:contract-iter}
  \mathbb E\big[D_{T_0}^{(i)}\mathbf 1_{G_{i,r,T_0}}\big]
  \;\le\;\varrho^{\,T_0-r}\,\mathbb E\big[D_r^{(i)}\mathbf 1_{E_{i,r}}\big]
  \;\le\;n\,\varrho^{\,T_0-r}\,\mathbb P(E_{i,r}).
\end{equation}
Since $D_{T_0}^{(i)}$ is integer valued, $\{D_{T_0}^{(i)}>0\}$ means $D_{T_0}^{(i)}\ge1$, so Markov's inequality applied on each of the disjoint events $E_{i,r}$, summed over $r\le t_{\mathrm g}$ and using $T_0-r\ge s$, gives
\begin{align}
  \mathbb P\big(D_{T_0}^{(i)}>0,\ \tau_i\le t_{\mathrm g},\
  |A_u^{(i)}|\ge n/2\ \text{for }\tau_i\le u\le T_0\big)
  \;\le\;n\varrho^{\,s}\;\le\;\frac\eta{16n}.\label{eq:contract-fail}
\end{align}
A union bound over the $n$ chains spends the second budget, $\eta/16$: except with this probability, every chain that grew on schedule and stayed dense has merged with the full chain by time $T_0$.
 
\medskip\noindent\emph{Step 5: persist.}
The previous step is conditional on staying dense, so we must bound the probability that any chain ever falls back below $n/2$. Decompose by the exit time: for $0\le t<T_0$, let $F_{i,t}$ be the event that $\tau_i\le\min\{t,t_{\mathrm g}\}$, chain $i$ stays dense from $\tau_i$ throughout $t$, and falls below $n/2$ at $t+1$. The pre-exit part is $\mathcal F_t$-measurable, so \Cref{lem:dense-persistence} applies to the last layer and gives
\begin{equation}
  \mathbb P(F_{i,t})\;\le\;2e^{-c_{\mathrm d}n}.
\end{equation}
A union bound over the $nT_0$ pairs $(i,t)$ yields
\begin{equation}\label{eq:exit-fail}
  \mathbb P\big(\exists i\ \text{with a dense exit by }T_0\big)\;\le\;2nT_0e^{-c_{\mathrm d}n}.
\end{equation}
Since $\eta\ge e^{-c_{\mathrm{mix}}n}$, \cref{eq:T0-bound} implies
$T_0\le C_0\big(\log n+\tfrac{c_{\mathrm{mix}}}{\ln2}n+1\big)$, and the choice of
$n_{\mathrm{mix}}$ in \cref{eq:nmix-cond} bounds \cref{eq:exit-fail} by $\eta/16$:
the third budget.
 
\medskip\noindent\emph{Step 6: coalescence and mixing.}
Combining the three failure bounds, with probability at least $1-3\eta/16$ every singleton chain equals the full-support chain at time $T_0$. All other initial supports are then dragged along for free: for any nonempty $S$, choose $i\in S$. Monotonicity gives
\begin{equation}
  A_{T_0}^{(i)}\subseteq A_{T_0}^{(S)}\subseteq A_{T_0}^{(\mathrm{full})},
\end{equation}
and since the two ends of the sandwich coincide, all support chains agree at $T_0$.
 
Finally, couple the chain started from an arbitrary fixed support $A$ to a chain whose initial support is drawn from $\pi_\mathrm{supp}$. The latter remains distributed as $\pi_\mathrm{supp}$ forever by \cref{eq:stationarity_pi_supp}. The coupling inequality from \cref{lem:coupling_inequality} converts coalescence into total variation:
\begin{equation}\label{eq:final-TV}
  \mathrm{TVD}\big(\kappa_n^{T_0+1}(A,\cdot),\,\pi_\mathrm{supp}\big)\;\le\;\frac{3\eta}{16}\;<\;\eta.
\end{equation}
This holds for any $T\ge T_0+1$, as running further shared layers can only keep coalesced chains together. Finally, we conclude the theorem by taking $C_{\mathrm{mix}}\ge C_0$ and applying \cref{eq:pauli_support_identity} to convert the TVD from the support Markov chain to the Pauli Markov chain.
\end{proof}

\newpage
\section{Proofs of the main results}
\label{app:main-results-proofs}

This section assembles the ingredients proved in the preceding appendices and
provides the complete proofs of the results stated in
\cref{sec:Main_results}. The statements imported from the cited works remain
inputs; what is proved here is the complete deduction of the main results from
those inputs and from the Pauli-mixing analysis of this manuscript.

Throughout this appendix, let
\begin{equation}
\Sigma_+ := \{1, T\}, \qquad \Sigma_- := \{*, \dagger\}, \qquad \Sigma := \Sigma_+ \cup \Sigma_-,
\label{eq:sigma}
\end{equation}
where all transposes and complex conjugates are taken in the fixed basis used to define the oracle resources. For an ensemble $\mu$ and a fixed orientation word $\boldsymbol{s} = (s_1, \dots, s_r) \in \Sigma^r$, with $r \in \{1,2\}$, define the half-trace-distance measurable error in that word by
\begin{equation}
\operatorname{Err}_{\boldsymbol{s}}(\mu) := \sup_{\mathcal{W}} \frac{1}{2} \bigl\| \mathbb{E}_{U \sim \mu}\, \rho^{U}_{\mathcal{W},\boldsymbol{s}} - \mathbb{E}_{H \sim \mathrm{Haar}}\, \rho^{H}_{\mathcal{W},\boldsymbol{s}} \bigr\|_1,
\label{eq:err-word}
\end{equation}
where the supremum is over all finite-memory $r$-slot quantum combs (Appendix~A.1.1); the orientation word is fixed before the interaction, whereas the comb may use arbitrary quantum memory, intermediate channels, measurements, and classical feed-forward.

For an ensemble $\mu$, let $\operatorname{Err}_{2}(\mu)$ denote the worst
distinguishing advantage in half trace distance over all fixed orientation
words of length one or two. For general query number $k$, let
$\operatorname{Err}_{k}^{\rm full}(\mu)$ denote the corresponding supremum
of the full trace-norm difference of the averaged output states. Thus
\begin{equation}
    \operatorname{Err}_{2}^{\rm full}(\mu)
    =
    2\operatorname{Err}_{2}(\mu).
    \label{eq:appendix-error-conversion}
\end{equation}

\subsection{Weak relative-error 2-designs}
\label{app:apendix_relativeerr_twodesign}
In this section we detail the imported construction of a weak relative-error $k$-design in logarithmic depth. The key ingredient is the weak gluing theorem:
\begin{theorem}[Weak gluing, adapted Theorem~1 from Ref.~\cite{SchusterHaferkampHuang2025}]
\label{thm:gluing-small-designs}
Given any approximation error $\varepsilon \leq 1$, suppose that each small random unitary in the two-layer brickwork ensemble $\mathcal{E}$ is drawn from an $\varepsilon/n$-approximate unitary $k$-design on $2\xi$ qubits with circuit depth $d$. Then $\mathcal{E}$ forms an $\varepsilon$-approximate unitary $k$-design on $n$ qubits with depth $2d$, provided that the local patch size satisfies
\begin{equation}
    \xi \geq \log_2\left(\frac{nk^2}{\varepsilon}\right).
\end{equation}
\end{theorem}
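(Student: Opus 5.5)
The statement to be proved is the imported weak gluing theorem, Theorem~\ref{thm:gluing-small-designs}. I would follow the strategy of Ref.~\cite{SchusterHaferkampHuang2025}: work entirely at the level of relative-error $k$-th moment channels, which compose well under CP orderings, and reduce the two-layer brickwork to an exact statement about Haar-random local blocks.

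\textbf{Step 1: replace local designs by Haar blocks.} Each local block is an $\varepsilon/n$-relative-error $k$-design, so its $(k,0)$ moment channel lies between $(1\pm\varepsilon/n)$ times the Haar moment channel on $2\xi$ qubits in CP order. Tensor products and compositions of CP maps preserve this sandwich. With at most $n/\xi\le n$ blocks in total, the brickwork moment channel is therefore sandwiched between $(1\pm\varepsilon/n)^{n}$ times the moment channel of the same brickwork built from Haar-random blocks. Since $(1+\varepsilon/n)^n\le e^{\varepsilon}$, this costs only a constant factor, which gets absorbed into the final error after rescaling $\varepsilon$.

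\textbf{Step 2: glue Haar blocks.} Next I would show that the two-layer brickwork of Haar blocks on overlapping patches is itself a relative-error $k$-design with error $O(nk^2/2^{\xi})$. The basic gluing step is the three-register statement for $U_{ab}U_{bc}$ versus Haar $U_{abc}$. Write each local Haar moment channel as a projector onto permutation operators $\sum_{\sigma,\tau}\mathrm{Wg}(\sigma^{-1}\tau,d)\,|\sigma\rangle\langle\tau|$. Composing the two layers then produces the overlap Gram matrix on the shared register $b$. Because $b$ has dimension $2^{\xi}\ge k^2$-ish, this Gram matrix is within $k^2/2^{\xi}$ of the identity. That gives a CP sandwich $(1\pm O(k^2/2^{\xi}))\Phi_{\mathrm{H},abc}$, which one obtains by expressing the difference as a positive combination of Choi operators.

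\textbf{Step 3: iterate.} I would then iterate the gluing patch by patch across the $m\approx n/\xi$ patches. Each gluing multiplies the CP sandwich by $(1\pm O(k^2/2^\xi))$, so the total error is about $m k^2/2^{\xi}\le nk^2/2^{\xi}\le\varepsilon$ once $\xi\ge\log_2(nk^2/\varepsilon)$. The depth claim is immediate: each layer consists of disjoint blocks run in parallel at depth $d$, and there are two layers, so the total depth is $2d$.

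\textbf{Main obstacle.} The hard step is Step~2. One must establish the \emph{relative}, CP-ordered bound for a single gluing rather than a diamond-norm bound, and must keep the constant independent of $k$ beyond the $k^2$ factor. The iteration also has to be arranged so that errors multiply rather than compound through the Weingarten inverse. My first attempt would be to bound the operator ratio of the Choi matrices directly, using positivity of the permutation Gram matrix together with the estimate $\|G-I\|\le k^2/2^{\xi}$.
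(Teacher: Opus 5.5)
Your architecture is the one in Ref.~\cite{SchusterHaferkampHuang2025}, which the paper imports without proof. Step~1 is correct, since CP order is stable under tensor products and compositions of CP maps. There are only about $n/\xi$ blocks, so Step~1 actually costs about $\varepsilon/\xi$ and no rescaling of $\varepsilon$ is needed. Step~3 is the right bookkeeping, provided a single gluing costs an error that depends only on $d_b$ and not on the growing outer registers.

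The genuine gap is Step~2. It is the whole content of the theorem, you assert it rather than prove it, and the route you sketch does not yield CP order. The expansion $\sum_{\sigma,\tau}\mathrm{Wg}(\sigma^{-1}\tau)\,|\sigma\rangle\langle\tau|$ and the Gram matrix $G_b$ live in the Hilbert--Schmidt (superoperator) picture. There, $\|G_b/d_b^{k}-I\|\lesssim k^2/d_b$ controls the moment operator in norm. Generic conversions of such a norm bound into a relative (CP-order) bound on $abc$ lose a factor of order $d_{abc}^{2k}$, which is fatal here. Positivity of $G_b$ is not positivity of any Choi operator. Moreover, composing the two layers produces cross terms with Choi operators $(\sigma_a\sigma_b\tau_c)\otimes(\sigma_a\tau_b\tau_c)$ for $\sigma\neq\tau$. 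These are not positive, so the difference cannot be written as a positive combination of Choi operators term by term.

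\emph{The missing idea: compare each twirl with a diagonal map.} One way to close Step~2 is to compare each Haar twirl $\Phi_X$ in CP order with the diagonal map $\Psi_X(Y)=\sum_{\sigma\in S_k}\operatorname{Tr}(\sigma_X^{\dagger}Y)\,\sigma_X$. Its Choi operator is $\sum_\sigma\sigma_X\otimes\sigma_X=k!\,\Pi_X$, where $\Pi_X$ is the projector onto vectors invariant under the diagonal $S_k$ action.
\begin{itemize}
\item The twirl has Choi operator $J(\Phi_X)=k!\,\Pi_X(I\otimes W_X)\Pi_X$, with $W_X=\sum_\pi\mathrm{Wg}(\pi,d_X)\,\pi$ central.
\item By the Jucys--Murphy content formula, $d_X^{k}W_X$ acts on each Schur--Weyl block as a scalar in $1\pm O(k^2/d_X)$.
\item Hence $(1-O(k^2/d_X))\,d_X^{-k}\Psi_X\preceq_{\rm CP}\Phi_X\preceq_{\rm CP}(1+O(k^2/d_X))\,d_X^{-k}\Psi_X$.
\end{itemize}

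\emph{Composing the diagonal maps.} One computes $\Psi_{ab}\circ\Psi_{bc}(Y)=\sum_{\sigma,\tau}d_b^{\,k-|\sigma^{-1}\tau|}\,\sigma_a\sigma_b\tau_c\operatorname{Tr}\bigl[(\sigma_a\tau_b\tau_c)^{\dagger}Y\bigr]$, where $|\pi|$ is $k$ minus the number of cycles of $\pi$.
\begin{itemize}
\item The diagonal part $\sigma=\tau$ is exactly $d_b^{k}\Psi_{abc}$.
\item The full Choi operator factorizes as $k!\,\Pi_{abc}\,S\,\Pi_{abc}$ with $S=\sum_\pi d_b^{\,k-|\pi|}K_\pi$. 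Here $K_\pi$ is a permutation operator on the doubled space, hence unitary, and $K_e=I$.
\item Therefore the off-diagonal part is bounded, relative to $d_b^{k}J(\Psi_{abc})$, by $\prod_{j<k}(1+j/d_b)-1\le e^{k^2/(2d_b)}-1$.
\end{itemize}
Chaining these CP inequalities gives $(1\pm O(k^2/d_b))\,\Phi_{abc}$, with constants independent of $d_a$ and $d_c$. That is exactly the single gluing step your iteration in Step~3 requires.
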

Consequently, we consider a two-layer brickwork of weak relative-error $k$-designs, with each brick supported on $2\xi$ qubits and overlapping its neighbours on patches of $\xi =\Omega( \log_2\left(\frac{nk^2}{\varepsilon}\right))$ qubits (see the weak 2-design in \cref{fig:strong-2-layer}). The previous \cref{thm:gluing-small-designs} allows us to glue the bricks iteratively, producing a weak relative-error design on the full $n$ qubits system. 

In turn, each of the bricks can be realized through a 1D brickwork random circuit, which is known to form weak relative-error $k$-designs in depth linear in the support size \cite{chen2024incompressibilityspectralgapsrandom}. Because the weak relative-error $k$-designs required for weak gluing are supported on $2\xi=O( \log_2\left(\frac{nk^2}{\varepsilon}\right))$ qubits, and the bricks run in parallel, the two-layer brickwork realizes a weak relative-error design in logarithmic depth:  
\begin{corollary}[Low-depth relative-error designs, adapted Corollary~1 from Ref.~\cite{SchusterHaferkampHuang2025}]
\label{thm:shh-lowdepth}
Random quantum circuits over $n$ qubits form $\varepsilon$-approximate unitary $k$-designs in relative error, in circuit depth
\begin{equation}
d=O\!\bigl(k\operatorname{polylog}(k)\,\log(n/\varepsilon)\bigr),
\label{eq:shh-depth}
\end{equation}
for 1D circuits without additional qubits.
\end{corollary}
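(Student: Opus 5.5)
The plan is to derive this corollary by combining \cref{thm:gluing-small-designs} with a linear-depth bound for 1D brickwork circuits on small blocks. The 1D bound is that 1D random circuits on $m$ qubits form weak relative-error $\epsilon'$-approximate $k$-designs in depth $O\!\left(k\operatorname{polylog}(k)\,(m+\log(1/\epsilon'))\right)$, as in Ref.~\cite{chen2024incompressibilityspectralgapsrandom}. The weak analogue of \cref{lem:1D_strong_unitary_designs} gives the same form. First I fix the patch size $\xi:=\lceil\log_2(nk^2/\varepsilon)\rceil$, which meets the hypothesis of \cref{thm:gluing-small-designs}. I partition $[n]$ into $\lfloor n/\xi\rfloor$ consecutive patches of $\xi$ qubits and let the last patch absorb the remainder, exactly as in \cref{sec:assembling}. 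I then form the two-layer brickwork in which each brick acts on two adjacent patches, so on at most $2\xi$ qubits, up to the remainder. If $n<4\xi$, I skip the gluing and apply the 1D bound directly on all $n$ qubits. That already gives depth $O(k\operatorname{polylog}(k)\log(nk^2/\varepsilon))=O(k\operatorname{polylog}(k)\log(n/\varepsilon))$, since $\log k$ is absorbed into the $\operatorname{polylog}(k)$ prefactor.

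Second, I realize each brick as an independent 1D random circuit on its $\le 3\xi$ qubits with target relative error $\varepsilon/n$. By the 1D bound, this needs depth
\[
d_{\rm brick}=O\!\left(k\operatorname{polylog}(k)\left(3\xi+\log\tfrac{n}{\varepsilon}\right)\right)=O\!\left(k\operatorname{polylog}(k)\log\tfrac{n}{\varepsilon}\right),
\]
again absorbing $\log k$ into the polylog. Bricks within one layer have disjoint supports, so they run in parallel. \Cref{thm:gluing-small-designs} then says the whole ensemble is a weak relative-error $\varepsilon$-approximate $k$-design on $n$ qubits, of depth $2d_{\rm brick}$. This is the claimed scaling. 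Two further facts hold by construction: the circuit uses only nearest-neighbour gates on a line, and it introduces no ancillas.

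The main obstacle is checking that the imported 1D result is valid in the required accuracy regime. Relative-error bounds of this kind typically require $\epsilon'\gtrsim k^2/2^{m}$ on $m$ qubits; compare the condition $\varepsilon\ge 2k^2/2^n$ in \cref{lem:1D_strong_unitary_designs}. Here $m\ge 2\xi$ and $2^{2\xi}\ge (nk^2/\varepsilon)^2$, so $k^2/2^{m}\le \varepsilon^2/(n^2k^2)\le \varepsilon/n$. The local accuracy $\varepsilon/n$ is therefore admissible, provided the 1D theorem is stated with an explicit $\log(1/\epsilon')$ dependence. If the imported bound is instead stated only for constant local error, I would boost it by composing independent copies. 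Relative error composes multiplicatively: $(1\pm\epsilon_1)(1\pm\epsilon_2)$ bounds for CP orderings of twirls that project onto the Haar commutant. This still costs only an additive $O(\log(n/\varepsilon))$ in depth. A secondary check is that the remainder brick of size up to $3\xi$ is still admissible under \cref{thm:gluing-small-designs}, and that the gluing theorem tolerates bricks of unequal size. I would handle this by rereading the hypothesis as requiring overlaps of at least $\xi$ qubits, which the construction satisfies.
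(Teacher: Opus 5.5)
Your proposal follows the same route the paper sketches: the weak gluing theorem applied to a two-layer brickwork with patch size $\xi\approx\log_2(nk^2/\varepsilon)$. Each brick is realized in parallel by a 1D random circuit with local relative error $\varepsilon/n$, in depth linear in its $O(\log(nk^2/\varepsilon))$ support, and this main line is correct. One correction to your contingency remark: composing $r$ independent copies multiplies relative errors to $\epsilon_0^{\,r}$, so boosting from constant local error to $\varepsilon/n$ costs a multiplicative factor $\Theta(\log(n/\varepsilon))$ on top of the depth-$\Theta(\xi)$ blocks, giving $O(\log^2(n/\varepsilon))$ depth for fixed $k$ rather than an additive overhead. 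You therefore genuinely rely on the explicit additive $\log(1/\epsilon')$ dependence of the imported 1D bound, which it does have.
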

Consequently, for fixed $k=2$, let $\nu_{n,\delta}$ denote the resulting weak relative-error $2$-design with error $\delta$. There is a universal constant $C_w$ such that
\begin{equation}
    d_w \leq C_w\log(\frac{n}{\delta})
\end{equation}
Because every elementary gate of the one-dimensional realization is an independent Haar-random gate, and the complex conjugate of a Haar-random gate is again Haar random, $\nu_{n,\delta}$ is invariant under complex conjugation.

\newpage

\subsection{Proof of \cref{thm:shell}}
\label{app:proof-strong-two-shell}

\begin{proof}[Proof of Theorem 2]
 In \cref{lem:composition}, $\mu_{1}*\mu_{2}$ denotes the distribution of $U=VW$ with
$V\sim\mu_{1}$ and $W\sim\mu_{2}$ sampled independently. All errors refer to a fixed type of query (same-sign or mixed-sign). The mixed query error $\varepsilon_1$ and the resulting strong-design error use the full trace norm, whereas $\varepsilon_2$ denotes the weak relative error.

We apply \cref{lem:composition} with $\mu_{2}=\nu_{n,\delta}$ and
$\mu_{1}=\cE_{\PM}^{(T)}$, sampled independently. By
\cref{thm:shh-lowdepth}, $\nu_{n,\delta}$ is a weak $2$-design with
relative error $\delta$, and depth $d_{w}\leq C_{w}\log(n/\delta)$.
By \cref{thm:mixing}, for any
$T\geq\lceil C_{\rm mix}\log(n/\eta)\rceil$, the perfect-matching
ensemble $\cE_{\PM}^{(T)}$ has Pauli transition distribution
$\eta$-close in total variation to the uniform distribution $\pi$ for
every nonidentity input Pauli. Since $\cE_{\PM}^{(T)}$ is invariant
under conjugation, transposition, and single-qubit Pauli rotations at
the input and output layers, \cref{pro:mixed_query_reduction} applies and
shows that it is robust under mixed queries with measurable error
at most $\eta$ in half trace distance, which by the conversion
\eqref{eq:appendix-error-conversion} is at most $2\eta$ in full
trace norm. Hence \cref{lem:composition} applies with mixed-query
error $\varepsilon_{1}=2\eta$ and weak-design relative error
$\delta$, and yields that $\cE_{\PM}^{(T)}*\nu_{n,\delta}$ is a
strong $2$-design with measurable error $\max\{2\eta,2\delta\}$ in
full trace norm.

We briefly record why the two sector errors combine as a
maximum, which is the content of the imported \cref{lem:composition}. The
orientations of $U=VW$ factor as
\begin{equation}
    U=VW,
    \qquad
    U^{T}=W^{T}V^{T},
    \qquad
    U^{*}=V^{*}W^{*},
    \qquad
    U^{\dagger}=W^{\dagger}V^{\dagger},
    \label{eq:orientation-factorization}
\end{equation}
so in every fixed two-query word, each oracle call to an
orientation of $U$ contains exactly one call to the same orientation
of $V$ and one of $W$. For a mixed-sign query, condition on $W=w$: the
fixed oriented gates are absorbed into the neighbouring comb
channels, so the conditional experiment is again an admissible comb
querying $V$ alone, and the mixed-query security of $\mu_{1}$ replaces
$V$ by a Haar unitary $H$ at cost $\varepsilon_{1}$. Since $Hw$ is
exactly Haar-distributed for every fixed $w$, no second error occurs.
For a same-sign word, condition on $V=v$ and absorb the fixed oriented gates into the neighboring comb channels; the same-sign branch of the imported composition argument \cref{lem:composition} (Lemma~19 of Ref.~\cite{FolkertsmaEtAl2026}) replaces $W$ by a Haar unitary $H$ at cost at most $2\delta$ in full trace norm, with the negative branch following from complex-conjugation invariance, and left Haar invariance makes $vH$ Haar-distributed for every fixed $v$, so no further error occurs.
One-query words are included by appending a second query to a maximally mixed register and discarding its output. Since each fixed word
requires replacing only one of the two factors, the errors combine as
$\max\{\varepsilon_{1},2\delta\}$.

For the final claim, set $\eta=\delta=\varepsilon_{2}/2$ and
$T=\lceil C_{\rm mix}\log(2n/\varepsilon_{2})\rceil$. Both parameters
are admissible as long as $c_{\rm sh}\leq c_{\rm mix}$
(so that $\eta=\varepsilon_{2}/2\geq e^{-c_{\rm sh}n}\geq
e^{-c_{\rm mix}n}$) and $n_{\rm sh}\geq n_{\rm mix}$ is chosen large
enough to include all size thresholds of
\cref{thm:shh-lowdepth}. Note that $\delta=\varepsilon_{2}/2>0$, as
required, and the error bound becomes
$\max\{\varepsilon_{2},\varepsilon_{2}\}=\varepsilon_{2}$. The total
depth is $T$ plus the depth of $\nu_{n,\varepsilon_{2}/2}$, at
most $\lceil C_{\rm mix}\log(2n/\varepsilon_{2})\rceil
+C_{w}\log(2n/\varepsilon_{2})$, and is therefore at most
$C_{\rm sh}\log(n/\varepsilon_{2})$ for a universal constant
$C_{\rm sh}$.
\end{proof}

\subsection{Strong gluing of the scrambled two-layer ensemble}
\label{app:iterated-gluing}

Recall from \cref{sec:assembling} and \cref{fig:strong-k} the scrambled two-layer ensemble of \cite{SchusterEtAlStrong2025}: the $n$ qubits are partitioned into $m$ patches of $\xi$ qubits (except of the last patch, which has $\xi + \zeta$ qubits), two brickwork layers of independent local strong $k$-designs act on overlapping unions of adjacent patches, and the whole is sandwiched between two independent global strong $2$-designs. We import the guarantee that this architecture produces a strong $k$-design on all $n$ qubits, which follows by iteratively gluing together the strong $k$-designs in the two-brickwork layer via the imported \cref{lem:gluing_strong_random_unitaries}:

\begin{lemma}[Gluing protocol for the scrambled two-layer ensemble \footnote{The version in the references assumes the condition $n$ being exactly divisible by $\xi$. We relax that assumption with $m = \lfloor n/\xi \rfloor$ patches, and absorbing the remainder $\zeta = n - m\xi$ into the last patch, which
then has size $\xi + \zeta < 2\xi$. The strong gluing \cref{lem:gluing_strong_random_unitaries} is still applied a total amount of $m$ times, with each step still having registers $a$, $b$, $c$ of size $\geq \xi$, so the final result is unaffected.}, adapted Theorem 5 of Ref.~\cite{SchusterEtAlStrong2025} and the remark following its proof in Appendix~D.2 of Ref.~\cite{SchusterEtAlStrong2025} ]
\label{lem:patch_protocol}
We divide our $n$ qubits into $m=\lfloor n/\xi \rfloor \geq 3$  patches, the first $m-1$ of size $\xi$, and the final one of size $\xi+\zeta$, such that $n=m\xi+\zeta$. Let $\varepsilon \leq 1$. We define the following protocol, which samples each step independently:
\begin{enumerate}
    \item We apply an $\varepsilon_2$-approximate strong $2$-design on all qubits,  with $\varepsilon_2 = \varepsilon^{8}/(B\, n^{8} k^{5})$.
    \item We apply independent $\varepsilon/(3n)$-approximate strong $k$-designs on the
          patches $(2i-1,2i)$, for every $1\leq i\leq\lfloor m/2 \rfloor$.
    \item We apply independent $\varepsilon/(3n)$-approximate strong $k$-designs on the
          patches $(2i,2i+1)$, for every $1\leq i\leq \lceil m/2 \rceil-1$.
    \item We apply an $\varepsilon_2$-approximate strong $2$-design on all qubits.
\end{enumerate} Where in steps 2 and 3 each pair of patches has size $2\xi$, except the pair containing the
final patch, which has size $2\xi + \zeta$.
There exists a universal constant $B\geq 1$ such that the ensemble resulting from this protocol forms a strong
$\varepsilon$-approximate unitary $k$-design when
\begin{equation}
    \xi
    \;\geq\;
    \frac{16}{3}
    \log\!\left(\frac{nk^{2}}{\varepsilon}\right)
    +
    \mathcal{O}(1).
    \label{eq:patch_size_ew}
\end{equation}

\end{lemma}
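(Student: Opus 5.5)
We would prove the statement by a hybrid argument that applies \cref{lem:gluing_strong_random_unitaries} once per gluing step, sweeping the patches from left to right. For $j=1,\dots,m-1$ let $\mathcal H_j$ be the ensemble obtained from the protocol by replacing all local $k$-design bricks supported inside patches $1,\dots,j+1$ by a single Haar-random unitary on those patches, keeping the remaining bricks and both shells. Then $\mathcal H_{m-1}$ is $D\,H\,C$ with $H$ Haar on all $n$ qubits, which is exactly Haar, since $H$ is left- and right-invariant and $C,D$ are independent of it. The first hybrid $\mathcal H_1$ is the brick on patches $(1,2)$ itself, at a cost of at most $\varepsilon/(3n)$ by the design property of that brick. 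It therefore suffices to bound the distance between consecutive hybrids for every fixed word $\bm s$ and comb $\mathcal W$, and to sum these bounds with the triangle inequality for the trace norm.

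For the step $\mathcal H_j\to\mathcal H_{j+1}$ we instantiate \cref{lem:gluing_strong_random_unitaries} with $a=$ patches $1,\dots,j$ and $b=$ patch $j+1$, so that $U_{ab}$ is the Haar unitary already glued (an exact strong $k$-design, $\varepsilon_{ab}=0$). We take $c=$ patch $j+2$ (which may include the remainder $\zeta$), so that $U_{bc}$ is the brick on $(j+1,j+2)$ with $\varepsilon_{bc}=\varepsilon/(3n)$, and $d=$ the rest. All of $a,b,c$ have size at least $\xi$, which is the only size requirement of the lemma, so the enlarged last patch of size $\xi+\zeta<2\xi$ causes no trouble. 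Depending on the parity of $j$, the new brick lies in the layer applied before or after the glued block. In the first case the product appears as $U_{bc}U_{ab}$ rather than $U_{ab}U_{bc}$, and we handle it by relabelling $a\leftrightarrow c$, or equivalently by the transpose symmetry of the strong setting: words are closed under $s\mapsto T$-conjugation and the strong designs involved are transpose-invariant. Each step then costs
\begin{equation*}
\frac{\varepsilon}{3n}+O\!\left(\frac{k^{2}}{2^{(3/16)\xi}}\right)+O\!\left(k^{5/8}\varepsilon_2^{1/8}\right).
\end{equation*}

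Plugging in the parameters gives the budget. With $\xi\ge\frac{16}{3}\log(nk^2/\varepsilon)+O(1)$ we get $k^2 2^{-(3/16)\xi}\le c_1\,\varepsilon/n$, and with $\varepsilon_2=\varepsilon^8/(Bn^8k^5)$ we get $k^{5/8}\varepsilon_2^{1/8}=\varepsilon/(B^{1/8}n)$. Choosing $B$ and the $O(1)$ additive constant in $\xi$ large enough makes each of the two $O(\cdot)$ terms at most $\varepsilon/(3n)$. Summing over at most $m\le n$ steps gives total error at most $\varepsilon$.

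The main obstacle is that \cref{lem:gluing_strong_random_unitaries} is stated with nothing acting on $d$ between the shells, whereas in our hybrids the not-yet-glued bricks on patches $j+3,\dots$ still sit inside the sandwich. These bricks cannot simply be absorbed into the comb, because they lie between $C$ and $D$. Our first attempt would be to check, in the proof of the gluing lemma in Refs.~\cite{SchusterEtAlStrong2025,FolkertsmaEtAl2026}, that the argument only uses the fact that the middle operator is a product $X_{abc}\otimes Y_d$ with $Y_d$ an arbitrary independent random unitary. The shell-based path-counting bound depends on $Y_d$ only through its support, so it holds uniformly after conditioning on $Y_d$. If that route fails, the fallback is to glue all $\lfloor m/2\rfloor$ disjoint triples of a round simultaneously, using a product version of the lemma. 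This trades the sequential sweep for $O(\log m)$ rounds but keeps the same error accounting.
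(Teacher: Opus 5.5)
Your overall route is the iterated-gluing argument that the paper imports from Ref.~\cite{SchusterEtAlStrong2025}. You sweep left to right, applying \cref{lem:gluing_strong_random_unitaries} once per new brick, with $a,b,c$ all of size at least $\xi$ so the enlarged last patch is harmless. Each step costs $O(\varepsilon/n)$, summed over at most $m\le n$ steps. This is what the paper's footnote summarizes (the lemma is applied $m$ times with registers of size $\geq\xi$), and your bookkeeping for $\xi$, $B$ and $\varepsilon_2$ is fine. The gap is the step you flag yourself: the proposal never actually makes the gluing lemma applicable to a hybrid transition. Your first route also rests on a false structural claim. At step $j$ the unglued brick on patches $(j+2,j+3)$ acts on $c$ (patch $j+2$) as well as on $d$. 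It does not commute with $U_{bc}$, the brick on $(j+1,j+2)$. So the middle operator is not of the form $X_{abc}\otimes Y_d$, and no property of the imported proof phrased through ``the support of $Y_d$'' could be invoked. Your fallback needs a multi-block gluing lemma that is not available.

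The missing idea is that the remaining bricks are absorbed into the shells rather than into the comb. In $\mathcal H_j$ each of the two layers is still a tensor product of blocks with disjoint supports. One of these blocks is either the glued Haar block $U_{ab}$ on patches $1,\dots,j+1$ or the new brick $U_{bc}$. Write $R_{\mathrm{early}}$ and $R_{\mathrm{late}}$ for the product of the other blocks in the earlier and later layer. Then the middle operator is exactly $R_{\mathrm{late}}\,P\,R_{\mathrm{early}}$ with $P\in\{U_{ab}U_{bc},\,U_{bc}U_{ab}\}$. Likewise $\mathcal H_{j+1}$ is the same expression with $P$ replaced by a Haar $U_{abc}$, since the blocks sharing a layer with $U_{abc}$ lie on patches $\geq j+3$. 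Now set $C':=R_{\mathrm{early}}C$ and $D':=DR_{\mathrm{late}}$. These are independent of each other and of $U_{ab},U_{bc}$. Each is still a strong $\varepsilon_2$-approximate $2$-design: conditioned on the bricks taking a fixed value $r$, every oriented query factors into a query to the shell and the fixed gate $r^{s}$, which the comb absorbs. The shell is then replaced by a Haar $H$ at cost $\varepsilon_2$, and $rH$ (or $Hr$) is Haar. This is the same conditioning argument used in the proof of \cref{thm:shell}. \Cref{lem:gluing_strong_random_unitaries} then applies verbatim with shells $C',D'$, relabelling $a\leftrightarrow c$ when $U_{bc}$ acts after $U_{ab}$; your parity sentence has these two cases swapped. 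Your error accounting then goes through unchanged.
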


The final result from strong gluing uses the full trace-norm measurable-error convention.

\subsection{Proof of \cref{thm:strong-k}}
\label{app:proof-strong-k}

\begin{proof}[Proof of \cref{thm:strong-k}]

We prove the theorem by proposing a realization of the four steps of the protocol in
\cref{lem:patch_protocol}, on a partition of the $n$ qubits into
$m=\lfloor n/\xi\rfloor\in\mathbb{N}$ patches (of size $\xi$ except from the last one of size $\xi+\zeta$). We first fix the universal constants
$\xi_{0}$, $A$ and $c_{0}$, and we write
\begin{equation}
    \varepsilon_2 \;:=\; \frac{\varepsilon^{8}}{B\, n^{8} k^{5}},
    \label{eq:shell_accuracy}
\end{equation}
for the strong 2-design shell accuracy demanded by steps 1 and 4 of \cref{lem:patch_protocol}, where
$B$ is the universal constant fixed there.

The depth of the local blocks of steps 2 and 3 will grow linearly with the patch size,
so $\xi$ is defined to sit just above the smallest value that \cref{lem:patch_protocol}
admits. Taking $\xi_{0}$ at least the universal constant hidden in the
$\mathcal{O}(1)$ term of \cref{eq:patch_size_ew} guarantees \cref{lem:patch_protocol}.
Because $\xi$ is minimal up to these constants, it is logarithmic in all parameters:
there is a universal constant $C_{\xi}$ such that
\begin{equation}
    \xi \leq C_{\xi}\log\frac{nk}{\varepsilon}.
    \label{eq:xi_upper}
\end{equation}
\Cref{eq:xi_upper} enters the proof twice: it lets the patches fit into the system, and
it converts the depth of the construction, linear in $\xi$, into the logarithmic depth
bound.

The condition $\log(nk/\varepsilon)\leq c_{0}n$ must guarantee that the system is large
enough for the construction, in three respects: (i) the patches must be small relative
to the system, $4\xi\leq n$ (hence $m\geq 4$, so in particular the requirement
$m \geq 3$ of \cref{lem:patch_protocol} holds); (ii) $n\geq n_{\mathrm{sh}}$, so that
\cref{thm:shell} applies to the shells of steps 1 and 4; and (iii)
$2e^{-c_{\mathrm{sh}}n}\leq \varepsilon_2$, so that the accuracy $\varepsilon_2$
demanded of the shells lies within the accuracy window of \cref{thm:shell}. All three
follow from \cref{eq:xi_upper} and the condition $\log(nk/\varepsilon)\leq c_0 n$ once
$c_{0}$ is a small enough universal constant. For (i), \cref{eq:xi_upper} gives
$4\xi\leq 4C_{\xi}c_{0}n\leq n$ once $c_{0}\leq 1/(4C_{\xi})$. For (ii), the chain
$\log(4n)\leq\log(nk/\varepsilon)\leq c_{0}n$ places $n$ above any prescribed absolute
threshold as $c_{0}$ shrinks. For (iii), by \cref{eq:shell_accuracy},
\begin{equation}
    \log_{2}\frac{2}{\varepsilon_2}
    \;=\; 1 + \log_{2}B + 8\log_{2}n + 5\log_{2}k + 8\log_{2}\frac{1}{\varepsilon}
    \;\leq\; C_{*}\log_{2}\frac{nk}{\varepsilon}
\end{equation}
for a universal constant $C_{*}$, so imposing $C_{*}c_{0}\leq c_{\mathrm{sh}}/\ln 2$
yields $2e^{-c_{\mathrm{sh}}n}\leq\varepsilon_2$.

We now construct an $\varepsilon$-measurable error $k$-design by making a realization
of each of the steps of \cref{lem:patch_protocol}.

\emph{Steps 1 and 4.} Each strong 2-design shell is an independent copy of the ensemble
$\mathcal{E}^{(T)}_{\mathrm{PM}}*\nu_{n,\delta}$ of \cref{thm:shell}, taken with
measurable error $\varepsilon_2$ in full trace-norm ($\eta=\delta=\varepsilon_2/2$) and
depth of the perfect-matching ensemble
$T=\lceil C_{\mathrm{mix}}\log(2n/\varepsilon_2)\rceil$. Since
$2e^{-c_{\mathrm{sh}}n}\leq \varepsilon_2\leq 1$, \cref{thm:shell} applies. Each of the
two is a strong $\varepsilon_2$-approximate unitary $2$-design in full trace-norm
measurable error, of depth at most $C_{\mathrm{sh}}\log(n/\varepsilon_2)$, exactly as
steps 1 and 4 of \cref{lem:patch_protocol} require.

\emph{Steps 2 and 3.} On each of the pairs of adjacent patches, of size $2\xi$ except the pair containing the last patch, of size $2\xi+\zeta\leq 3\xi$, we place an independent one-dimensional random circuit of
\cref{lem:1D_strong_unitary_designs} with target relative error $\varepsilon/(6n)$.
This target is admissible:
$\xi := \left\lceil \frac{16}{3}\log_2\left(\frac{Ank^2}{\varepsilon}\right)\right\rceil+\xi_0$
gives $2^{2\xi}\geq (Ank^{2}/\varepsilon)^{32/3}\geq Ank^{2}/\varepsilon$, so with
$A\geq 12$,
\begin{equation}
 \frac{2k^{2}}{2^{2\xi+\zeta}} \;\leq\;
    \frac{2k^{2}}{2^{2\xi}} \;\leq\; \frac{2k^{2}\varepsilon}{12\,nk^{2}} \;=\;
    \frac{\varepsilon}{6n}.
    \label{eq:local_admissible}
\end{equation}
By the translation between error notions, $\varepsilon_{m}\leq 2\varepsilon_{r}$
(Lemma~7 of Ref.~\cite{SchusterEtAlStrong2025}), each local factor is a strong
$\varepsilon/(3n)$-approximate unitary $k$-design in full trace-norm measurable error,
as steps 2 and 3 require.

The composed ensemble is therefore exactly the protocol of \cref{lem:patch_protocol},
which yields that the composed ensemble is a strong $\varepsilon$-approximate unitary
$k$-design in measurable error, in full trace-norm.

Finally, we bound the depth of the construction. By \cref{eq:shell_accuracy}, each of
the strong 2-design shells has depth at most
\begin{equation}
    C_{\mathrm{sh}}\log\frac{n}{\varepsilon_2}
    \;=\; C_{\mathrm{sh}}\log\frac{B\,n^{9}k^{5}}{\varepsilon^{8}}
    \;\leq\; C_{2}\log\frac{nk}{\varepsilon}
    \label{eq:shell_depth}
\end{equation}
for a universal constant $C_{2}$. Each of the 1D sandwiched brickworks is composed of
strong $k$-designs of at most $2\xi+\zeta\leq 3\xi$ qubits, running in parallel, with target
relative error $\varepsilon/(6n)$, so the depth bound of
\cref{lem:1D_strong_unitary_designs} becomes:
\begin{equation}
    d_{\mathrm{loc}} := C_{1D}\log^{7}(2k)\left((2\xi+\zeta)k+\log\frac{6n}{\varepsilon}\right)
    \leq C_{1D}\log^{7}(2k)\left(3k\xi+\log\frac{6n}{\varepsilon}\right)
    \leq C_{1D}'\,k\log^{7}(2k)\log\frac{nk}{\varepsilon}.
    \label{eq:dloc_new}
\end{equation}
Consequently, the total depth of the construction becomes:
\begin{equation}
    d \;\leq\; 2d_{\mathrm{loc}}+2C_{2}\log\frac{nk}{\varepsilon}
    \;\leq\; C\bigl(1+k\log^{7}(2k)\bigr)\log\frac{nk}{\varepsilon}
    \label{eq:depth_new}
\end{equation}
for a universal constant $C$. Every component acts only on the original $n$ qubits of
the system. For fixed $k\geq 2$ and fixed $\varepsilon < 1/4$, \cref{eq:depth_new} is
$O(\log n)$. The light-cone lower bound stated in \cref{pro:lower_bound} gives
$\Omega(\log n)$. Thus the optimal dependence on $n$ is $\Theta(\log n)$.

\end{proof}

\end{document}